\newtheorem{problem}{Problem}
\newcommand{\todo}[1]{{{#1}}}
\newcommand{\prog}{\mathcal{P}}
\newcommand{\dftrans}{\Phi}
\newcommand{\proj}{\small \mathsf{project}}
\newcommand{\select}{\small \mathsf{select}}
\newcommand{\map}{{\small \mathsf{map}}}
\newcommand{\filter}{{\small \mathsf{filter}}}
\newcommand{\zip}{\small \mathsf{zip}}
\newcommand{\ite}{\small \mathsf{ITE}}
\newcommand{\fold}{\small \mathsf{fold}}
\newcommand{\append}{\small \mathsf{append}}
\newcommand{\update}{\small \mathsf{update}}
\newcommand{\union}{\small \mathsf{union}}
\newcommand{\setinsert}{\small \mathsf{insert}}
\newcommand{\default}[1]{\Delta(#1)}
\newcommand{\convert}{\digamma}
\newcommand{\Synduce}{{\sc Synduce}\xspace}
\newcommand{\AutoLifter}{{\sc AutoLifter}\xspace}
\newcommand{\SuFu}{{\sc SuFu}\xspace}
\newcommand{\ParSynt}{{\sc ParSynt}\xspace}
\newcommand{\vbooltrue}{\mathsf{true}}
\newcommand{\vboolfalse}{\mathsf{false}}
\newcommand{\isf}{\Lambda}
\newcommand{\collectionjoin}{\boxtimes}
\newcommand{\sem}[1]{\llbracket#1\rrbracket}
\newcommand{\TInt}{\mathsf{Int}}
\newcommand{\TChar}{\mathsf{Char}}
\newcommand{\TFloat}{\mathsf{Float}}
\newcommand{\TMap}[2]{\mathsf{Map}\langle#1,#2\rangle}
\newcommand{\TList}[1]{\mathsf{List}\langle #1 \rangle}
\newcommand{\CITE}[3]{\mathsf{ITE}(#1,#2,#3)}
\newcommand{\Cfst}[1]{\mathsf{fst}(#1)}
\newcommand{\Csnd}[1]{\mathsf{snd}(#1)}
\newcommand{\CListGet}[2]{\mathsf{get}(#1,#2)}
\newcommand{\CListFoldl}[3]{\mathsf{foldl}(#1,#2,#3)}
\newcommand{\CListMap}[2]{\mathsf{map}(#1,#2)}
\newcommand{\CNull}{\mathtt{null}}
\crefname{problem}{problem}{problems}
\crefname{lemma}{lemma}{lemmas}
\crefname{claim}{claim}{claims}
\newcommand{\propagateparamto}{\twoheadrightarrow}
\newcommand{\vertdecompto}{\rightsquigarrow}
\newcommand{\horizdecompto}{\hookrightarrow}
\newcommand\normalizes[3]{
    \ensuremath{\left(#1, #2\right) \sim #3}
}
\newcommand{\normof}[3]{
    \mathsf{Norm}\left( #1, #2, #3 \right)
}
\newcommand{\bounded}[2]{
    \llparenthesis {#2} \rrparenthesis
}
\newcommand{\unbounded}[4]{
    {#1}\llbracket {#2}, {#3}, {#4} \rrbracket
}
\newcommand{\unboundedapp}[4]{
    {#1}_{#4}\llbracket {#2}, {#3} \rrbracket
}
\newcommand{\vdcomp}[2]{{#1}\funccomp{#2}}
\newcommand{\vdid}{\funcid}
\newcommand{\predtrue}{\top}
\newcommand{\vdpred}{\phi}
\newcommand{\typedestructor}{d}
\newcommand{\decomp}{\Omega}
\newcommand{\sidecond}[1]{{\color{blue}\left[\text{\small #1}\right]}}
\newcommand{\dtuple}{\decomp_T}
\newcommand{\dcollection}{\decomp_C}
\newcommand{\vdfunc}{\mathcal{T}}
\newcommand{\decomptoexpr}{\looparrowright}
\newcommand{\params}{\mathsf{BoundVars}}
\newcommand{\idiotbox}[2]{
\fbox{
\begin{minipage}{0.9\linewidth}
{\bf Result for {#1}:}
{#2}
\end{minipage}
}
}
\newcommand{\norefute}{\textsc{NoRefute}}
\newcommand{\noreduce}{\textsc{NoReduce}}
\newcommand{\nodecomp}{\textsc{NoDecomp}}
\newcommand{\nodeduce}{\textsc{NoDeduce}}
\newcommand{\dfconcat}{\boxplus}
\newcommand{\df}{\mathcal{D}}
\newcommand{\dftypee}[1]{\mathsf{DF}\langle #1 \rangle}
\newcommand{\dfcols}{\mathcal{C}}
\newcommand{\dfrows}{\mathcal{R}}
\newcommand{\dfvals}{\mathcal{V}}
\newcommand{\dftype}{\mathcal{T}}
\newcommand{\dfagg}{\mathcal{P}}
\newcommand{\dfmerge}{\oplus}
\newcommand{\udaf}{f}
\newcommand{\norm}{h}
\newcommand{\trans}{\Phi}
\newcommand{\agg}{\small \mathsf{aggregate}}
\newcommand{\dfbody}{\mathcal{F}}
\newcommand{\typeprod}{\type_p}
\newcommand{\typebase}{\type_b}
\newcommand{\typecollection}[1]{\type_c\langle{#1}\rangle}
\newcommand{\typeiter}{\textsf{Iter}}
\newcommand{\fv}{\textsf{fv}}
\newcommand{\abslambda}[2]{\lambda {#1}. ~{#2}}
\newcommand{\typeselector}{\sigma}
\newcommand{\type}{\tau}
\DeclareMathOperator{\concat}{+\!+}
\DeclareMathOperator{\fresh}{\textsf{freshVar}}
\DeclareMathOperator{\funcid}{\textsf{Id}}
\DeclareMathOperator{\funccomp}{\circ}
\DeclareMathOperator{\project}{\mathsf{project}}
\newcommand{\scalar}{\textit{Sc}}
\newcommand{\emptylist}{\texttt{[]}}
\newcommand{\emptymap}{\texttt{\{\}}}
\newcommand{\pequiv}[3]{%
    \ifinalign@
        #1 &\equiv \left(#2, #3\right)
    \else
        #1 \equiv \left(#2, #3\right)
    \fi
}
\newcommand{\requiv}[3]{%
    \ifinalign@
        {#1} &\equiv_{#2} {#3}
    \else
        {#1} \equiv_{#2} {#3}
    \fi
}
\newcommand{\irule}[2]%
   {\mkern-2mu\displaystyle\frac{#1}{\vphantom{,}#2}\mkern-2mu}
\newcommand{\irulelabel}[3]
{
\mkern-2mu
\begin{array}{ll}
\displaystyle\frac{#1}{\vphantom{,}#2} & \!\!\!\! #3
\end{array}
\mkern-2mu
}
\newcommand{\bfpara}[1]{\vspace{5pt}\noindent\emph{\textbf{#1}}}
\newcommand{\range}[1]{\textsf{range}(#1)}
\newcommand{\set}[1]{\{ #1 \}}
\newcommand{\init}{\mathcal{I}}
\newcommand{\rfs}{\Phi}
\newcommand{\hole}{\Box}
\newcommand{\pp}{\nolinebreak{\bf +}\nolinebreak\hspace{-.10em}{\bf +}}
\newcommand{\tool}{\textsc{Ink}\xspace}
\newcommand{\dsl}{\textsc{Inkling}\xspace}
\newcommand{\toolname}{\tool}
\newcommand{\parsynt}{\textsc{Parsynt}}
\newcommand{\autolifter}{\textsc{AutoLifter}}
\newcommand{\spark}{\textsc{Spark}}
\newcommand{\flink}{\textsc{Flink}}
\bfseries\color{green!40!black},
\lstdefinestyle{numbers}
{
  numbers=left,
  numberstyle=\sf,
  xleftmargin=15pt
}
\newcommand{\correctcode}[1]{{\color{ForestGreen} \textbf{#1}}}
\newcommand{\wrongcode}[1]{{\color{red} \textbf{#1}}}
\algnewcommand{\IIf}[1]{\State\algorithmicif\ #1\ \algorithmicthen}
\begin{document}
\title{Homomorphism Calculus for User-Defined Aggregations}
%
%
\author{Ziteng Wang}
\orcid{0009-0001-8487-8093}
\affiliation{%
  \institution{University of Texas at Austin}
  \city{Austin}
  \country{USA}
}
\email{ziteng@utexas.edu}

\author{Ruijie Fang}
\orcid{0000-0001-5348-5468}
\affiliation{%
  \institution{University of Texas at Austin}
  \city{Austin}
  \country{USA}
}
\email{ruijief@cs.utexas.edu}

\author{Linus Zheng}
\orcid{0009-0005-9832-4066}
\affiliation{%
  \institution{University of Texas at Austin}
  \city{Austin}
  \country{USA}
}
\email{linusjz@utexas.edu}

\author{Dixin Tang}
\orcid{0000-0002-3316-6651}
\affiliation{%
  \institution{University of Texas at Austin}
  \city{Austin}
  \country{USA}
}
\email{dixin@utexas.edu}

\author{Işıl Dillig}
\orcid{0000-0001-8006-1230}
\affiliation{%
  \institution{University of Texas at Austin}
  \city{Austin}
  \country{USA}
}
\email{isil@cs.utexas.edu}

\begin{CCSXML}
<ccs2012>
   <concept>
       <concept_id>10011007.10011074.10011092.10011782</concept_id>
       <concept_desc>Software and its engineering~Automatic programming</concept_desc>
       <concept_significance>500</concept_significance>
       </concept>
   <concept>
       <concept_id>10010147.10010169</concept_id>
       <concept_desc>Computing methodologies~Parallel computing methodologies</concept_desc>
       <concept_significance>500</concept_significance>
       </concept>
 </ccs2012>
\end{CCSXML}

\ccsdesc[500]{Software and its engineering~Automatic programming}
\ccsdesc[500]{Computing methodologies~Parallel computing methodologies}

\begin{abstract}
Data processing frameworks like Apache Spark and Flink  provide built-in support for  user-defined aggregation functions (UDAFs),  enabling the integration of  domain-specific logic. However, for these frameworks to support \emph{efficient} UDAF execution, the function needs to satisfy a \emph{homomorphism property},  which ensures that partial results from independent computations can be merged correctly. Motivated by this problem, this paper introduces a novel \emph{homomorphism calculus} that can both verify and refute whether a UDAF is a dataframe homomorphism. If so, our calculus also enables the construction of a corresponding merge operator which can be used for incremental computation and parallel execution.  We have implemented an algorithm based on our proposed calculus and  evaluate it on real-world UDAFs, demonstrating  that our approach significantly outperforms two leading synthesizers.
\end{abstract}
\maketitle              
\section{Introduction}\label{sec:intro}

\begin{wrapfigure}{r}{0.22\textwidth}
  \centering
  \vspace{-0.1in}
  \includegraphics[width=0.18\textwidth]{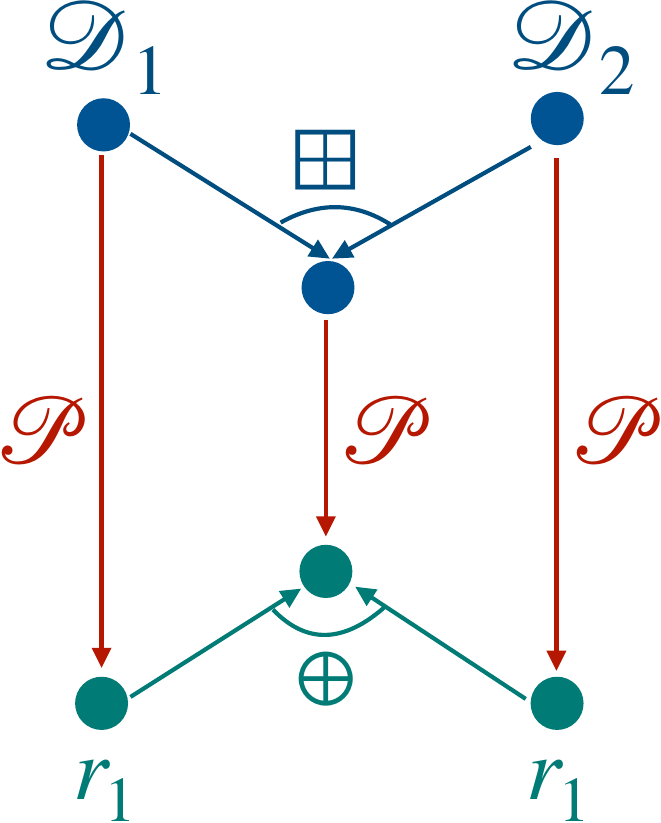}
  \vspace{-0.1in}
  \caption{\small Homomorphism}\label{fig:homo}
  \vspace{-0.1in}
\end{wrapfigure}

User-defined aggregation functions (UDAFs) are custom functions that allow users to perform complex aggregation operations. UDAFs play an important role in data science applications because they enable the implementation of domain-specific logic and complex calculations, such as custom statistical measures or weighted averages. Due to their growing importance, many frameworks such as Apache \spark{}~\cite{spark} and \flink{}~\cite{flink} provide extensive support for UDAFs, allowing users to perform complex aggregations over tabular data (henceforth referred to as \emph{dataframes}).

However, to execute UDAFs efficiently through parallel or incremental computation,  the function needs to be a \emph{homomorphism}. Intuitively, this means that the UDAF $\dfagg$ can be applied independently to two dataframes, \(\df_1\) and \(\df_2\), with the results subsequently combined using a binary \emph{merge operator}. Formally, as illustrated in \Cref{fig:homo},  $\dfagg$ is defined as a dataframe homomorphism if there exists a merge operator \(\dfmerge\) satisfying:
\[
\dfagg(\df_1 \dfconcat \df_2) = \dfagg(\df_1) \dfmerge \dfagg(\df_2),
\]
where $\dfconcat$ denotes dataframe concatenation.  If $\dfagg$ meets this criterion, $\dfmerge$ can be used to support both parallelization and incremental computation.

While prior work~\cite{parsynt_pldi17,
parsynt_pldi19,parsynt_pldi21}  has studied list homomorphisms for enabling parallel computation, existing techniques largely focus on computations over lists of scalars and use syntax-guided inductive synthesis to construct a suitable merge operator. However,  real-world UDAFs often involve complex inputs as well as intermediate data structures, such as maps and  nested collections, that are not straightforward to handle using prior techniques. 

Motivated by this shortcoming, this paper proposes a new \emph{homomorphism calculus} that can be used to verify whether or not a given function $\dfagg$ is a dataframe homomorphism.   If $\dfagg$ is proven to be a homomorphism,  our calculus also synthesizes the merge operator $\dfmerge$, thereby providing a constructive proof that can be leveraged {by the query optimizer 
of the underlying data processing framework.}
Notably, our approach to merge operator construction is largely based on \emph{deductive reasoning} and only resorts to search-based inductive synthesis for simple problems involving scalar values rather than complex data types.

The central idea of our homomorphism calculus is that a dataframe aggregation qualifies as a homomorphism if and only if its {\it accumulator function}---the component responsible for 
processing individual rows---satisfies a specific commutativity condition. We formalize this condition using right and left actions on a set, introducing a \emph{normalizer} function that underpins our calculus. Specifically, we demonstrate that a program is a homomorphism if and only if an appropriate normalizer exists for the accumulator function. This insight transforms the problem of synthesizing a merge operator for the entire aggregation into the simpler task of synthesizing a normalizer for the accumulator.




While reducing the synthesis problem from merge operators to normalizers makes it more manageable, constructing normalizers can still be difficult when the accumulator maintains complex internal state. 
Our calculus addresses this complexity through \emph{type-directed decomposition}.
For example, consider an aggregation operation with internal state of type $\TList{\tau}$. A na\"ive approach would require synthesizing a function that operates on two inputs of type $\TList{\tau}$, which becomes increasingly difficult as the complexity of the type parameter $\tau$ grows. To manage this complexity, our calculus reduces the synthesis problem for lists of type $\TList{\tau}$ to synthesis problems involving the element type $\tau$, continuing this decomposition until no further simplification is possible. 

\begin{figure}[t]
\includegraphics[scale=0.37]{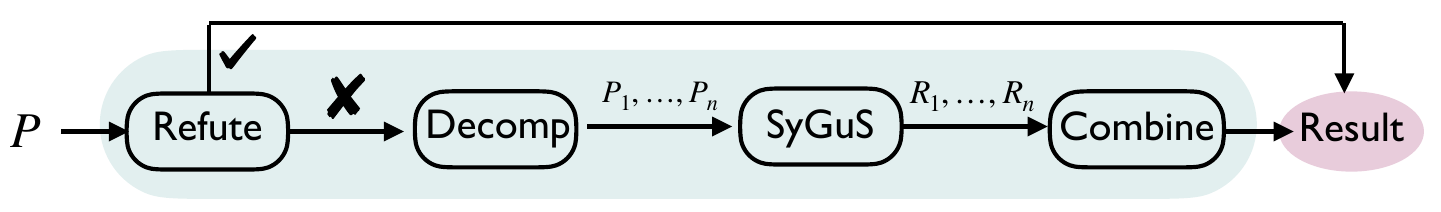}
\vspace{-0.1in}
\caption{Overview of our approach}\label{fig:overview}
\vspace{-0.2in}
\end{figure}


Another key aspect of our calculus is its ability to refute the homomorphism property. 
Since attempting to construct a normalizer when none exists can be highly inefficient,  refutation rules in our calculus help prevent futile synthesis efforts.
\Cref{fig:overview} gives an overview of our  verification algorithm that is based on the proposed homomorphism  calculus. Our method first attempts to refute the existence of a merge operator, and, if refutation fails, it decomposes the synthesis problem into simpler sub-problems. When further decomposition is not possible,  it resorts to syntax-guided inductive synthesis (SyGuS) for the leaf-level problems. Finally, it combines the solutions of these sub-problems using deductive synthesis. Assuming the existence of an oracle for solving these leaf-level SyGuS problems, the resulting procedure is both sound and complete.

We have implemented the proposed algorithm in a tool called \toolname{} and evaluated it on 50 real-world UDAFs targeting  Apache \spark{} and \flink{}. We compare our approach against two baselines, a state-of-the-art SyGuS solver, CVC5, and a synthesizer, \ParSynt, for divide-and-conquer parallelism. Our experimental results show that \toolname{} significantly outperforms these baselines in both synthesis and refutation tasks. Additionally, ablation studies demonstrate the impact of the core ideas underlying our approach.

To summarize, this paper makes the following key contributions:

\begin{itemize}[leftmargin=*]
\item  We present a  \emph{homomorphism calculus} for proving and refuting  homomorphisms and constructing a merge operator that enables parallel and incremental computation.
\item We prove that a program is a homomorphism if and only if its accumulator function satisfies a  generalized commutativity condition. We formalize this concept via \emph{normalizers} and show how  to simplify the problem using an alternative  specification. 
\item We show how to effectively tackle accumulators with complex internal state through a novel type-directed decomposition technique. 
\item We implement a  verification and synthesis procedure based on  the proposed homomorphism calculus in  a new  tool called \toolname{} and demonstrate its   effectiveness on real-world UDAFs.
\end{itemize}

\section{Overview}
\label{sec:overview}
\begin{figure}
\small
\centering
\begin{minted}
[
fontsize=\footnotesize,
escapeinside=||,
numbersep=5pt,
]
{scala}
case class BidData(bidPrice: Float, item: Int)
case class BidAggBuffer(maxBid: Float, highBidCount: Int, itemBidCounts: Map[Int, Int])

object BidAggregator extends Aggregator[BidData, BidAggBuffer, BidAggBuffer] {
  def zero: BidAggBuffer = BidAggBuffer(Float.MinValue, 0, Map.empty)

  def reduce(buffer: BidAggBuffer, data: BidData): BidAggBuffer = {
    val newMaxBid = math.max(buffer.maxBid, data.bidPrice)
    val newHighBidCount = if (data.bidPrice > 1000) buffer.highBidCount + 1 else buffer.highBidCount

    val itemBidCountMap = buffer.itemBidCountMap
    val newItemBidCounts = itemBidCountMap + (item -> (itemBidCountMap.getOrElse(item, 0) + 1))

    BidAggBuffer(newMaxBid, newHighBidCount, newItemBidCounts)
  } }

val result = bidsDF.filter(year(col("AuctionDate")) === 2024)
                   .select("BidPrice", "Item").as[BidData]
                   .agg(new BidAggregator()($"BidPrice", $"Item").as("aggregated_result"))
\end{minted} 
\vspace*{-0.1in}

\caption{A Scala \spark{} program used to compute auction information illustrating our motivating example. \todo{Here, \T{BidAggregator} is the UDAF, and the implementation of \T{reduce} is the corresponding accumulator function.}} 
\label{fig:motivating-example}
\end{figure}

In this section, we outline our approach through an example illustrated in \Cref{fig:motivating-example}. This example features an Apache \spark{} program written in Scala that involves a custom user-defined aggregate function (UDAF). This program is designed to process bid data in a dataframe that contains three columns: \texttt{BidPrice}, \texttt{AuctionYear}, and \texttt{Item}. 
The result of the program is a tuple containing (1) the highest bid in 2024, (2) the number of bids above 1000 dollars in the same year, and (3) bid counts in 2024 for each item. 
\Cref{fig:io} displays a sample input-output pair for this program.

To understand what this program does, consider \spark{}'s aggregation mechanism, which involves two core phases: initialization and update. During the initialization phase, an aggregation buffer is set up with default starting values, such as zeros or empty collections. In the update phase, each row in the dataframe is processed sequentially,
with the buffer  modified to accumulate results.
In our running example, the buffer is initialized to \Tnormal{zero} in Figure~\ref{fig:motivating-example}, and the update logic is defined by the \Tnormal{reduce} function, which (a) updates the maximum bid
 if the current row’s bid price is higher, 
(b) increments the count of high bids  if the bid exceeds 1000, and (c) updates the item bid map  reflect the number of bids per item.

\begin{figure}
\centering
\small
\noindent\begin{minipage}[t]{0.43\textwidth}
\begin{tabular}{|c|c|c|}
    \hline 
    \textbf{BidPrice} : $\TFloat$ & \textbf{AuctionYear} : $\TInt$ & \textbf{Item} : $\TInt$ \\ 
    \hline 
    330.94 & 2024 & 3 \\ 
    1192.08 & 2024 & 2 \\
    161.11 & 2019 & 9 \\
    ... & ... & ... \\
    \hline
\end{tabular}
\end{minipage}%
\hfill%
\begin{minipage}[t]{0.43\textwidth}
\vspace*{-0.3in}
   \begin{equation*}
   \left(\underset{\text{Highest bid}}{\boxed{1192.08}}, 
   \overset{\text{\#bid price > 1000}}{\boxed{9}}, \underset{\text{\# of bids per item}}{\boxed{\{2 \mapsto 5, 3 \mapsto 2, \ldots\}}}\right)
   \end{equation*}
\end{minipage} 
\vspace{-0.05in}
\caption{Sample input (left) and output (right) of the \spark{} program. }\label{fig:io}
\end{figure}

\begin{figure}
\small
\centering
\begin{Scala}
  def merge(buffer1: BidAggBuffer, buffer2: BidAggBuffer): BidAggBuffer = {
    val mergedMaxBid = math.max(buffer1.maxBid, buffer2.maxBid)
    val mergedHighBidCount = buffer1.highBidCount + buffer2.highBidCount

    val map1 = buffer1.itemBidCounts
    val map2 = buffer2.itemBidCounts
    val mergedMap = map1 ++ map2.map { case (k, v) => k -> (v + map1.getOrElse(k, 0)) }

    BidAggBuffer(mergedMaxBid, mergedHighBidCount, mergedMap)
  }
\end{Scala}
\vspace*{-0.2in}
\caption{The merge function for the aggregation in \Cref{fig:motivating-example}.}
\vspace*{-0.2in}
\label{fig:motivating-example-merge}
\end{figure}

To support distributed and incremental processing, \spark{} must be able to combine intermediate results from different slices of the dataframe, but this can be done correctly only if the overall program defines a homomorphism. 
Going back to our example, this computation is indeed a homomorphism, so the user can take advantage of \spark{}'s distributed processing capabilities by implementing a suitable merge function, such as the one shown in \Cref{fig:motivating-example-merge}. This \texttt{merge} function combines two aggregation buffers, \texttt{buffer1} and \texttt{buffer2}, into a single result 
by updating the maximum bid price, aggregating the count of high bids over 1000, and merging per-item bid counts.
In the remainder of this section, we outline the key aspects of our approach that allow us to synthesize the \texttt{merge} function from \Cref{fig:motivating-example-merge}.

\vspace*{0.1in}
\begin{mdframed}
    \textbf{Idea 1:} An aggregation program defines a homomorphism if and only if its accumulator function has a so-called \emph{normalizer}. If so, the normalizer of the accumulator  corresponds to the desired merge function for the whole aggregation.

\end{mdframed}
\vspace*{0.1in}


\begin{wrapfigure}{r}
{0.35\textwidth}
    \centering
 \vspace{-0.2in}   \includegraphics[width=0.9\linewidth]{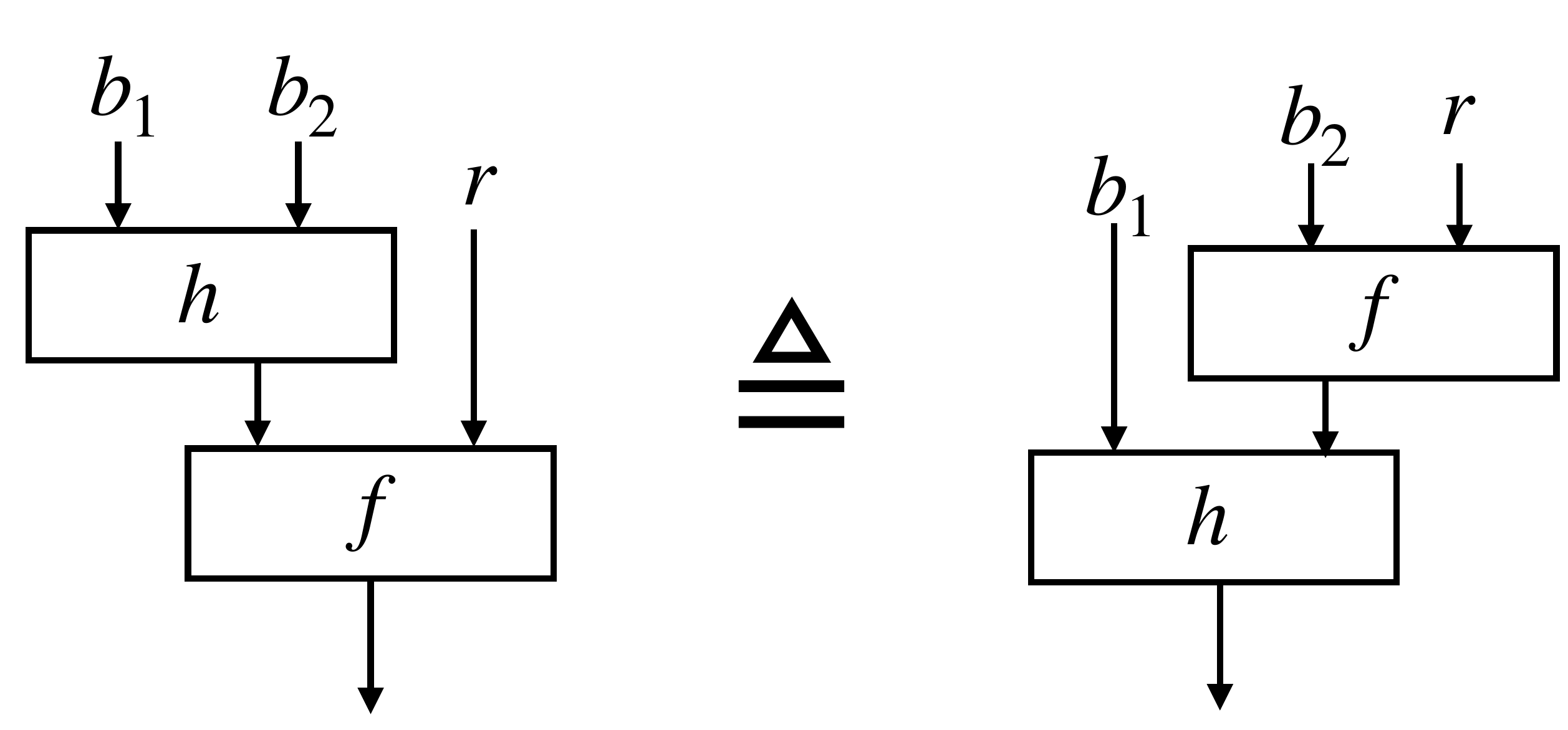}
 \vspace{-0.15in}
    \caption{Commutativity condition.}
    \label{fig:commutativity-motivation}
    \vspace*{-0.15in}
\end{wrapfigure}

The aggregation program in \Cref{fig:motivating-example} operates over the entire dataframe, whereas the accumulator function, represented by the \texttt{reduce} method, processes a single row at a time, making it easier to analyze. Fortunately, we can determine whether an aggregation defines a homomorphism by focusing solely on the accumulator and checking whether it admits a \emph{normalizer}. Intuitively, a normalizer $h$ for a function $f$ must satisfy a condition we refer to as \emph{generalized commutativity}, depicted in \Cref{fig:commutativity-motivation}, which corresponds to the following algebraic law:
\[
\forall b_1, b_2, r. ~~ f(h(b_1, b_2), r) = h(b_1, f(b_2, r))
\]
Here, $b_1$ and $b_2$ represent partial aggregation results, and $r$ denotes a new row in the dataframe. This law ensures that merging $b_1$ and $b_2$ and then applying \texttt{reduce} to the result is equivalent to first reducing $b_2$ with $r$ and then merging with $b_1$. We call this property \emph{generalized commutativity} because it formalizes how two \emph{distinct} functions—namely, the accumulator $f$ and the normalizer $h$ can commute. This differs from standard commutativity (which applies to a single binary operator) and associativity (which involves regrouping operands).

\vspace*{0.1in}
\begin{mdframed}
    \textbf{Idea 2:} We can greatly simplify the normalizer synthesis problem through decomposition.

\end{mdframed}
\vspace*{0.1in}

Our formulation so far simplifies the original problem in that the specification does not involve the entire dataframe. One obvious way to solve the resulting synthesis problem is to use syntax-guided synthesis (SyGuS) ~\cite{sygus} by  providing a suitable DSL in which the merge function can be expressed. However, it turns out that, for many real-world examples, directly synthesizing the merge function is quite challenging using existing SyGuS solvers. For instance,  the \texttt{merge} function shown in Figure~\ref{fig:motivating-example-merge} needs to correctly compute three different results {\Tnormal{maxBid, highBidCount, itemBidCounts}},
where  \Tnormal{itemBidCounts} is a mapping from integers to integers.  Thus, the merge operator needs to  iterate through the key-value pairs in \Tnormal{map2} and correctly update \Tnormal{map1} by summing counts for each key. This step involves both accessing and modifying an arbitrary number of entries, making it significantly more complex than merging simple numeric values.

\begin{table}[t]
\caption{Original aggregation expressions and their merge expressions.} \label{fig:motivating-decomp}
\vspace{-0.10in}
\begin{tabular}{|l|l|}
\hline

\textbf{Original expression} & \textbf{Merge expression} \\ \hline

\T{|$e_1 =$| math.max(buffer.maxBid, data.bidPrice)} &
{\footnotesize $h_1=$} \T{(s1, s2) => math.max(s1, s2)} \\ \hline

\makecell[l]{
{\footnotesize $e_2 =$} \T{if (data.bidPrice > 1000) buffer.highBidCount + 1} \\
\phantom{{\footnotesize $e_2 =$}} \T{else buffer.highBidCount}} &
{\footnotesize $h_2=$} \T{(s1, s2) => s1 + s2} \\ \hline

\T{|$e_3 =$| itemBidCountMap.getOrElse(item, 0) + 1} &
{\footnotesize $h_3=$} \T{(v1, v2) => v1 + v2} \\ \hline

\T{|$e_4 =$| itemBidCountMap + (item -> |$e_3$|)} &
\makecell[l]{
{\footnotesize $h_4 =$}           \T{(m1, m2) => m1 ++ m2.map (} \\
\phantom{{\footnotesize $f_4 =$}} \T{case (k, v) => } \\
\phantom{{\footnotesize $f_4 =$}} \T{  k -> |$h_3$|(v, m1.getOrElse(k, 0)))}
} \\ \hline

{\footnotesize \begin{tabular}{@{}r@{ }l@{}}
     $\udaf=$ & \T{(buffer, data) => (|$e_1$|, |$e_2$|, |$e_4$|)} \\
     $\dfagg=$ & \T{bidsDF} \\
     \phantom{$\dfagg=$} & \ \ \T{.filter(year(col("AuctionYear")) == 2024)} \\
     \phantom{$\dfagg=$} & \ \ \T{.select("BidPrice", "Item").as[BidData]} \\
     \phantom{$\dfagg=$} & \ \ \T{.aggregate(|$\udaf$|, initializer)} \\
\end{tabular}} &
\makecell[l]{
\T{((a1, a2, a3), (b1, b2, b3)) => (} \\
\hspace{0.7cm} \T{|$h_1$|(a1, b1)}, \\
\hspace{0.7cm} \T{|$h_2$|(a2, b2)}, \\
\hspace{0.7cm} \T{|$h_4$|(a3, b3))}
} \\ \hline

\end{tabular}
\end{table}

Our  approach  further simplifies the  synthesis problem through type-directed decomposition. In particular, rather than trying to synthesize the entire merge operator, we realize that each element in the output tuple can be synthesized independently, as shown in \Cref{fig:motivating-decomp}. In particular, each component of the original aggregation can be translated into a corresponding merge expression: the maximum bid is handled by $h_1$, the high-bid count by $h_2$, and the item bid count map by $h_4$. Furthermore, our approach further simplifies the normalizer synthesis problem for the item bid count map by synthesizing a normalizer $h_3$ for each individual entry in the map. In practice, such decomposition turns out to be crucial for handling real-world aggregations.

\vspace*{0.1in}
\begin{mdframed}
    \textbf{Idea 3:} We can combine inductive and deductive synthesis to make the solution more effective.
\end{mdframed}
\vspace*{0.1in}
As shown in \Cref{fig:motivating-decomp}, the synthesis problems for  $h_1, h_2, h_3$ are quite simple and involve only scalar operations. We refer to these as \emph{leaf-level synthesis problems} and  use standard  synthesis techniques based on  SyGuS  to solve them. However, our approach uses \emph{deductive reasoning} to both decompose the problem into independent subproblems and to combine their results. For instance, consider the normalizer $h_4$ from \Cref{fig:motivating-decomp}, which uses $h_3$ as a subexpression. Here, $h_3$ is synthesized using SyGuS, but, given a solution for $h_3$, our method can construct $h_4$ from $h_3$ using deductive synthesis, which obviates the need for searching over a large space of programs and crucially avoids unnecessary invocations of SyGuS for complex data structures such as maps. 

\vspace*{0.1in}
\begin{mdframed}
    \textbf{Idea 4:} We can refute the existence of merge operators without attempting synthesis.
\end{mdframed}
\vspace*{0.1in}

While this example admits a suitable merge operator, some programs are not homomorphisms and thus inherently lack a corresponding merge function. Instead of wasting  resources trying to synthesize a non-existent merge operator, our approach leverages  proof rules in the calculus to identify when a merge function cannot exist. Specifically, we establish criteria that detect that the accumulator’s behavior is incompatible with the existence of a merge function, which allows us to identify cases where the commutativity and identity conditions cannot be simultaneously satisfied.

\section{Problem Statement}\label{sec:prob_stmt}

In this paper, we consider a family of programs that perform aggregation over \emph{dataframes} through user-defined functions. 
In the rest of this section, we first define \emph{dataframes}, then introduce a domain-specific language (DSL) used in our formalization, and finally state our problem definition.

\begin{definition}\label{def:df}{\bf (Dataframe)}
A \emph{dataframe} $\df$ is a quadruple $(\dfcols, \dftype, \dfrows, \dfvals)$  where $\dfcols$ is a sequence  of column labels, $\dftype$ is a mapping from each $c_j \in \dfcols$ to its corresponding type $\tau_j$,   $\dfrows = [r_1, \ldots, r_n]$ is a list of rows, and $\dfvals: \dfrows \times \dfcols \rightarrow \tau$ is a mapping from each entry $(r_i, c_j)$ to a value $v \in \dftype(c_j)$. Given a dataframe $\df$, the \emph{type} of the dataframe is  $\mathsf{DF}\langle \tau_1, \ldots, \tau_n \rangle$ where $\tau_i = \dftype(c_i)$.
\end{definition}

\begin{figure}[t]
\[\arraycolsep=2pt\def\arraystretch{1.2}
\begin{array}{rlcl}
\textbf{Program} & \prog &::= &\abslambda{x: \type}{\agg(\udaf, \init, \dftrans)}  \\

\textbf{DataFrame} & \dftrans &::= &x ~|~ \proj(\udaf, \dftrans) ~|~\select(\udaf, \dftrans) \\

\textbf{Function} & \udaf &::= &\abslambda{x: \type}{\udaf} ~|~\abslambda{x: \type}{E} ~|~ g \\

\textbf{Expression} & E &::= &c ~|~ \default{\tau} ~|~  \ \udaf(E, \dots, E) ~|~ \ite(E, E, E) ~|~ \fold(\udaf, E, C) ~|~ (E, \dots, E) ~|~\typeselector_i(E) ~|~ C   \\

\textbf{Collection Expr} & C &::= &M ~|~ L ~|~ S ~|~ \convert_\type{(C)} ~|~ \map(\udaf, C) ~|~\filter(\udaf, C)  ~|~ \zip(C, C)  \\

\textbf{Map Expr} & M &::= &\emptymap ~|~ \update(M, E, E) ~|~M \collectionjoin M \\
\textbf{List Expr} & L &::= &\emptylist ~|~ \append(L, E) \\

\textbf{Set Expr} & S &::= &\emptymap ~|~ \union(S, S) ~|~ \setinsert(S, E)  \\
\end{array}
\]
\[
\begin{array}{c}
c \in \textbf{Constants} \quad x \in \textbf{Variables} \quad g \in \textbf{Built-in Functions} \\
\end{array}
\]

\caption{DSL syntax. The $\update$ function updates keys or adds new (key, value) pairs. $\typeselector_i$ returns the i'th tuple element, and $\default{\type}$ gives a default expression of type $\type$ (e.g., $0$ for Int). The $\collectionjoin$ operator performs an outer join of two maps $M_1$ and $M_2$, producing $M : \TMap{\tau_1}{\tau_2 \times \tau_2}$, where (1) $M(k) = (M_1(k), M_2(k))$ if $k$ is in both, (2) $M(k) = (\mathsf{null}, M_2(k))$ if only in $M_2$, and (3) $M(k) = (M_1(k), \mathsf{null})$ if only in $M_1$. $\convert_\tau(C)$ converts $C$ to type $\tau$: e.g., lists are converted to maps by using their indices as keys, and sets are converted to maps using each element as a key with a \textsf{null} value.}
        \label{fig:dsl-syntax}
\end{figure}

\Cref{fig:dsl-syntax} shows the syntax  of a DSL designed to express programs that perform aggregation over dataframes. At a high level, this DSL supports SQL-like queries incorporating user-defined functions (UDFs).
As shown in \Cref{fig:dsl-syntax}, the top-level program returns the result of an aggregation applied to the result of a \emph{data transformation program} $\trans$. A data transformation program transforms the input dataframe to a new dataframe using  standard relational operators such as {\tt select} and {\tt project}\footnote{Our implementation also supports {\tt groupBy}; however, we omit it here to simplify presentation.}, but these operators can also involve user-defined functions. The top-level program  $\lambda x. \agg(\udaf, \init, \dftrans)$ first  computes $\dftrans(x)$ to obtain a dataframe $\df$ and then applies the \emph{accumulator function} $\udaf: \tau_r \times \tau \rightarrow \tau_r$ to $\df$, using $\init$ as the initial value. 
Functions in this DSL are functional programs that  support the creation of complex data structures like maps, lists, sets (and nested combinations thereof) via higher-order operations like {\tt map} and {\tt fold}. Such user-defined functions are particularly useful in scenarios where data needs to be summarized into a structured form for further downstream analysis or processing.



\begin{example}

Consider a program that takes as input  a dataframe $\df : \dftypee{\TInt \times \TInt \times \TInt}$ and performs a frequency count of the second column of the input dataframe and stores them in a map of type $\TMap{\TInt}{\TInt}$. 
We can implement this program via our DSL as
\[
\mathcal{P} = \abslambda{t:\dftypee{\TInt \times \TInt \times \TInt}}{\agg(\udaf, \emptymap,\mathtt{project}(\sigma_2, t))}, \quad \mathrm{where}
\]
\[
f = \abslambda{s:\TMap{\TInt}{\TInt}}{\abslambda{x : \TInt}{\mathsf{ITE}(\mathsf{contains}(s, x), \mathsf{update}(s, x, \mathsf{get}(s, x) + 1), \mathsf{update}(s, x, 1))}}.
\]
Here, $\mathsf{contains}(s, x)$ and $\mathsf{get}(s, x)$ are built-in primitives for querying whether key $x$ is in map $s$ and retrieving the value of key $x$ in map $s$, respectively.
\end{example}

The problem that we address in this paper is to determine whether a program in this DSL corresponds to a \emph{dataframe homomorphism}. 
To precisely define our problem, we first introduce a concatenation operation $\dfconcat$ on dataframes as follows:

\begin{definition}\label{def:df-concat}{\textbf{(Dataframe concatenation)}} Let $\df_1 = (\dfcols, \mathcal{T}, \dfrows_1, \dfvals_1), \df_2 = (\dfcols, \mathcal{T}, \dfrows_2, \dfvals_2)$ be two dataframes. Then, $\df_1 \dfconcat \df_2 $ is defined as $(\dfcols, \mathcal{T}, \dfrows_1 \cup \dfrows_2, \dfvals)$ where:
\[ \dfvals(r_i, c_j) = \dfvals_1(r_i, c_j) \text{ if } i \leq |R_1|, \text{ else } \dfvals_2(r_{i - |R_1|}, c_j)\]

\end{definition}

For the purposes of this paper,  a \emph{dataframe aggregation} is any program that belongs to the DSL from \Cref{fig:dsl-syntax}. Using this terminology, we define \emph{dataframe homomorphism} as follows:

\begin{definition}[\textbf{Dataframe homomorphism}]
A program $\dfagg$ is a  homomorphism iff there \emph{exists} a function $\dfmerge: \tau \times \tau \rightarrow \tau$ such that, for any   dataframes $\df_1, \df_2$ on which $\dfconcat$ is defined, we have:
\begin{equation}\label{eq:df-homo}
\dfagg(\df_1 \dfconcat \df_2) = \dfagg(\df_1) \dfmerge \dfagg(\df_2)
\end{equation}
\end{definition}

Intuitively, if an aggregation $\dfagg$ is a homomorphism, we can partition a dataframe $\df$ into multiple dataframes $\df_1, \ldots, \df_n$, apply the aggregator $\dfagg$ to each $\df_i$ and then merge the results using the $\dfmerge$ operator.
In the rest of this paper, we refer to the binary operator $\dfmerge$ as the \emph{merge} function for the aggregator. 
Note that our definition does not require the merge function to be commutative. This design choice is deliberate, as it allows our framework to support applications like incremental computation, where partial results are naturally merged in a fixed, non-commutative order (e.g., merging an existing result with a result from new data).

 We conclude this section by defining the \emph{homomorphism verification} problem:

\begin{definition}[\textbf{Homomorphism verification problem}]
The homomorphism verification problem is to determine whether a program $\dfagg$ is a dataframe homomorphism, and, if so, construct a merge function $\dfmerge$ that satisfies \Cref{eq:df-homo}.
\end{definition}

\section{Homomorphism Calculus}
\label{sec:problem-statement}
This section presents a set of proof rules for reasoning about dataframe homomorphisms. Central to this calculus is the concept of a \emph{normalizer}, which serves as a bridge between the desired merge operator and the accumulator function  used inside the aggregation. 

\subsection{Foundation of the Calculus: Normalizers}
\label{ssec:normalizers}

As mentioned in Section~\ref{sec:intro},  synthesizing a merge operator for a dataframe  aggregation $\dfagg$ is challenging because it requires reasoning about  the behavior of $\dfagg$ on the \emph{entire} dataframe, which contains an unbounded number of rows. On the other hand, reasoning about the accumulator function $f$ is generally easier because it operates over a single row of the dataframe. In this section, we introduce the concept of normalizer  in order to bridge this complexity gap.  To formalize this concept, we first introduce a generalized notion of commutativity between actions on a set:

\begin{definition}[\textbf{Actions}]
Let $X, Y$ be two sets. 
A right action $\alpha_r$ of $X$ on a set $Y$ is a  function of type $Y \times X \rightarrow Y$, and a left action $\alpha_l$ of $X$ on a set $Y$ has signature $X \times Y \rightarrow Y$.
\end{definition}

\begin{wrapfigure}{h}{0.25\linewidth}
\vspace{-0.3in}
\begin{center}
\includegraphics[scale=0.3]{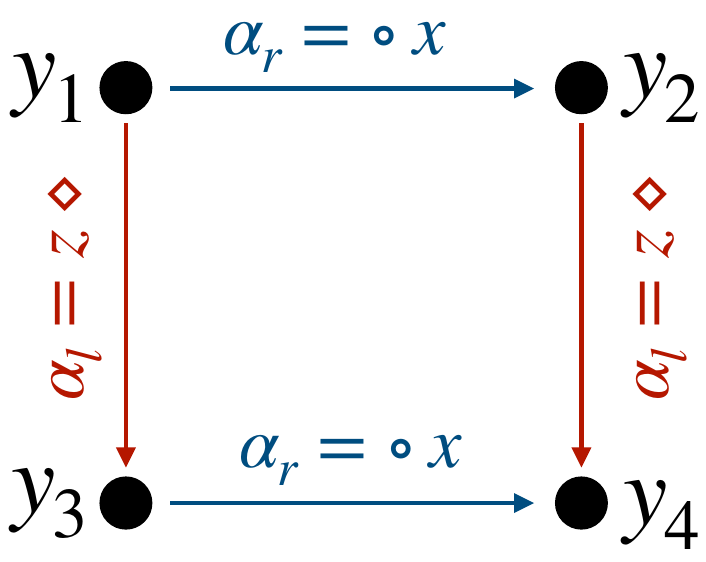}
\end{center}
\vspace{-0.2in}
\caption{Commutativity.}
\label{fig:commutativity}
\vspace{-0.3in}
\end{wrapfigure}

Intuitively, a right  action of $X$ ``hits'' elements of set $Y$ from the right  to produce another element of $Y$;  a left action does the same but from the left.  To relate this concept to our setting, consider an accumulator $\udaf$ of type $\type_r \times \type \rightarrow \type_r$, where $\type$ is the type of a single row of the dataframe and $\type_r$ is the type of the internal state of the accumulator. 
In our context, we can view $\udaf$ as a right action  of $\type$ on $\type_r$.


\begin{definition}[\textbf{Commutativity of  actions}]\label{def:commute} Let $\alpha_r$ be a right action of a set $X$ on a set $Y$, and let $\alpha_l$ be a left action of a set $Z$ on $Y$. Actions $\alpha_r, \alpha_l$ \emph{commute} iff:
\[
\forall x\in X, \forall y \in Y, \forall z \in Z. \ \  \alpha_l(z, \alpha_r(y, x))) = \alpha_r(\alpha_l(z, y), x) 
\]
\end{definition}
In other words, a right and left action on a set $Y$ commute with each other if the order in which we apply them does not matter. This is illustrated schematically in \Cref{fig:commutativity}. 

\begin{example}

Consider a function $ f = \abslambda{s}{\abslambda{x}{s + \mathsf{len}(x)}}  $ of type $ \TInt \to \TList{\TInt} \to \TInt$, which is a right-action of $\TList{\TInt}$ on $\TInt$.
Also, let $ g = \abslambda{x}{\abslambda{y}{\typeselector_1(x) + y}}$ be a function of type $ (\TInt \times \TInt) \to \TInt \to \TInt$, which is a left-action of $(\TInt \times \TInt)$ on $\TInt$.
These two functions commute because
 $g(z, f(y, x))  = \sigma_1(z) + y + \mathsf{len}(x)= f(g(z, y), x)$.
On the other hand, let $f' = \abslambda{s}{\abslambda{x}{s \cdot \mathsf{len}(x)}}$ be another right action of $\TList{\TInt}$ on $\TInt$. In this case, $g(z, f'(y,x)) = \sigma_1(z) + (y \cdot \mathsf{len}(x))$ whereas $f'(g(z, y), x) = (\sigma_1(z)+y)\cdot \mathsf{len}(x)$. Thus, $f'$ and $g$ do not commute.
\end{example}


Next, we define the concept of \emph{normalizer}\footnote{Our  use of the term \emph{normalizer} differs from its use in group theory, although it bears resemblances in some respects.} that plays a big role in our calculus:

\begin{definition}[\textbf{Normalizer}]\label{def:normalizer}
Let $\alpha$  be a right (resp. left) action of a set $X$ on $Y$. A normalizer of $\alpha$ is a left (resp. right) action $\beta$ of  $\boldsymbol{Y}$ {\bf on} $\boldsymbol{Y}$ such that $\alpha$ and $\beta$ commute according to \Cref{def:commute}.
\end{definition}

Intuitively, a normalizer of an action $\alpha$ on set $S$  is an action of $S$ on 
\emph{itself}  that commutes with $\alpha$. 

\begin{example}
Consider the  function $f = \abslambda{x:\TInt}{\abslambda{y:\TList{\TInt}}{x + \mathsf{len}(y)}}$ which is  a right action of $\TList{\TInt}$ on $\TInt$.
The function $h = \abslambda{x:\TInt}{\abslambda{y:\TInt}{x + y}}$ is a normalizer  for $f$.
\end{example}

In general,  normalizers are neither guaranteed to exist nor must be unique. 

\begin{example}\label{ex:norm-nonexist}
Consider the  function $f(y, z) = 0 \text{ if } y = z \text{ else } z$ which is a right action of $\mathbb{N}$ on $\mathbb{N}$.
\Cref{proof:ex-norm-nonexist} provides a proof that  a normalizer of $f$ does not exist.

\end{example}


 \begin{example}\label{ex:norm-nonunique}

 Consider the function $f : \TInt \to \TInt \to \TInt = \abslambda{s}{\abslambda{x}{s + 1}}$ which is a right-action of $\TInt$ on $\TInt$. Consider functions $h_1 : \TInt \to \TInt \to \TInt = \abslambda{s_1}{\abslambda{s_2}{s_2}}$ and $h_2 : \TInt \to \TInt \to \TInt = \abslambda{s_1}{\abslambda{s_2}{s_2 + c}}$ where $c$ is an arbitrary integer. In this case, both $h_1$ and $h_2$ are normalizers for $f$.

\end{example}

\subsection{From Normalizers to DataFrame Homomorphisms }

In this section, we relate normalizers to merge functions for dataframe homomorphisms. We start by stating the following theorem that underlies the soundness of our calculus:

\begin{theorem}\label{thm:norm-sound}
Let $\dfagg = \abslambda{x}{\agg(\udaf, \init, x)}$  be a program where $\udaf: \tau_r \times \tau \rightarrow \tau_r$ is a right action of $\tau$ on $\tau_r$, and let $h$ be a normalizer of $f$ satisfying $\forall s \in \tau_r. h(s, \init) = s$. Then, $\dfagg$ is a  homomorphism.
\end{theorem}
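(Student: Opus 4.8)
The goal is to show that if $h$ is a normalizer of the accumulator $\udaf$ (i.e., a left action of $\tau_r$ on $\tau_r$ commuting with $\udaf$) and additionally $h(s,\init)=s$ for all $s$, then $\dfagg=\abslambda{x}{\agg(\udaf,\init,x)}$ satisfies the homomorphism equation with merge operator $\dfmerge \triangleq h$. The plan is to prove the stronger statement that, for every dataframe $\df$ and every state $s\in\tau_r$,
\[
\foldl(\udaf,\ s,\ \dfrows(\df_1\dfconcat\df_2)) \;=\; h\bigl(\foldl(\udaf,s,\dfrows(\df_1)),\ \foldl(\udaf,\init,\dfrows(\df_2))\bigr),
\]
or, more precisely, a lemma of the form $\foldl(\udaf,s,\rho_1 \concat \rho_2) = h(\foldl(\udaf,s,\rho_1),\ \foldl(\udaf,\init,\rho_2))$ for all row-lists $\rho_1,\rho_2$. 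Setting $s=\init$ and using $h(s,\init)=s$ to rewrite the degenerate cases then yields exactly $\dfagg(\df_1\dfconcat\df_2)=\dfagg(\df_1)\dfmerge\dfagg(\df_2)$, since $\dfagg(\df_i)=\foldl(\udaf,\init,\dfrows(\df_i))$ and, by Definition~\ref{def:df-concat}, $\dfrows(\df_1\dfconcat\df_2)=\dfrows(\df_1)\concat\dfrows(\df_2)$.

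First I would fix notation: write $\foldl(\udaf,s,[])=s$ and $\foldl(\udaf,s,r{:}\rho)=\foldl(\udaf,\udaf(s,r),\rho)$, and recall that $\udaf$ is a right action and $h$ is a left action with the commutativity law $h(z,\udaf(y,x))=\udaf(h(z,y),x)$ instantiated for $x\in\tau$, $y,z\in\tau_r$. The core of the argument is an induction on the \emph{second} row-list $\rho_2$. In the base case $\rho_2=[]$: the left side is $\foldl(\udaf,s,\rho_1)$, and the right side is $h(\foldl(\udaf,s,\rho_1),\ \foldl(\udaf,\init,[]))=h(\foldl(\udaf,s,\rho_1),\init)=\foldl(\udaf,s,\rho_1)$ by the identity hypothesis on $h$. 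In the inductive step $\rho_2 = r{:}\rho_2'$: unfold $\foldl(\udaf,s,\rho_1\concat(r{:}\rho_2'))$; the key observation is the standard fact $\foldl(\udaf,s,\rho\concat(r{:}\rho')) = \foldl(\udaf,\ \udaf(\foldl(\udaf,s,\rho),r),\ \rho')$ — actually it is cleaner to push the last element of the whole list, so I would instead induct so that at each step we peel off one element of $\rho_2$ from the \emph{front} and reassociate, applying the induction hypothesis to $\rho_2'$ with the modified initial state. This reduces the obligation to a single application of the generalized commutativity law $h(\cdot,\udaf(\cdot,r)) = \udaf(h(\cdot,\cdot),r)$, plus one more fact about $\foldl$ over $\init$: namely $\foldl(\udaf,\init,r{:}\rho_2') = \foldl(\udaf,\udaf(\init,r),\rho_2')$, combined carefully with commutativity so the inner $h$ argument stays pinned to an $\init$-seeded fold. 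I would state and prove the precise auxiliary lemma $h(a,\foldl(\udaf,b,\rho)) = \foldl(\udaf,h(a,b),\rho)$ by induction on $\rho$ first — this is the real workhorse, following directly by repeatedly applying Definition~\ref{def:commute} — and then the main lemma becomes a short calculation combining it with $h(s,\init)=s$.

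The main obstacle is getting the induction to line up so that the inner argument of $h$ remains a fold seeded at $\init$ (so it equals $\dfagg(\df_2)$) rather than seeded at an intermediate state. The clean way around this is the two-lemma structure above: the "sliding" lemma $h(a,\foldl(\udaf,b,\rho))=\foldl(\udaf,h(a,b),\rho)$ is pure action-commutativity and carries the entire inductive burden, after which one specializes $a=\dfagg(\df_1)$ (or a prefix fold), $b=\init$, $\rho=\dfrows(\df_2)$, and invokes $h(\cdot,\init)=(\cdot)$ only at the very end to identify $h(a,\init)=a=\foldl(\udaf,\init,\dfrows(\df_1))$ — wait, more carefully: we want $\foldl(\udaf,\init,\rho_1\concat\rho_2)$, so apply the sliding lemma with $a$ being the partially-folded state; I expect the slickest final form is $\foldl(\udaf,\init,\rho_1\concat\rho_2) = \foldl(\udaf,\ \foldl(\udaf,\init,\rho_1),\ \rho_2) = \foldl(\udaf,\ h(\foldl(\udaf,\init,\rho_1),\init),\ \rho_2) = h(\foldl(\udaf,\init,\rho_1),\ \foldl(\udaf,\init,\rho_2))$, where the middle step uses the identity hypothesis and the last step uses the sliding lemma read right-to-left. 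This is routine once the sliding lemma is in hand; I do not anticipate any subtlety beyond careful bookkeeping of which fold is seeded where.
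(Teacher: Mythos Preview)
Your proposal is correct and uses the same core ingredients as the paper: induction over the second row-list together with the commutativity law $h(z,\udaf(y,x))=\udaf(h(z,y),x)$ and the identity condition $h(s,\init)=s$.

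The packaging differs slightly. The paper inducts directly on the homomorphism statement, appending one row to the \emph{end} of $Y$ at each step: from $\agg(\udaf,\init,X\dfconcat Y)=h(\agg(\udaf,\init,X),\agg(\udaf,\init,Y))$ it derives the same for $Y'=Y\dfconcat[y_{m+1}]$ via one application of commutativity. Your route factors out the more general ``sliding lemma'' $h(a,\foldl(\udaf,b,\rho))=\foldl(\udaf,h(a,b),\rho)$ (proved by induction on $\rho$, peeling from the front), and then obtains the main result in a single non-inductive calculation using the standard fact $\foldl(\udaf,\init,\rho_1\concat\rho_2)=\foldl(\udaf,\foldl(\udaf,\init,\rho_1),\rho_2)$ plus the identity condition. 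Your factorization is slightly more modular---the sliding lemma isolates exactly where commutativity is used and is reusable in its own right---while the paper's direct induction is marginally shorter since it never introduces the general $b$. Both are perfectly standard; there is no real gap between them.
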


In other words, if we can find a normalizer of $f$ satisfying the  condition $\forall s \in \tau_r. h(s, \init) = s$,  we can guarantee that  $\dfagg$ is a homomorphism. Intuitively, this theorem is very useful because checking whether a function is a normalizer for $f$ is a simpler problem than checking whether a merge operator satisfies \Cref{eq:df-homo}. This is the case because the latter problem requires reasoning about the \emph{entire} dataframe, whereas the former requires reasoning about \emph{just one row} of the dataframe.  

 While the theorem helps \emph{prove} that a program is a dataframe homomorphism, a natural question is whether homomorphism verification can be \emph{fully} reduced to finding a normalizer for the accumulator. Recall from \Cref{ex:norm-nonexist} that normalizers are \emph{not} always guaranteed to exist, raising the question of whether a program $\dfagg$ can be a dataframe homomorphism without a normalizer for its accumulator $f$. We prove that this cannot occur if $\dfagg$ is a surjective function from $\dftypee{\tau}$ to $\tau_r$:

\begin{theorem}\label{thm:norm-complete}
Let $\dfagg = \abslambda{x}{\agg(\udaf, \init, x)}$ be a dataframe homomorphism where $\udaf$ is a right action of $\tau$ on $\tau_r$. Then, if $\dfagg$ is a surjective function from $\mathsf{DF}\langle \tau \rangle$ to set $ \tau_r$, a normalizer of $\udaf$ is guaranteed to exist.
\end{theorem}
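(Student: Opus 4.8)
The plan is to read off a normalizer for $\udaf$ \emph{directly} from the merge operator that the homomorphism hypothesis hands us. Concretely, since $\dfagg$ is a homomorphism, fix a function $\dfmerge$ witnessing \Cref{eq:df-homo}, and set $h := \dfmerge$. Because $\dfagg$ has output type $\tau_r$, this $h$ has signature $\tau_r \times \tau_r \to \tau_r$, so it is a left action of $\tau_r$ on $\tau_r$ and hence a candidate normalizer of the right action $\udaf$ in the sense of \Cref{def:normalizer}. All that remains is to check the commutativity condition of \Cref{def:commute}, namely $\udaf(h(b_1,b_2), r) = h(b_1, \udaf(b_2, r))$ for all $b_1, b_2 \in \tau_r$ and $r \in \tau$. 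This is exactly where surjectivity is needed: a priori $\dfmerge$ is only pinned down on pairs in the image of $\dfagg$, but surjectivity lets us write every $b_i$ as $\dfagg(\df_i)$ for some dataframe $\df_i \in \dftypee{\tau}$, which makes \Cref{eq:df-homo} applicable to arbitrary states.

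Two auxiliary facts feed the argument. First, since $\agg$ is a left fold over the row list seeded with $\init$, appending a single row amounts to one extra application of the accumulator: writing $\df_r$ for the one-row dataframe over the schema of $\dftypee{\tau}$ whose sole row carries the value $r$, we have $\dfagg(\df \dfconcat \df_r) = \udaf(\dfagg(\df), r)$ for every $\df$. Second, $\dfconcat$ is associative on dataframes sharing a common schema; this is immediate from \Cref{def:df-concat}, since concatenation appends the underlying row lists and inherits the value map index-wise.

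Given these, fix $b_1, b_2 \in \tau_r$ and $r \in \tau$, choose $\df_1, \df_2$ with $\dfagg(\df_i) = b_i$ by surjectivity, and compute:
\begin{align*}
\udaf\big(h(b_1,b_2),\, r\big)
  &= \udaf\big(\dfagg(\df_1)\dfmerge\dfagg(\df_2),\, r\big)
     && \text{(surjectivity)} \\
  &= \udaf\big(\dfagg(\df_1 \dfconcat \df_2),\, r\big)
     && \text{(homomorphism)} \\
  &= \dfagg\big((\df_1 \dfconcat \df_2)\dfconcat \df_r\big)
     && \text{(append one row)} \\
  &= \dfagg\big(\df_1 \dfconcat (\df_2 \dfconcat \df_r)\big)
     && \text{(associativity)} \\
  &= \dfagg(\df_1)\dfmerge\dfagg(\df_2 \dfconcat \df_r)
     && \text{(homomorphism)} \\
  &= \dfagg(\df_1)\dfmerge \udaf\big(\dfagg(\df_2),\, r\big)
     && \text{(append one row)} \\
  &= h\big(b_1,\, \udaf(b_2, r)\big).
     &&
\end{align*}
This establishes the commutativity condition, so $h=\dfmerge$ is a normalizer of $\udaf$ and the theorem follows. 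As a bonus, taking $\df_2$ to be the empty dataframe (so $\dfagg(\df_2)=\init$ and $\df_1 \dfconcat \df_2 = \df_1$) gives $h(b_1,\init) = \dfagg(\df_1) = b_1$, so $h$ also satisfies the identity law of \Cref{thm:norm-sound}; thus \Cref{thm:norm-sound} and \Cref{thm:norm-complete} are genuine converses in the surjective case.

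I do not expect a genuinely hard step here: conceptually the whole proof is the displayed chain of rewrites. The only points that require care, and that I would state explicitly, are (i) the ``append one row'' identity, which quietly relies on $\agg$ processing rows by a left fold and on $\dfconcat$ appending rows at the end, together with the existence and $\dfconcat$-compatibility of the single-row dataframe $\df_r$, and (ii) the fact that surjectivity is essential to the argument rather than cosmetic: without it $\dfmerge$ is unconstrained on unreachable states, and nothing forces $\udaf$ to admit a normalizer over its full domain (cf.\ \Cref{ex:norm-nonexist}).
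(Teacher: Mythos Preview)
Your proof is correct and follows essentially the same approach as the paper's: define $h:=\dfmerge$, use surjectivity to lift arbitrary states $b_1,b_2$ to dataframes, and then verify commutativity by rewriting both sides to $\dfagg(\df_1\dfconcat\df_2\dfconcat[r])$ via the homomorphism law and the one-row append identity. Your write-up is in fact more explicit than the paper's (which compresses the chain into two lines and leaves the append identity and associativity implicit), and your ``bonus'' observation about the identity law is exactly what the paper separately records in the proof of \Cref{thm:norm-unique}.
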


According to this theorem,  we can also \emph{disprove} that $\dfagg$ is a homomorphism by showing that a normalizer for the accumulator $f$ \emph{does not exist} as long as $\dfagg$ is surjective (which is realistic for most practical use cases). 
Furthermore, even in cases where  $\dfagg$ is \emph{not} surjective, the completeness result can be generalized by changing the scope of the quantifiers in the normalizer definition to just values in the range of the aggregation function.

Finally, recall from \Cref{ex:norm-nonunique}, that normalizers may not be unique when they exist. This raises the question of whether there can be \emph{multiple} semantically different merge functions for a given dataframe homomorphism. This would be problematic because it would mean that we can construct multiple  merge operators that lead to different results. Fortunately, the following theorem states the uniqueness of normalizers under the side condition imposed by the initializer:

\begin{theorem}\label{thm:norm-unique}
Let $\dfagg = \abslambda{x}{\agg(\udaf, \init, x)}$  be a surjective dataframe homomorphism from $\mathsf{DF}\langle \tau \rangle$ to set $ \tau_r$ where $\udaf$ is a right action of $\tau$ on $\tau_r$.  There exists a unique normalizer $\norm$ of $\udaf$ satisfying  $\forall s \in \tau_r. \norm(s, \init) = s$.
\end{theorem}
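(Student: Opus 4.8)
The plan is to prove uniqueness by induction on the structure of the dataframe, exploiting both surjectivity and the side condition $\norm(s, \init) = s$. First I would establish existence: since $\dfagg$ is a surjective homomorphism, \Cref{thm:norm-complete} guarantees a normalizer $\norm$ exists, and by \Cref{thm:norm-sound} (or directly from the homomorphism property) we can arrange the side condition $\forall s. \norm(s, \init) = s$ to hold — indeed the merge operator $\dfmerge$ witnessing the homomorphism must satisfy $\dfmerge(\dfagg(\df), \dfagg(\varnothing)) = \dfagg(\df \dfconcat \varnothing) = \dfagg(\df)$, and since $\dfagg(\varnothing) = \init$ (the empty dataframe aggregates to the initializer) and $\dfagg$ is surjective, this forces $\dfmerge(s, \init) = s$ for all $s \in \tau_r$. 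So the real content is uniqueness.

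For uniqueness, suppose $\norm$ and $\norm'$ are both normalizers of $\udaf$ satisfying the side condition. I want to show $\norm(s_1, s_2) = \norm'(s_1, s_2)$ for all $s_1, s_2 \in \tau_r$. By surjectivity of $\dfagg$, it suffices to show $\norm(\dfagg(\df_1), \dfagg(\df_2)) = \norm'(\dfagg(\df_1), \dfagg(\df_2))$ for all dataframes $\df_1, \df_2$. Fix $\df_1$ and induct on $\df_2$. In the base case $\df_2 = \varnothing$: then $\dfagg(\df_2) = \init$, so both sides equal $\dfagg(\df_1)$ by the side condition. For the inductive step, write $\df_2 = \df_2' \dfconcat [r]$ for a single row $r$; then $\dfagg(\df_2) = \udaf(\dfagg(\df_2'), r)$ by definition of $\agg$. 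Using the commutativity (generalized) law that defines a normalizer,
\[
\norm(\dfagg(\df_1), \udaf(\dfagg(\df_2'), r)) = \udaf(\norm(\dfagg(\df_1), \dfagg(\df_2')), r),
\]
and the identical identity holds for $\norm'$. By the induction hypothesis the inner arguments $\norm(\dfagg(\df_1), \dfagg(\df_2'))$ and $\norm'(\dfagg(\df_1), \dfagg(\df_2'))$ coincide, so applying $\udaf(-, r)$ to both gives equality of the two sides, completing the induction.

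The main obstacle I anticipate is handling the quantifier scope carefully: a normalizer as defined in \Cref{def:normalizer} must commute with $\udaf$ for \emph{all} $y \in \tau_r$ (not just those in the range of $\dfagg$), but since $\dfagg$ is assumed surjective onto $\tau_r$ here, every element of $\tau_r$ is $\dfagg(\df)$ for some $\df$, so the reduction of arbitrary arguments to aggregated dataframes goes through cleanly. A secondary point to nail down is that $\dfagg(\varnothing) = \init$ — this requires that $\dfconcat$ has the empty dataframe as an identity and that $\agg$ over the empty row list returns $\init$, both of which follow from \Cref{def:df-concat} and the semantics of $\agg$. One should also confirm that every dataframe in the relevant type decomposes as $\df' \dfconcat [r]$, which is immediate from \Cref{def:df} since $\dfrows$ is a finite list. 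Given these, the argument is a routine structural induction, so I would keep the writeup short.
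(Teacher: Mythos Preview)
Your proposal is correct. For existence, it matches the paper's argument exactly: take the merge operator $\dfmerge$ as the normalizer, and use surjectivity together with $\dfagg(\varnothing)=\init$ to get the side condition. For uniqueness, your route differs slightly from the paper's: the paper simply invokes \Cref{thm:norm-sound} to conclude that any normalizer satisfying the side condition is itself a merge operator for $\dfagg$, so both $h_1(a,b)$ and $h_2(a,b)$ must equal $\dfagg(X\dfconcat Y)$ (with $\dfagg(X)=a$, $\dfagg(Y)=b$ by surjectivity), hence they coincide. Your induction on $\df_2$ is precisely the induction that underlies the proof of \Cref{thm:norm-sound}, unrolled inline; the paper just cites the already-proven theorem. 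So the content is the same, and the paper's version is shorter at the cost of an extra cross-reference, while yours is self-contained.
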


\subsection{Calculus Overview}

Before going into the proof rules of our calculus, we first provide a high-level overview of its structure.  
Our calculus is comprised of three types of complementary proof rules:

\begin{enumerate}[leftmargin=*]
\item {\bf Homomorphism validation and refutation rules:} These top-level rules are used to validate or refute whether a given program is a dataframe homomorphism. If it is a homomorphism, these rules also produce the corresponding merge operator to prove that it is a homomorphism.

\item {\bf Normalizer validation and refutation rules:} These rules are employed to either construct a normalizer for the accumulator or prove that none can exist. The refutation rules provide necessary conditions for the non-existence of a normalizer, while the validation rules attempt to synthesize one using a combination of inductive and deductive synthesis techniques.

\item {\bf Type-directed decomposition:}
Some of the synthesis rules for normalizer construction rely on the  expression being in a specific syntactic form. The  goal of type-directed decomposition is to facilitate deductive synthesis by rewriting expressions to match  these syntactic forms.


\end{enumerate}

\subsection{Homomorphism Validation and Refutation Rules}
\label{ssec:horizontal-rules}
\begin{figure}
\vspace{-0.1in}
\small
\small
\[
\begin{array}{c}

\irulelabel
{\begin{array}{c}
\dfbody[(x_1 \dfconcat x_2) / x] \horizdecompto \norm(\dfbody[x_1/x], \dfbody[x_2/x])
\end{array}}
{\prog = \abslambda{x}{\dfbody} \horizdecompto \abslambda{r_1}{\abslambda{r_2}{h(r_1, r_2)}}}
{\textsc{(Top)}}

\\ \ \\

\irulelabel
{\begin{array}{c}
\udaf: \type_r \times \type \rightarrow \type_r \quad
\normof{\type_r}{\udaf}{\init} = \norm \quad
\dftrans \horizdecompto \dftrans_1 \dfconcat \dftrans_2
\end{array}}
{\agg(\udaf, \init, \dftrans) \horizdecompto h\left(\agg(\udaf, \init, \dftrans_1), \agg(\udaf, \init, \dftrans_2)\right)}
{\textsc{(Agg)}}

\\ \ \\

\irulelabel
{\begin{array}{c}
\alpha \in \set{\project, \select} \quad
\dftrans \horizdecompto \dftrans_1 \dfconcat \dftrans_2 \\
\end{array}}
{\alpha(\udaf, \dftrans) \horizdecompto \alpha(\udaf, \dftrans_1) \dfconcat \alpha(\udaf, \dftrans_2)}
{\textsc{(Rel)}}



\quad



\irulelabel
{\begin{array}{c}
\textsf{IsVar}(x)
\end{array}}
{x \horizdecompto x}
{\textsc{(Var)}}

\\ \ \\

\irulelabel
{\begin{array}{c}
\prog(\df_1) = \prog(\df_1') \quad
\prog(\df_2) = \prog(\df_2') \quad
\prog(\df_1\dfconcat{}\df_2) \neq \prog(\df_1'\dfconcat{}\df_2')
\end{array}}
{\prog \horizdecompto \bot}
{\textsc{(Refutation)}}

\end{array}
\]
\vspace{-0.2in}
\caption{Rules for homomorphism validation and refutation.
}
\label{fig:horizontal}
\vspace{-0.15in}
\end{figure}

\Cref{fig:horizontal} describes our first set of proof rules. 
Before explaining in detail, we first provide~intuition:
\\

\begin{mdframed}
{\bf Observation \#1:} Let $\dftrans$ be a data transformation expression with free variable $x$. Then, $\dftrans[(x_1 \dfconcat x_2)/x]$ can always be rewritten as $\dftrans[x_1/x] \dfconcat \dftrans[x_2/x]$.
\end{mdframed} 
\vspace{0.1in}

In our calculus, the rules labeled {\sc Top} and {\sc Agg} exploit this observation, and the rules  labeled {\sc Rel}, $\dfconcat$, and {\sc Var}  define how to transform $\dftrans[(x_1 \dfconcat x_2)/x]$\footnote{\todo{We use the standard notation $E[v/x]$ to denote the substitution of every free occurrence of variable $x$ in expression $E$ with $v$.}} into $\dftrans[x_1/x] \dfconcat \dftrans[x_2/x]$. \\

\begin{mdframed}
{\bf Observation \#2:} Let $E = \agg(\udaf, \init, \dftrans)$ and $f$ have type $ \tau_r \times \tau \rightarrow \tau_r$. If $\norm$ is a  normalizer of $f$ satisfying  $\forall s \in \tau_r. h(s, \init) = s$, then,  $E[(x_1 \dfconcat x_2)/x]$ and  $h(E[x_1/x], E[x_2/x])$ are equivalent.
\end{mdframed} 
\vspace{0.1in}

This observation follows from the previous one and \Cref{thm:norm-sound}. Building on this, the {\sc Agg} rule rewrites the aggregation  using the normalizer, while the {\sc Top} rule transforms $\dfbody[x_1\dfconcat x_2/x]$ into $\norm(\dfbody[x_1/x], \dfbody[x_2/x])$. Consequently, the merge operator for the function is defined as $\abslambda{r_1}{\abslambda{r_2}{h(r_1, r_2)}}$. Intuitively, the {\sc Top} rule demonstrates how to propagate the normalizer $\norm$ of the UDAF $f$ throughout the program, while some of the other rules, such as {\sc Rel}, justify its soundness. \\


\begin{mdframed}
{\bf Observation \#3:} Any dataframe homomorphism $\dfagg$ must satisfy the following axiom:
\[
\forall x_1, x_2, y_1, y_2. \ \dfagg(x_1) = \dfagg(x_2) \land \dfagg(y_1) = \dfagg(y_2) \rightarrow \dfagg(x_1 \dfconcat y_1) = \dfagg(x_2 \dfconcat y_2)
\]
\end{mdframed} 
\vspace{0.1in}
Intuitively, the above observation states that the semantics of $\dfagg$ must be consistent with the existence of a merge operator. To see why, recall the definition of homormorphism, which states that 
$\dfagg(\df_1) \dfmerge \dfagg(\df_2)  = \dfagg(\df_1 \dfconcat \df_2) 
$.
If we instantiate the function axioms for $\oplus$ in the above definition,  we obtain the formula from Observation \#3.  Hence, the negation of this observation provides a way to \emph{refute} that a program is a homomorphism, as formalized by the {\sc Refutation} rule in \Cref{fig:horizontal}. This rule states that, if we find two input pairs $(\df_1, \df'_1)$ and $(\df_2, \df'_2)$ where $\mathcal{P}$ produces the same output for each $(\df_i, \df_i')$ but produces a different output on $\df_1 \dfconcat \df_2$ vs. $\df'_1 \dfconcat \df'_2$, then $\mathcal{P}$ cannot be a homomorphism.


\begin{theorem}
\label{thm:soundness_completeness_hom_synthesis_rules}
A program $\prog$ is a homomorphism if and only if $\prog \horizdecompto \norm$ for some binary function $\norm$ according to the rules in \Cref{fig:horizontal}. Furthermore,  $\norm$ is the merge operator for  homomorphism $\prog$.
\end{theorem}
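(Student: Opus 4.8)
The plan is to prove the two directions of the biconditional in turn. Soundness---if $\prog \horizdecompto \norm$ then $\prog$ is a homomorphism and $\norm$ is its merge operator---comes first, and completeness---if $\prog$ is a homomorphism then $\prog \horizdecompto \norm$ for some binary $\norm$---reuses it. Both directions hinge on one semantic lemma formalizing Observation~\#1, namely that every data-transformation expression distributes over $\dfconcat$, together with the normalizer meta-theorems \Cref{thm:norm-sound}, \Cref{thm:norm-complete}, and \Cref{thm:norm-unique}. A useful preliminary remark is that the shape of any derivation of $\prog \horizdecompto \norm$ is essentially fixed: since $\prog = \abslambda{x}{\dfbody}$ with $\dfbody = \agg(\udaf,\init,\dftrans)$, the only applicable top rule is \textsc{Top}, its premise about an $\agg$-expression forces \textsc{Agg}, and \textsc{Agg}'s premise is a decomposition of $\dftrans[(x_1\dfconcat x_2)/x]$ derived purely by \textsc{Var}, the $\dfconcat$ rule, and \textsc{Rel}. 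So the ``induction on the derivation'' reduces to a structural induction on $\dftrans$ glued to a two-case analysis at the top.

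For soundness I would first prove the \emph{distributivity lemma}: for every data-transformation expression $\dftrans$ and all dataframes $\df_1, \df_2$ on which $\dfconcat$ is defined,
\[
\dftrans(\df_1 \dfconcat \df_2) = \dftrans(\df_1) \dfconcat \dftrans(\df_2),
\]
by structural induction on $\dftrans$; the only content is that $\project$ and $\select$ act row-by-row, hence commute with the row-list concatenation of \Cref{def:df-concat}. This lemma simultaneously validates \textsc{Var}, the $\dfconcat$ rule, and \textsc{Rel}, and shows that any derivation $\dftrans[(x_1\dfconcat x_2)/x] \horizdecompto \dftrans_1 \dfconcat \dftrans_2$ built from them yields exactly $\dftrans_i \equiv \dftrans[x_i/x]$ and is semantics-preserving. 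The \textsc{Agg} step is then precisely \Cref{thm:norm-sound}: its side condition $\normof{\tau_r}{\udaf}{\init} = \norm$ supplies a normalizer of $\udaf$ with $\forall s.\,\norm(s,\init)=s$, so $\abslambda{y}{\agg(\udaf,\init,y)}$ is a homomorphism with merge $\norm$, and combining this with the distributivity lemma gives the \textsc{Agg} conclusion semantically. Finally, reading the premise of \textsc{Top} under the valuation $x_1 \mapsto \df_1$, $x_2 \mapsto \df_2$ says exactly $\prog(\df_1 \dfconcat \df_2) = \norm(\prog(\df_1), \prog(\df_2))$ for all $\df_1,\df_2$, i.e.\ $\norm$ is a merge operator and $\prog$ is a homomorphism. (The \textsc{Refutation} rule produces only $\bot$ and so is irrelevant to this direction; for completeness of the picture, its soundness is an immediate restatement of Observation~\#3.)

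For completeness, assume $\prog = \abslambda{x}{\agg(\udaf,\init,\dftrans)}$ is a homomorphism with some merge operator $\dfmerge$. The decomposition $\dftrans[(x_1\dfconcat x_2)/x] \horizdecompto \dftrans[x_1/x] \dfconcat \dftrans[x_2/x]$ is derivable for \emph{every} $\dftrans$ by induction using \textsc{Var}, the $\dfconcat$ rule, and \textsc{Rel}, so the only premise that could fail is \textsc{Agg}'s side condition, i.e.\ existence of the induced normalizer $\normof{\tau_r}{\udaf}{\init}$. To discharge it, I would use the distributivity lemma to transport homomorphy of $\prog$ onto $\udaf$: for $\df_i' = \dftrans(\df_i)$ we have $\dftrans(\df_1)\dfconcat\dftrans(\df_2) = \dftrans(\df_1\dfconcat\df_2)$, hence $\agg(\udaf,\init,\df_1'\dfconcat\df_2') = \prog(\df_1\dfconcat\df_2) = \dfmerge(\prog(\df_1),\prog(\df_2)) = \dfmerge(\agg(\udaf,\init,\df_1'),\agg(\udaf,\init,\df_2'))$, so $\abslambda{y}{\agg(\udaf,\init,y)}$ behaves homomorphically on the image of $\dftrans$. \Cref{thm:norm-complete} (in the surjective case, or its range-restricted generalization noted after it) then gives a normalizer of $\udaf$, and \Cref{thm:norm-unique} pins it down to the unique induced one $\norm := \normof{\tau_r}{\udaf}{\init}$. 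Applying \textsc{Agg} and then \textsc{Top} yields $\prog \horizdecompto \abslambda{r_1}{\abslambda{r_2}{\norm(r_1,r_2)}}$, and by the soundness direction already established, this $\norm$ is in fact the merge operator of $\prog$, which also settles the ``furthermore'' clause.

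The step I expect to be the main obstacle is the completeness argument for non-identity $\dftrans$: a $\select$/$\project$ pipeline is typically not surjective onto dataframes, so $\abslambda{y}{\agg(\udaf,\init,y)}$ is only homomorphic on the sub-domain $\{\dftrans(\df) : \df\}$, and invoking \Cref{thm:norm-complete} safely requires carefully restricting the commutativity quantifiers in the normalizer definition to this sub-domain---this is exactly where the ``generalized completeness'' remark in the text does the real work, and keeping the bookkeeping of ranges consistent across \Cref{thm:norm-complete} and \Cref{thm:norm-unique} is the delicate part. A secondary, more mechanical nuisance is substitution hygiene: one must check that $\dfbody[(x_1\dfconcat x_2)/x] = \agg(\udaf,\init,\dftrans[(x_1\dfconcat x_2)/x])$ (i.e.\ $x \notin \fv(\udaf)\cup\fv(\init)$) and that the \textsc{Var}/$\dfconcat$/\textsc{Rel} sub-derivation delivers precisely the syntactic form $\dftrans[x_1/x]\dfconcat\dftrans[x_2/x]$ demanded by the premise of \textsc{Top}, so that the sub-derivations compose into a single valid derivation.
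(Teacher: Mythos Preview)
Your proposal is correct and follows essentially the same approach as the paper: a distributivity lemma for $\dftrans$ over $\dfconcat$ proved by structural induction on $\dftrans$ (the paper packages this as \Cref{lem:soundness_and_completeness_horiz}), combined with \Cref{thm:norm-sound} and \Cref{thm:norm-complete} to handle the \textsc{Agg} premise, and then \textsc{Top}. Your caution about the surjectivity obstacle in the completeness direction is well placed---the paper's proof simply invokes \Cref{thm:norm-complete} directly on $\dfagg$ without spelling out the range-restricted argument you sketch, so your treatment is in fact more careful on that point.
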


\subsection{Normalizer Construction and Refutation}
\Cref{fig:witness_construction} describes our proof rules for constructing a normalizer for the accumulator function or proving that none  exists. Recall from \Cref{thm:norm-sound,thm:norm-complete} that a program is a dataframe homomorphism if and only if  there exists a function $h$ such that (1) $h$ and $f$ commute (\Cref{def:commute}) and (2)  $\forall s. ~h(s, \init) = s$. An obvious strategy for constructing such a function $h$ is to use syntax-guided synthesis~\cite{sygus} by encoding the specification as a logical formula. However, because syntax-guided synthesis often requires searching over a large  space of programs, this approach does not work well in practice, particularly when the accumulator involves complex data structrues instead of scalar values.  The following observation underpins the design of our normalizer proof rules:

\vspace*{0.1in}
\begin{mdframed}
    {\bf Observation \#4:} Given an accumulator with complex internal state, we can often \emph{decompose} the normalizer synthesis problem to several simpler synthesis problems that only involve scalars. Furthermore, we can avoid performing search for unrealizable synthesis problems by leveraging necessary conditions for the existence of a normalizer.
\end{mdframed}
\vspace*{0.1in}

\begin{figure}
\scriptsize
\small
\[
\begin{array}{c}

\irulelabel
{\begin{array}{c}
\udaf : \type_r \times \type \rightarrow \type_r \quad 
\Phi_1 \equiv \forall (r: \type_r). ~h(r, \init) = r \\
\Phi_2 \equiv \forall (a, b: \type_r), \forall (x: \type). ~h(a, \udaf(b, x)) = \udaf(h(a, b), x)
\end{array}}
{\normalizes{\udaf}{\init}{\textsf{Solve}(\Phi_1 \land \Phi_2)}}
{\textsc{(Norm-Synth)}}

\\ \ \\

\irulelabel
{\begin{array}{c}
\mathbf{f} : (\tau_1, \ldots, \tau_k) \times \tau \to   (\tau_1, \ldots, \tau_k) \\

\mathbf{\udaf}(s, x) \triangleq 
(\mathbf{f_1}(\typeselector_1(s), x), \ldots, \mathbf{f_n}(\typeselector_n(s), x)) \quad \mathsf{where} \quad 
\normalizes{\mathbf{f_i}}{\sigma_i(\init)}{\mathbf{h_i}} \\

\end{array}}
{
\normalizes{\mathbf{\udaf}}{\init}{\abslambda{(s_1, s_2)}{ 
\left( \mathbf{h_1}(\typeselector_1(s_1), \typeselector_1(s_2)), \ldots, \mathbf{h_n}(\typeselector_n(s_1), \typeselector_n(s_2))\right)}}
}
{\textsc{(Norm-Tuple)}}

\\ \ \\

\irulelabel
{\begin{array}{c}
\mathbf{\udaf}(s, x) \triangleq \map(\abslambda{v}{\mathbf{\udaf'}(v, x)}, \filter(p, s)) \quad
\normalizes{\mathbf{\udaf'}}{\default{\type}}{\mathbf{h}}
\end{array}}
{
\normalizes{\mathbf{\udaf}}{\default{\typecollection{\type}}}{\abslambda{(s_1, s_2)}{
\convert_{\typecollection{\type}}(\set{~(k, \mathbf{h}(v_1, v_2)) ~\mid~ (k, v_1, v_2) \in \convert_\textsf{Map}(s_1) \collectionjoin \convert_\textsf{Map}(s_2)~})
}}}
{\textsc{(Norm-Coll)}}

\\ \ \\

\irulelabel
{\begin{array}{c}
\exists x, s. ~~f(\init, x) = \init \land f(s, x) \neq s
\end{array}}
{\normalizes{f}{\init}{\bot}}
{\textsc{(Norm-Refute-1)}}

\\ \ \\

\irulelabel
{\begin{array}{c}
\exists s. \exists x, x'. ~~f(\init, x) = f(\init, x') \land f(s, x) \neq f(s, x')
\end{array}}
{\normalizes{f}{\init}{\bot}}
{\textsc{(Norm-Refute-2)}}

\end{array}
\]
\vspace{-0.15in}
\caption{Rules for normalizer validation and refutation. In the \textsc{Norm-Coll} rule, $F_\mathsf{Map}(s)$ converts an arbitrary collection into a map, and $\boxtimes$ corresponds to the outer join operator for maps from our DSL.}
\label{fig:witness_construction}
\vspace{-0.15in}
\end{figure}

\Cref{fig:witness_construction} shows the normalizer proof rules in our calculus where $(f, \init) \sim h$ indicates that $h$ is the desired normalizer for accumulator $f$ with initializer $\init$. These rules can be grouped into three categories: The {\sc Norm-Synth} rule serves as the \emph{base case} and relies on an external SyGuS solver. The next two rules, labeled {\sc Norm-Product} and {\sc Norm-Collection}, decompose the normalizer synthesis problem for complex data types into  simpler problems involving less complex data types. Finally, the last two rules prove the non-existence of a normalizer. 
 
\bfpara{The \textsc{Norm-Synth} rule.} This rule leverages an external SyGuS solver (via the \textsf{Solve} procedure) to synthesize a normalizer. The generated SyGuS specification consists of two parts, with $\Phi_1$ encoding the initializer side condition   and $\Phi_2$ specifying  the commutativity condition (\Cref{def:commute}).\footnote{If a specific framework additionally requires the merge operator to be commutative, we additionally provide the constraint  $\forall a, b. h(a,b)=h(b,a)$ as part of  the synthesis query. However, as stated in \Cref{sec:prob_stmt}, our methodology also allows non-commutative merge operators for generality.}
In this rule, we assume that the output of $\mathsf{Solve}$ yields an implementation that satisfies the specification $\Phi_1 \land \Phi_2$.  In general, while SyGuS solvers are quite effective at solving synthesis problems that involve scalars, they empirically struggle with complex data structures. The next two rules aim to decompose such complex problem instances into a series of simpler, scalar-valued instances.

\bfpara{The \textsc{Norm-Product} rule.}
Since many real-world accumulators operate over tuples, the {\sc Norm-Product} rule  decomposes a function $\mathbf{f}$ whose return type  is a tuple into multiple sub-problems, each of which returns a single element of the tuple.  To do so, this rule first finds an expression of the form
$
(\mathbf{f_1}(\typeselector_1(a), b), \ldots, \mathbf{f_n}(\typeselector_n(a), b)) 
$ that is semantically equivalent to $\mathbf{f}$.
Then, instead of finding a single normalizer $\mathbf{h}$ for $\mathbf{f}$, this rule recursively synthesizes a separate normalizer $\mathbf{h_i} $ for each 
$\mathbf{f_i} $ and composes them via the tuple constructor.

\bfpara{The \textsc{Norm-Coll} rule.}
The next rule, labeled {\sc Norm-Coll}, applies to functions whose output is a collection of type $\typecollection{\tau}$ and simplifies the problem of constructing a normalizer with return type $\typecollection{\tau}$ to a simpler one with return type $\tau$. To do so, given a function $\mathbf{\udaf}$, it first finds a semantically equivalent expression of the form $\map(\abslambda{v}{\mathbf{\udaf'}(v, x)}, \filter(p, s)) $.
Intuitively, if  $\mathbf{\udaf}$  can be expressed in this  form,  we can reduce the problem of finding a normalizer for $\mathbf{\udaf}$ to the problem of finding a normalizer for $\mathbf{\udaf'}$: Since $\mathbf{\udaf'}$ applies a transformation to each element in the collection, the  merge function only needs to figure out how to combine each element pair-wise and then build the collection back up. Thus, the {\sc Norm-Coll} rule first finds a normalizer $\mathbf{\norm}$ for $\mathbf{\udaf'}$ and then constructs the desired  merge function  by applying $\mathbf{\norm}$ to each element pair-wise and combining the results. Note that this rule uses the outer join operation $\collectionjoin$ on maps from our DSL (see \Cref{fig:dsl-syntax}) and uses the $\convert$ operation (also from the DSL) to perform type conversion between different collection types.


\begin{example} \label{example:udaf_normalizer_synthesis}

Consider the following function $f: \tau_r \to \mathsf{Int} \to \tau_r$ where $\tau_r$ is a (Int, Map) pair:
\begin{align*}
    \udaf &= \abslambda{s}{\abslambda{x}{(\typeselector_1(s) + x, \map(\abslambda{(k, v)}{\ite(k = x, v + 1, v)}}, \typeselector_2(s)))}.
\end{align*}
The program $\abslambda{x}{\agg(\udaf, (0, \emptymap ), x)}$ takes as input a dataframe with one column of type integer and produces a tuple consisting of (1) the cumulative sum of all elements and (2) a frequency count of unique elements.
First, using the {\sc Norm-Product} rule, we decompose the normalizer synthesis problem for $f$ into two independent subproblems defined by the following functions: 
\begin{align*}
    \mathbf{f}_1 = \abslambda{s_1}{\abslambda{x}{s_1 + x}} \quad \quad 
    \mathbf{f}_2 = \abslambda{s_2}{\abslambda{x}{\map(\abslambda{(k, v)}{\ite(k = x, v + 1, v), s_2)}}}
\end{align*}
For $\mathbf{f_1}$, we use the {\sc Norm-Synth} rule to construct the  normalizer $ \mathbf{h}_1 = \abslambda{s_1}{\abslambda{s_2}{s_1 + s_2}}$. For $\mathbf{f}_2$, we further simplify it using the {\sc Norm-Coll} rule. In particular, we first realize that $\mathbf{f}_2$ can be written as:
\[ \abslambda{s_2}{\abslambda{x}{\map(\abslambda{(k, v)}{\ite(k = x, v + 1, v), \filter(\top, s_2))}}}.\]
Thus, according to  {\sc Norm-Coll}, we need to synthesize a normalizer for  
$
\mathbf{f}_3 = \abslambda{v}{\abslambda{x}{\ite(k = x, v + 1, v)}}
$ where the initializer is $\default{\mathsf{Int}} = 0$. Next, we again use the {\sc Norm-Synth} rule to construct the normalizer $\mathbf{h}_3 = \abslambda{v_1}{\abslambda{v_2}{v_1 + v_2}}$.
This gives the normalizer for $f_2$ as follows:
\[
\mathbf{h}_2 = \abslambda{s_1}{\abslambda{s_2}{\map(\abslambda{(k, v_1, v_2)}{(k, v_1 + v_2)}, s_1 \collectionjoin s_2)}}.
\]
Finally, we obtain the merge function $\dfmerge$ for the program as
\[
\abslambda{s_1}{\abslambda{s_2}{(\typeselector_1(s_1) + \typeselector_1(s_2),  \map(\abslambda{(k, v_1, v_2)}{(k, v_1 + v_2)}, \typeselector_2(s_1) \collectionjoin \typeselector_2(s_2)))}}.
\]

\end{example}

\bfpara{Refutation rules.} As discussed earlier, some accumulators may not have a corresponding normalizer. In such cases, we would like to prove  unrealizability instead of searching for a solution that is guaranteed not to exist. To address this issue, our calculus contains two rules for disproving the existence of a normalizer. These rules are based on the following theorem:


\begin{theorem}
\label{thm:unrealizability}
    Given a function  $\udaf : \type_r \times \type \to \type_r$ and initializer $\init$ of type $ \type_r$, the following are necessary conditions for the existence of a suitable normalizer for  $f$:
    \begin{enumerate}
        \item $\forall s : \tau_r. \forall x : \tau.  \ (\udaf(\init, x) = \init \implies \udaf(s, x) = s$).
        \item $\forall s : \tau_r. \forall x, x' : \tau. \ (\udaf(\init, x) = \udaf(\init, x') \implies \udaf(s, x) = \udaf(s, x'))$.
    \end{enumerate}

\end{theorem}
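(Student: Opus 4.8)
The plan is to prove the contrapositive of each clause: assuming a normalizer $h$ of $f$ satisfying the initializer side condition exists, derive each of the two implications. Throughout, recall from \Cref{def:normalizer,def:commute} that such an $h : \tau_r \times \tau_r \to \tau_r$ satisfies the commutativity law $\forall a, b : \tau_r.\ \forall x : \tau.\ h(a, \udaf(b, x)) = \udaf(h(a, b), x)$, together with the side condition $\forall s : \tau_r.\ h(s, \init) = s$. These are exactly the two hypotheses available, and the proof is a short equational manipulation using both.

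For clause (1): fix $s : \tau_r$ and $x : \tau$ with $\udaf(\init, x) = \init$. The goal is $\udaf(s, x) = s$. Instantiate the commutativity law with $a := s$, $b := \init$, obtaining $h(s, \udaf(\init, x)) = \udaf(h(s, \init), x)$. On the left, substitute $\udaf(\init, x) = \init$ to get $h(s, \init)$, which equals $s$ by the side condition. On the right, $h(s, \init) = s$ by the side condition, so the right-hand side is $\udaf(s, x)$. Chaining these equalities yields $s = \udaf(s, x)$, as desired.

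For clause (2): fix $s : \tau_r$ and $x, x' : \tau$ with $\udaf(\init, x) = \udaf(\init, x')$. Instantiate commutativity twice with $a := s$, $b := \init$: once with row $x$ and once with row $x'$, giving $h(s, \udaf(\init, x)) = \udaf(h(s, \init), x)$ and $h(s, \udaf(\init, x')) = \udaf(h(s, \init), x')$. The two left-hand sides are equal by the hypothesis $\udaf(\init, x) = \udaf(\init, x')$, hence so are the two right-hand sides: $\udaf(h(s, \init), x) = \udaf(h(s, \init), x')$. Applying the side condition $h(s, \init) = s$ to both gives $\udaf(s, x) = \udaf(s, x')$.

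I do not expect any genuine obstacle here: both parts are one-line instantiations of the commutativity law followed by rewriting with the initializer side condition, and the statement is explicitly flagged as giving only \emph{necessary} conditions, so no converse is required. The only point that warrants a sentence of care is making explicit that these conditions are used for \emph{refutation} — i.e., contraposing once more, if either displayed formula fails for some concrete witnesses (as found by the \textsc{Norm-Refute-1} and \textsc{Norm-Refute-2} premises in \Cref{fig:witness_construction}), then no normalizer can exist, which is precisely the soundness justification for those two rules.
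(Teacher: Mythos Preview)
Your proposal is correct and follows essentially the same approach as the paper: both instantiate the commutativity law with $a := s$, $b := \init$ and then rewrite via the initializer side condition $h(s,\init) = s$, yielding the same chain of equalities. The only cosmetic difference is that the paper frames each part as a proof by contradiction (assume a violating witness, derive the equality anyway) whereas you give the direct implication, but the underlying argument is identical.
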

\begin{example}

Consider $f = \lambda s. \lambda x. \textsf{ITE}(s = \init, \init, \mathsf{append}(1, s))$ of type $ \TList{\TInt} \times \TInt \to \TInt$ where $\init$ is the empty list. The existence  a normalizer can be refuted by rule \textsc{Norm-Refute-1} because $f(\init, x) = \init$ for all $x$, but $f(s, x) \neq s$ for any $s$. Similarly, we can refute the existence of a normalizer 
for 
$f = \lambda s. \lambda x. \mathsf{ITE}(s = \init, [1], \mathsf{append}(x, s))$, 
 using  the {\sc Norm-Refute-2} rule:   $f(\init, 3) = f(\init, 0) = \init$, but for $s=[1, 2]$, we have $f([1, 2], 3) = [1, 2, 3] \neq f([1, 2] , 0) = [1, 2, 0]$. 

\end{example}

{We conclude this section with the following theorem that states the soundness and completeness of the normalizer synthesis rules:}

\begin{theorem}\label{thm:norm-rules-soundness}
If $(f, \init) \sim h$, then $h$ is a normalizer of $f$ satisfying $\forall s.~~h(s, \init) = s$ iff $h \neq \bot$.
\end{theorem}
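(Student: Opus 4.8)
The plan is to prove the two directions of the biconditional by structural induction on the derivation of $(f,\init)\sim h$, treating each of the five rules in \Cref{fig:witness_construction} as a case. The $(\Leftarrow)$ direction (soundness of synthesis) shows that if a non-$\bot$ result is derived, it is genuinely a normalizer satisfying the initializer side condition; the $(\Rightarrow)$ direction (soundness of refutation, plus completeness) shows that a $\bot$ result is derived only when no such normalizer exists, and conversely that whenever the derivation succeeds with $h\neq\bot$ it indeed produces one. I would first dispatch the base case \textsc{Norm-Synth}: here $h=\mathsf{Solve}(\Phi_1\wedge\Phi_2)$, and since $\Phi_1$ is exactly $\forall r.\,h(r,\init)=r$ and $\Phi_2$ is exactly the commutativity condition of \Cref{def:commute} instantiated with $\alpha_r=f$, $\beta=h$, the claim is immediate from the assumed correctness of $\mathsf{Solve}$ (and $h\neq\bot$ precisely when the query is realizable).

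For the inductive cases \textsc{Norm-Tuple} and \textsc{Norm-Coll}, the key lemma is that normalizing commutes with the structural decomposition. For \textsc{Norm-Tuple}: given $f(s,x)=(f_1(\sigma_1 s,x),\dots,f_n(\sigma_n s,x))$ and normalizers $h_i$ for each $(f_i,\sigma_i(\init))$ by the IH, I would verify componentwise that $h(a,f(b,x))=f(h(a,b),x)$ reduces, after projecting with $\sigma_i$, to $h_i(\sigma_i a, f_i(\sigma_i b, x))=f_i(h_i(\sigma_i a,\sigma_i b),x)$, which holds by the IH; likewise $h(s,\init)=s$ follows componentwise from $h_i(\sigma_i s,\sigma_i(\init))=\sigma_i s$. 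For \textsc{Norm-Coll}, given $f(s,x)=\map(\lambda v.\,f'(v,x),\filter(p,s))$ with normalizer $h$ for $(f',\default{\tau})$, I would reason key-by-key on the outer join $\collectionjoin$: a key present in both operands contributes $h(v_1,v_2)$; a key present in only one side is handled because the missing value behaves like $\default{\tau}$, for which $h(v,\default{\tau})=v$ (and $h(\default{\tau},v)=v$) by the IH's side condition — this is exactly why the rule threads $\default{\tau}$ through as the sub-initializer. The commutativity obligation then unfolds by pushing $f$'s $\map/\filter$ through the join and applying the IH pointwise; I would also need to check the $\filter(p,\cdot)$ component is preserved, i.e. the predicate $p$ does not depend on the accumulator state in a way that breaks the join alignment (this should follow from how the rule extracts $p$).

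For the refutation rules \textsc{Norm-Refute-1} and \textsc{Norm-Refute-2}, I would invoke \Cref{thm:unrealizability}: each rule's premise is exactly the negation of one of the two necessary conditions listed there, so deriving $\bot$ under such a premise correctly certifies that no normalizer with the initializer side condition exists. This gives soundness of refutation. For the completeness half of the biconditional — that whenever $(f,\init)\sim h$ is derivable at all, $h\neq\bot$ iff a suitable normalizer exists — I would argue that on any derivation producing $h\neq\bot$, the above shows $h$ works (so a normalizer exists), while on any derivation producing $\bot$ the refutation premises preclude one; the content here is that the rule set is \emph{confluent} enough that these two outcomes are exhaustive and mutually exclusive on a given $(f,\init)$, which reduces to: the refutation premises and the success premises cannot simultaneously hold. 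I expect the main obstacle to be the \textsc{Norm-Coll} case — specifically, handling keys that appear in only one of the two maps and justifying that the $\default{\tau}$ side condition from the IH is exactly what is needed, together with checking that the ``semantically equivalent form'' $\map(\lambda v.f'(v,x),\filter(p,s))$ genuinely captures the behavior of $f$ on all collections (including the interaction between $\filter$ and the normalizer reconstruction). A secondary subtlety is making the induction well-founded: the decomposition rules must strictly decrease a type-complexity measure, which I would state explicitly up front so the IH is available in \textsc{Norm-Tuple} and \textsc{Norm-Coll}.
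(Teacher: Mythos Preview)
Your approach is essentially the same as the paper's for the substantive direction: both prove soundness of the synthesis rules (\textsc{Norm-Synth}, \textsc{Norm-Tuple}, \textsc{Norm-Coll}) by structural induction on the derivation, with the tuple case handled componentwise and the collection case handled pointwise across the outer join, just as you sketch.

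Where you diverge is the forward direction. The paper dispatches it in one line: since $\Phi_1\wedge\Phi_2$ is exactly the normalizer specification, completeness of the \textsf{Solve} oracle means \textsc{Norm-Synth} alone already derives any existing normalizer, so there is nothing more to prove. You instead route through \Cref{thm:unrealizability} to certify the refutation rules and then worry about confluence (whether a synthesis derivation and a refutation derivation could coexist). This extra machinery is not wrong, but it is unnecessary: the theorem is stated for a single fixed derivation $(f,\init)\sim h$, so you only need to show that \emph{that} $h$ is a normalizer iff \emph{that} $h\neq\bot$. The $(\Rightarrow)$ direction is then nearly trivial---$\bot$ is not a function, hence not a normalizer---and the refutation-rule soundness (your appeal to \Cref{thm:unrealizability}) is only needed if you want the stronger statement that $\bot$ is derived \emph{only when} no normalizer exists, which the paper does not actually claim here. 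Your concerns about the \textsc{Norm-Coll} edge cases (keys in only one operand, the role of $\default{\tau}$, interaction with $\filter$) and about well-foundedness via a type-complexity measure are legitimate points the paper's proof leaves implicit; flagging them is a plus, not a gap.
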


According to this theorem, our proof rules are sound, meaning that the resulting function $h$ is guaranteed to be a normalizer of $f$ satisfying the initial condition. Furthermore, if the proof rules  refute the existence of a normalizer (meaning that they produce $\bot$), then a normalizer of $f$ satisfying the initial condition does not exist.






\subsection{Expression Decomposition}
\label{ssec:expr_decompose}
Recall that some of the rules in \Cref{fig:witness_construction} require rewriting  function $f$ into a specific \emph{syntactic} form. The last set of proof rules in our calculus describes how to do so based on the following observation:

 \vspace*{0.1in}
\begin{mdframed}
    {\bf Observation \#5:} Given a function $f$ with parameter $x$ of type $\tau$, we can often convert $f$ to a semantically equivalent function $f'$ whose argument has a simpler type $\tau'$.
\end{mdframed}
\vspace*{0.1in}

 


 

\begin{table}[t]
\small
\centering
\caption{Syntax of decomposed expressions.}
\vspace{-0.1in}
\begin{tabular}{|l|l|}
\hline
\textbf{Form} & \textbf{Description} \\ \hline
$\isf_{\phantom{T}} = \vdcomp{\abslambda{x}{E}}{\typedestructor} ~|~ \vdcomp{\abslambda{x}{\isf}}{\typedestructor}$ &
Function \\ \hline

$\dtuple = \bounded{\vdfunc}{\decomp_1, \ldots, \decomp_n}$ &
Tuple \\ \hline

$\dcollection = \unbounded{\typeiter}{\decomp}{\vdpred}{\typedestructor} ~|~ \unboundedapp{\typeiter}{\decomp}{\vdpred}{x \in X}$ & 
Collection \\ \hline

$\decomp_{\phantom{T}} = E \mid \isf \mid \dtuple \mid \dcollection$ & 
Expression \\ \hline

\end{tabular}
\label{table:decomposed}
\end{table}
To gain intuition about why this is the case, consider a function $f: (\tau_1 \times \ldots \times \tau_n) \to \tau$ that takes as input a tuple but only touches one   element of the tuple. In this case, we can obtain a semantically equivalent function $f'$ of type $\tau_i \rightarrow \tau$. Our calculus makes use of  this observation through \emph{decomposed} expressions, whose syntax is provided in 
\Cref{table:decomposed}.  As shown here, there are four types of  \emph{decomposed expressions}: (1) standard expressions $E$ that cannot be further decomposed, (2) decomposed functions $\isf$, (3) decomposed tuples $\dtuple$, and (4)  decomposed collections $\dcollection$. Importantly, a decomposed function is of the form $\vdcomp{\abslambda{x}{E}}{d}$ where the auxiliary function $d$ is used for ``destructing'' the input into something simpler. Intuitively, if our calculus rewrites a lambda abstraction $\abslambda{x:\type}{E}$ into  $\abslambda{x':\type}{E'} \circ d$, this means:
\[
(\abslambda{x:\type}{E}) \ (E_0) \equiv (\abslambda{x':\type'}{E'}) \ (d(E_0))
\]
Crucially, because the input type $\tau'$ of the decomposed abstraction is simpler than than of the original type $\tau$, our calculus allows gradually simplifying functions that take complex inputs into functions that take simpler inputs (e.g., an integer instead of a list of integers). 

Next, we give a high level overview of the expression decomposition rules in \Cref{fig:vertical_decomp}. These rules describe how to convert an expression $E$ into  a decomposed expression $\decomp$, which can then  be converted back into standard  expressions in our DSL using the rules shown in \Cref{fig:convert}.

\begin{figure}[t]
\vspace{-0.2in}
\newcommand{\args}{\bar{x}}
\newcommand{\filterSmall}{\textsf{flt}}

\footnotesize
\[
\begin{array}{c}

\irulelabel
{\phantom{\isf \decomptoexpr E}}
{
\vdcomp{\abslambda{x}{E}}{d} \decomptoexpr \abslambda{x}{E[d(x)/x]}
}
{\textsc{(Abs-Base)}}

\quad

\irulelabel
{\isf \decomptoexpr E}
{\vdcomp{\abslambda{x}{\isf}}{d} \decomptoexpr \abslambda{x}{E[d(x)/x]}}
{\textsc{(Abs-Ind)}}

\\ \ \\

\irulelabel
{\phantom{\begin{array}{c}
E \decomptoexpr E
\end{array}}}
{E \decomptoexpr E}
{\textsc{(Expr)}}

\quad

\irulelabel
{\begin{array}{c}
\decomp_1 \decomptoexpr E_1 \quad \ldots \quad \decomp_n \decomptoexpr E_n \quad
\args \equiv \bigcup_i \params(E_i)
\end{array}}
{\bounded{G}{\decomp_1, \ldots, \decomp_n} \decomptoexpr
\abslambda{\args}{\left( E_1(\args), \ldots, E_n(\args)\right)}
}
{\textsc{(Tuple)}}

\\ \ \\

\irulelabel
{\begin{array}{c}
\decomp \decomptoexpr E \quad \args \equiv \params(\decomp)
\end{array}}
{\unboundedapp{\typeiter}{\decomp}{\vdpred}{y \in Y} \decomptoexpr \abslambda{\args}{\map(E(\args), \filterSmall(\abslambda{y}{\vdpred}, Y))}}
{\textsc{(C1)}}

\irulelabel
{\begin{array}{c}
\decomp \decomptoexpr E \quad \args \equiv \params(\decomp) \setminus \set{y}
\end{array}}
{\unbounded{\typeiter}{\decomp}{\vdpred}{\typedestructor} \decomptoexpr \abslambda{Y}{\abslambda{\args}{\map(E(y, \args), \filterSmall(\vdpred, \typedestructor(Y)))}}}
{\textsc{(C2)}}

\end{array}
\]
\vspace{-0.1in}
\caption{Semantics of decomposed expressions. \filterSmall \ stands for filter.}
\label{fig:convert}
\vspace{-0.1in}
\end{figure}
 \vspace*{0.1in}
\begin{mdframed}
    {\bf Observation \#6:} After converting decomposed expressions into standard expressions using the rules shown in \Cref{fig:convert}, the resulting expressions match the syntactic forms required by the {\sc Norm-Product}  and {\sc Norm-Coll} rules used for normalizer synthesis. 
\end{mdframed}
\vspace*{0.1in}

Based on this observation, our method first converts a given expression to its equivalent decomposed form (using the rules in \Cref{fig:vertical_decomp,fig:pp}) and then obtains a standard expression of the required syntactic form through the conversion rules in \Cref{fig:convert}.

\bfpara{Semantics of decomposed expressions.} \Cref{fig:convert} defines the semantics of decomposed expressions by showing how they can be converted to standard expressions using judgments of the form $\decomp \decomptoexpr E $ where $\decomp$ is a decomposed expression and $E$ is a standard expression. As motivated earlier, the {\sc Abs} rules convert a decomposed abstraction $\abslambda{x}{E \circ d}$ into a standard expression as $\abslambda{x}{E[d(x)/x]}$. The {\sc Tuple} rule recursively converts the nested decomposed expressions $\decomp_i$ to standard expressions $E_i$ and constructs a new tuple (or a lambda abstraction that returns a tuple, depending on whether $E_i$'s are abstractions or not). Finally, the collection rules {\sc C1} and {\sc C2} translate decomposed collections to standard expressions that involve \texttt{map} and \texttt{filter}. The only difference between the {\sc C1} and {\sc C2} rules is whether $Y$ is a free or bound variable in the resulting expression.

\begin{figure}
\footnotesize
\[
\begin{array}{c}

\irulelabel
{\begin{array}{c}
e = (e_1, \ldots, e_n) \quad
e_1 \vertdecompto \decomp_1 \quad \ldots \quad e_n \vertdecompto \decomp_n
\end{array}}
{(e: \typeprod) \vertdecompto \bounded{\vdfunc}{\decomp_1, \ldots, \decomp_n}}
{\textsc{(Tuple)}}

\quad

\irulelabel
{\begin{array}{c}
\textsf{IsIdentifier}(X) \quad \fresh(x: \type)
\end{array}}
{(X: \typecollection{\type}) \vertdecompto \unboundedapp{\typeiter}{x}{\predtrue}{x \in X}}
{\textsc{(Collection)}}

\\ \ \\

\irulelabel
{\begin{array}{c}
\phantom{e \vertdecompto \decomp}
\end{array}}
{(e: \typebase) \vertdecompto e}
{\textsc{(BaseType)}}

\quad

\irulelabel
{\begin{array}{c}
\udaf \in \mathsf{BuiltIn}
\end{array}}
{\udaf \vertdecompto \udaf}
{\textsc{(Lam-Base)}}

\quad

\irulelabel
{\begin{array}{c}
e \vertdecompto \decomp
\quad 
(x: \type, \vdid, \decomp) \propagateparamto \decomp'
\end{array}}
{\abslambda{(x: \type)}{e} \vertdecompto \decomp'}
{\textsc{(Lam-Ind)}}

\\ \ \\

\irulelabel
{\begin{array}{c}
e \vertdecompto \unboundedapp{\typeiter}{\decomp}{\vdpred}{x \in X}
\quad
\decomp \decomptoexpr e_v \quad
f(e_v) \vertdecompto \decomp'
\end{array}}
{\map(f, e) \vertdecompto \unboundedapp{\footnotesize\typeiter}{\decomp'}{\vdpred}{x \in X}}
{\textsc{(Map)}}

\quad

\irulelabel
{\begin{array}{c}
e \vertdecompto \unboundedapp{\typeiter}{\decomp}{\vdpred}{x \in X} \quad
\decomp \decomptoexpr e_v
\end{array}}
{\filter(p, e) \vertdecompto \unboundedapp{\footnotesize\typeiter}{\decomp}{\vdpred \land p(e_v)}{x \in X}}
{\textsc{(Filter)}}

\end{array}
\]
\vspace{-0.1in}
\caption{UDAF decomposition rules (shown for a core subset of the expressions).}
\label{fig:vertical_decomp}
\end{figure}
 
\bfpara{Decomposition rules.} Next, we turn our attention to  \Cref{fig:vertical_decomp} for converting standard expressions to decomposed expressions. These rules use judgments of the form $E \vertdecompto \decomp$, indicating that $\decomp$ is the decomposed version of $E$. At a high level, all of these rules recursively decompose any nested expressions in the premises and build back up a new decomposed expression. Since the {\sc BaseType} and {\sc Tuple} rules are self-explanatory, we only explain the remaining rules.
The rules labeled {\sc Collection, Map, Filter} generate decomposed collections of the form $\unboundedapp{\typeiter}{\decomp}{\vdpred}{x \in X}$. Such a decomposed collection expression represents iterating over all elements $x \in X$ that satisfy predicate $\vdpred$ such that each element $x$ is transformed using decomposed expression  $\decomp$. The {\sc Collection} rule applies to variables $X$ of type collection, which are represented using the decomposed expression $\unboundedapp{\typeiter}{x}{\top}{x \in X}$. The {\sc Filter} rule first recursively decomposes the nested expression $e$ as $\unboundedapp{\typeiter}{\decomp}{\phi}{x \in X}$ and then conjoins the filter predicate $p(e_v)$ with $\phi$ to obtain the new expression. The {\sc Map} rule is similar, but it also applies $f$ to decomposed expression $\decomp$ to obtain a new $\decomp'$. We illustrate the collection decomposition rules through the following example:

\begin{example}
\newcommand{\inc}{\mathsf{inc}}
\newcommand{\double}{\mathsf{double}}
\newcommand{\predgtzero}{x > 0}
Consider the   expression
$
\map(\mathsf{\double}, \filter(\abslambda{x}{\predgtzero}, \map(\inc, X)))$
where \textsf{inc}, \textsf{double} are built-in functions. To decompose this expression, we first rewrite $X$ as $\unboundedapp{\typeiter}{x}{\top}{x \in X}$. Applying the {\sc Map} rule, we then obtain $\unboundedapp{\typeiter}{\inc(x)}{\top}{x \in X}$. Then, applying the {\sc Filter} rule, we obtain  $\unboundedapp{\typeiter}{\inc(x)}{\inc(x) > 0}{x \in X}$. A final application of {\sc Map} yields $ \unboundedapp{\typeiter}{\double(\inc(x))}{\inc(x) > 0}{x \in X}$.

\end{example}
Next, we consider the rules for lambda abstractions. The {\sc Lam-Base} rule handles built-in functions and is straightforward. The {\sc Lam-Ind} rule converts the body $e$ into a decomposed expression $\decomp$, then derives a new decomposed abstraction $\decomp'$ using the rules in \Cref{fig:pp}. 


The function simplification rules in \Cref{fig:pp} use Observation \#5 to adjust the input types of lambda abstractions within $\decomp$, employing judgments of the form $(x:\tau, d, \decomp) \propagateparamto \decomp'$. Here, the triple $(x:\tau, d, \decomp)$ represents a decomposed abstraction that takes an argument $x:\type$ and computes $\decomp[d(x)/x]$. The resulting expression $\decomp'$ is semantically equivalent but applies Observation \#5 to identify simplification opportunities. For example, the {\sc Tuple-Inductive} rule in \Cref{fig:pp} modifies the input types of nested expressions when only a specific component of the input is used.


\begin{figure}
\footnotesize
\[
\begin{array}{c}

\irulelabel
{\begin{array}{c}
\end{array}}
{(x: \type, \typedestructor, E) \propagateparamto \vdcomp{\abslambda{(x: \type)}{E}}{\typedestructor}}
{\textsc{(Expr)}}

\quad

\irulelabel
{\begin{array}{c}
\end{array}}
{(x: \type, \typedestructor, \isf) \propagateparamto \vdcomp{\abslambda{(x: \type)}{\isf}}{\typedestructor}}
{\textsc{(Function)}}

\\ \ \\

\irulelabel
{\begin{array}{c}
(x: \typebase, \typedestructor, \decomp_1) \propagateparamto \decomp_1'
\quad \ldots \quad
(x: \typebase, \typedestructor, \decomp_n) \propagateparamto \decomp_n'
\end{array}}
{
(x: \typebase, \typedestructor, \bounded{\vdfunc}{\decomp_1, \ldots, \decomp_n})
\propagateparamto
\bounded{\vdfunc}{\decomp_1', \ldots, \decomp_n'}
}
{\textsc{(Tuple-Base)}}

\\ \ \\

\irulelabel
{\begin{array}{c}

\bar{\decomp}_i = \decomp_i[v_i / \typeselector_i(x)] \quad \fresh(v_i) \quad x \not\in \fv(\bar{\decomp}_i) \\

(v_1: \type_1, \typeselector_1 \funccomp \typedestructor, \bar{\decomp}_1) \propagateparamto \decomp_1'
\quad \ldots \quad
(v_n: \type_n, \typeselector_n \funccomp \typedestructor, \bar{\decomp}_n) \propagateparamto \decomp_n'
\end{array}}
{
(x: (\tau_1, \ldots, \tau_n)), \typedestructor, \bounded{\vdfunc}{\decomp_1, \ldots, \decomp_n})
\propagateparamto
\bounded{\vdfunc}{\decomp_1', \ldots, \decomp_n'}
}
{\textsc{(Tuple-Inductive)}}

\\ \ \\

\irulelabel
{\begin{array}{c}
(x: \typebase, \typedestructor, \decomp) \propagateparamto \decomp'
\end{array}}
{
(x: \typebase, \typedestructor, \unbounded{\typeiter}{\decomp}{\vdpred}{\typedestructor_0})
\propagateparamto
\unbounded{\typeiter}{\decomp'}{\vdpred}{\typedestructor_0}
}
{\textsc{(C-Base)}}

\irulelabel
{\begin{array}{c}
(x: \type, \funcid, \decomp) \propagateparamto \decomp'
\end{array}}
{
(X: \typecollection{\type}, \typedestructor, \unboundedapp{\typeiter}{\decomp}{\vdpred}{x \in X})
\propagateparamto
\unbounded{\typeiter}{\decomp'}{\abslambda{x}{\vdpred}}{\typedestructor}
}
{\textsc{(C-Ind)}}

\end{array}
\]
\vspace{-0.1in}
\caption{Function simplification rules (shown for a core subset of the expressions).}
\label{fig:pp}
\vspace{-0.1in}
\end{figure}


\begin{example}

Consider again the function from \Cref{example:udaf_normalizer_synthesis}:
\begin{align*}
    \udaf &= \abslambda{s}{\abslambda{x}{(\typeselector_1(s) + x, \map(\abslambda{(k, v)}{\ite(k = x, v + 1, v})}, \typeselector_2(s)))}.
\end{align*}
First, we can use the \textsc{Lam-Ind} rule to decompose the abstraction body. In this case, since the UDAF body is a tuple, we use the \textsc{Tuple} rule, which creates a decomposed tuple with two sub-expressions,  $\typeselector_1(s) + x$ (from the \textsc{BaseType} rule) and $\unboundedapp{\typeiter}{\ite(k = x, v + 1, v)}{\top}{(k, v) \in \typeselector_2(s)}$ (from the \textsc{Collection} and \textsc{Map} rule).
Once the body is decomposed, the \textsc{Lam-Ind} rule processes the inner abstraction with parameter $x$ by using the function simplification rules  (\Cref{fig:pp}), which results in:
\begin{align*}
    \bounded{g}{\abslambda{x}{\typeselector_1(s) + x}, \unboundedapp{\typeiter}{\abslambda{x}{\ite(k = x, v + 1, v)}}{\top}{(k, v) \in \typeselector_2(s)}}
\end{align*}
Next, the \textsc{Lam-Ind} rule processes the top-level abstraction with parameter $s$, which can be further decomposed via \textsc{Tuple-Inductive}: This rule replaces tuple accesses $\typeselector_i$ with fresh variable $v_i$ and recursively invokes the function simplification rules, yielding:
\[
(v_1: \TInt, \typeselector_1, \abslambda{x}{v_1 + x}) ~~\textbf{and}~~
(v_2: \TList{\TInt}, \typeselector_2, \unboundedapp{\typeiter}{\abslambda{x}{\ite(k = x, v + 1, v)}}{\top}{(k, v) \in v_2}).
\]
The first function is immediately simplified to $\vdcomp{\abslambda{v_1}{(\abslambda{x}{v_1 + x})}}{\typeselector_1}$ by the \textsc{Function} rule. For the second function, since $v_2$ is a collection type, we use the \textsc{C-Ind} rule that replaces the free variable in the original decomposed collection with $v_2$, which yields
\[
\unbounded{\typeiter}{\abslambda{v}{\abslambda{x}{\ite(k = x, v + 1, v)}}}{\top}{\typeselector_2}.
\]
Putting these decomposed expressions together, we have the complete decomposed expression for $\udaf$, where every sub-function operates on a simpler type than the original function does:
\begin{align*}
    \bounded{g}{\vdcomp{\abslambda{v_1}{(\abslambda{x}{v_1 + x})}}{\typeselector_1}, \unbounded{\typeiter}{\abslambda{v}{\abslambda{x}{\ite(k = x, v + 1, v)}}}{\top}{\typeselector_2}}.
\end{align*}
Finally, by using the conversion rules from \Cref{fig:convert}, we obtain the following simplified function:
\begin{align*}
\udaf &= \abslambda{(s: (\TInt, \TMap{\TInt}{\TInt}))}{\abslambda{(x: \TInt)}{(\mathbf{f}_1(\typeselector_1(s), x), \mathbf{f}_2(\typeselector_2(s), x))}}, ~~\textbf{where:} \\
\mathbf{f}_1 &= \abslambda{(v_1: \TInt)}{\abslambda{(x: \TInt)}{v_1 + x}} \quad  \quad 
\mathbf{f}_3 = \abslambda{(v: \TInt)}{\abslambda{(x: \TInt)}{\ite(k = x, v + 1, v)}}
\\
\mathbf{f}_2 &= \abslambda{(v_2: \TMap{\TInt}{\TInt})}{\abslambda{(x: \TInt)}{\map(\abslambda{(k, v)}{\mathbf{f}_3(v, x)}, \filter(\top, v_2))}}
\end{align*}
Note that the functions $\mathbf{f_1}, \mathbf{f_3}$ correspond to the functions for which we need to synthesize normalizers. Hence, by using our decomposition rules, we have reduced the problem of synthesizing a normalizer for the complex type $(\TInt, \TMap{\TInt}{\TInt})$ to the problem of synthesizing two normalizers for the much simpler type $\TInt$ (i.e., internal state of the functions $\mathbf{f_1}, \mathbf{f_3}$).

\end{example}

\begin{theorem}
    \label{thm:sound-simplification}
 If   $E \vertdecompto \decomp$ and $\decomp \decomptoexpr E'$, then $E$ and $E'$ are semantically equivalent.
 
\end{theorem}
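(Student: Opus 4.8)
The plan is to prove \Cref{thm:sound-simplification} by structural induction, after first extracting two facts about the conversion judgment $\decomp \decomptoexpr E$ of \Cref{fig:convert}. First, $\decomptoexpr$ is syntax-directed and hence defines a partial function on decomposed expressions, so I may speak of \emph{the} conversion of a given $\decomp$. Second, and crucially, I would establish a \emph{denotation lemma} giving $\decomptoexpr$ a semantic meaning: if $\decomp \decomptoexpr E$, then $E$ is, up to semantic equivalence, the $\lambda$-abstraction over $\params(\decomp)$ of the ``obvious interpretation'' of $\decomp$ --- e.g.\ a decomposed collection $\unboundedapp{\typeiter}{\decomp_0}{\vdpred}{x \in X}$ denotes the collection obtained by iterating $x$ over $X$, dropping the elements that fail $\vdpred$, and transforming the rest through $\decomp_0$, which is exactly what rules \textsc{C1}/\textsc{C2} realize with $\map$ and $\filter$; decomposed tuples denote tuples of the denotations of their parts; and a decomposed abstraction $\vdcomp{\abslambda{x}{E_0}}{\typedestructor}$ denotes $\abslambda{x}{E_0[\typedestructor(x)/x]}$. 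Throughout, ``semantically equivalent'' means equality of denotations relative to every environment, which is a congruence and may therefore be applied underneath binders.

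Next I would prove, by induction on the derivation of $(x{:}\type, \typedestructor, \decomp) \propagateparamto \decomp'$, a \emph{propagation soundness} lemma: if $\decomp \decomptoexpr E_0$ and $\decomp' \decomptoexpr E'$, then $\sem{E'} = \sem{\abslambda{x:\type}{E_0[\typedestructor(x)/x]}}$. Since the rules of \Cref{fig:pp} recurse only within $\propagateparamto$, this is self-contained. The \textsc{Expr} and \textsc{Function} cases are immediate from the \textsc{Abs} conversion rules; \textsc{Tuple-Base} and \textsc{C-Base} are componentwise appeals to the induction hypothesis; and the two substantive cases are \textsc{Tuple-Inductive} and \textsc{C-Ind}. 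For \textsc{Tuple-Inductive}, the side conditions $\fresh(v_i)$ and $x \notin \fv(\bar\decomp_i)$ say precisely that the $i$-th component reads $x$ only through $\typeselector_i(x)$, so that replacing $\typeselector_i(x)$ with a fresh $v_i$ and re-binding $v_i$ through $\typeselector_i \funccomp \typedestructor$ is a $\beta$/$\eta$-equivalent transformation; combined with the induction hypothesis on each component, the claim follows. For \textsc{C-Ind}, the point is that re-binding the collection variable $X$ through the destructor --- i.e.\ passing from an $X$-indexed decomposed collection to a destructor-indexed one --- commutes with the $\map$/$\filter$ implementation that \textsc{C1} versus \textsc{C2} produce; here the denotation lemma for collections reduces the argument to checking that the two iterations agree, rather than unfolding map--filter fusion by hand.

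The theorem itself then follows by induction on the derivation of $E \vertdecompto \decomp$ (\Cref{fig:vertical_decomp}). The \textsc{BaseType}, \textsc{Tuple}, and \textsc{Lam-Base} cases are direct from the induction hypothesis and the matching conversion rules. The \textsc{Collection} case is immediate since $\map$ of the identity after $\filter$ of $\predtrue$ is the identity. The \textsc{Lam-Ind} case is where the two developments meet: the premise $e \vertdecompto \decomp$ gives (by induction) that $\sem{e}$ equals the conversion of $\decomp$, and then propagation soundness applied to $(x{:}\type, \vdid, \decomp) \propagateparamto \decomp'$ yields $\sem{E'} = \sem{\abslambda{x}{(\text{conversion of }\decomp)[\vdid(x)/x]}} = \sem{\abslambda{x}{e}}$. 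Finally, for \textsc{Map} and \textsc{Filter}, I would apply the denotation lemma to the decomposed collection produced by the induction hypothesis on $e$, together with the $\map$- and $\filter$-fusion laws, to check that the newly formed decomposed collection has the same denotation as $\map(f,e)$ (resp.\ $\filter(p,e)$); converting back through \textsc{C1}/\textsc{C2} then gives the required $E'$.

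The part I expect to be the main obstacle is the denotation lemma, and specifically the $\params$-bookkeeping underlying it. The conversion rules in \Cref{fig:convert} abstract each decomposed expression over the union of the bound-variable sets of its sub-parts, so substitution into a decomposed expression does not commute naively with conversion: one must track how $\params$ changes under the fresh renamings of \textsc{Tuple-Inductive} and under the $X \mapsto \typedestructor$ re-binding of \textsc{C-Ind}, and verify that the order in which \textsc{C1}/\textsc{C2}/\textsc{Tuple} introduce $\lambda$-abstractions does not affect the denotation. Once this lemma is stated precisely --- at which point the freshness side conditions are visibly exactly what it requires --- both inductions are otherwise routine, and \Cref{thm:sound-simplification} follows.
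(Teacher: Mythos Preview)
Your proposal is correct and follows essentially the same approach as the paper: first establish a propagation-soundness lemma for $\propagateparamto$ (the paper's Lemma~\ref{lemma:pp-soundness}) by induction on the rules of \Cref{fig:pp}, then prove the theorem by induction on the rules of \Cref{fig:vertical_decomp}, invoking that lemma in the \textsc{Lam-Ind} case and map/filter-fusion in the \textsc{Map} and \textsc{Filter} cases. Your additional denotation lemma and your concern about $\params$ bookkeeping are refinements that the paper handles by direct computation (and, in fairness, leaves somewhat informal), but the core structure is identical.
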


\subsection{Putting it all Together: End-to-End Algorithm}

\begin{figure}[!t]
\begin{algorithm}[H]
\caption{Homomorphism verification algorithm}
\label{alg:top-level}
\begin{algorithmic}[1]
\Procedure{\textsc{IsHomomorphism}}{$\dfagg = \abslambda{x}{\agg(\udaf, \init, x)}$}
\vspace{2pt}
\Statex \textbf{Input:} A dataframe aggregation $\dfagg$
\Statex \textbf{Output:} Merge operator  or $\bot$ (if $\dfagg$ is not homomorphic)

\IIf{\Call{ApplyHomRefute}{$\dfagg$}} \Return $\bot$ \label{lst:top-level:step1}\label{lst:top-level:refute1}
\IIf{\Call{ApplyNormRefute}{$\udaf, \init$}} \Return $\bot$ \label{lst:top-level:refute2}

\If{$\neg$\Call{CanApplyDecomp}{$f$}} 
\State \Return \Call{ApplyNormSynth}{$f, \rfs$}\Comment{Attempt normalizer synthesis} \label{lst:top-level:step2}
\EndIf

\State $(\decomp, S) \gets$ (\Call{ApplyDecomp}{$f$}, $\varnothing$) \Comment{Decompose $f$}\label{lst:top-level:decompose}
\State $E \gets$ \Call{ApplyExprConvert}{$\decomp$} \Comment{Convert to canonical form}
\State $\mathsf{refuted} \gets \textbf{false}$

\ForAll{$(\udaf_i, \init_i) \in$ \Call{MatchNormInductive}{$E$}} \label{lst:top-level:step3}
    \State $h_i \gets$ \Call{ApplyNormRefute}{$f_i, \init_i$} $?$ $\bot$ $:$ \Call{ApplyNormSynth}{$f_i, \init_i$} \label{lst:top-level:norm-refute}
    \IIf{$h_i = \bot$} $\mathsf{refuted} \gets \textbf{true}$; \textbf{break} 
    \State $S \gets S \cup \{f_i \rightarrow h_i\}$ \label{lst:top-level:step3-end}
\EndFor

\IIf{$\neg$\textsf{refuted}} \Return \Call{ApplyNormSynth}{$f, \rfs$} \label{lst:top-level:step4}
\State \Return \Call{ApplyNormInductive}{$S, E$} \label{lst:top-level:step5}

\EndProcedure
\vspace{2pt}

\end{algorithmic}
\end{algorithm}
\vspace{-0.2in}
\end{figure}

We conclude this section by formulating  a verification  algorithm, summarized in  \Cref{alg:top-level}, based on our calculus. This procedure takes as input an aggregation program $\dfagg = \abslambda{x}{\agg(f, \init, x)}$ and either returns $\bot$  or a valid merge operator proving that $\dfagg$ is a homomorphism.
Starting at \cref{lst:top-level:step1}, the algorithm first tries to apply the homomorphism refutation rules in \Cref{fig:horizontal} to refute the given problem instance. If refutation fails, it  proceeds to synthesize a normalizer. To this end,  it first tries to decompose $f$ using the  rules from \Cref{ssec:expr_decompose}.  If $f$ cannot be decomposed, \cref{lst:top-level:step2} attempts to prove the existence of a normalizer for $f$ using  the {\sc Norm-Synth}  rule, which invokes a SyGuS solver.   On the other hand, if $f$ is decomposable, we use the technique from \Cref{ssec:expr_decompose} to first obtain a decomposed expression $\decomp$, which is then converted back to a standard expression $E$ (using the rules from \Cref{fig:convert}) such that $E$ is in a syntactically canonical form. Hence, when we pattern match against $E$ in the {\sc Norm-Coll} and {\sc Norm-Tuple} rules of \Cref{fig:witness_construction}, we can identify all the sub-functions $F$ for which we need to synthesize normalizers. This set $F$ is computed via the call to procedure {\sc MatchNormInductive} at \cref{lst:top-level:step3} of the algorithm. Then, the loop in lines \ref{lst:top-level:step3}-\ref{lst:top-level:step3-end} computes a normalizer for each $f_i \in F$ and adds the pair $(f_i, h_i)$ to a set $S$. If normalizer computation fails for any $f_i$, the procedure falls back on the {\sc ApplyNormSynth} rule. Otherwise,  the call to {\sc ApplyNormInductive} at \cref{lst:top-level:step5} computes a normalizer for the whole function by using the {\sc Norm-Coll} and {\sc Norm-Tuple} rules.

An important point about this algorithm is that it does \emph{not} return $\bot$ if the call to {\sc ApplyNormRefute} returns true at \cref{lst:top-level:norm-refute}. {This is due to the fact that the {\sc Norm-Coll} and {\sc Norm-Tuple} rules can be used to \emph{prove} the existence of a normalizer but \emph{not} for refuting it.} 
{In other words, there can be cases where the accumulator function is decomposable, but its corresponding normalizer is not.}
The following  example illustrates such a case:

\begin{example}\label{ex:incomplete-decomp}
Consider the  function $f: \abslambda{s:\mathsf{Bool} \times \mathsf{Int}}{\abslambda{x:\mathsf{Int}}{(\vbooltrue, x)}}$ with initializer $\init = (\vboolfalse, 0)$. Using the {\sc Norm-Product} rule, we decompose the normalizer synthesis problem for $f$ into two independent subproblems defined by the following functions: 
\begin{align*}
    \mathbf{f}_1 = \abslambda{s_1}{\abslambda{x}{\vbooltrue}} \quad \quad 
    \mathbf{f}_2 = \abslambda{s_2}{\abslambda{x}{x}}.
\end{align*}
Note that $\mathbf{f}_2$ does not have a normalizer, as can be confirmed using {\sc Norm-Refute-1}. However, the original function $f$ does have a normalizer, namely $\mathbf{h} = \abslambda{s_1}{\abslambda{s_2}{\ite(\typeselector_1(s_2), s_2, s_1)}}.$
\end{example}

\begin{theorem}
\label{thm:completeness_of_decomposition}
Let $h$ be the return value of \textsf{IsHomomorphism}($\dfagg$) where $\dfagg$ is a surjective function from $\mathsf{DF}\langle \tau \rangle$ to set $ \tau_r$. Given a sound and complete oracle \textsf{Solve}  for syntax-guided synthesis, and assuming the entire DSL in \Cref{fig:dsl-syntax} is provided to the SyGuS solver, we have $h \neq \bot$  if and only if $\dfagg$ is a dataframe homomorphism. Furthermore, if $h \neq \bot$, then $h$ corresponds to the binary operator that proves that $\dfagg$ is a homomorphism.
\end{theorem}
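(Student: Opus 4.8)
The plan is to prove both directions of the biconditional by a structural case analysis on the control flow of \Cref{alg:top-level}, invoking for each branch the soundness or completeness lemma already established for the component it exercises. The argument will combine: \Cref{thm:norm-sound} (a normalizer of $\udaf$ satisfying $\forall s.\ h(s,\init)=s$ witnesses that $\dfagg$ is a homomorphism with merge operator $h$); \Cref{thm:norm-complete} and \Cref{thm:norm-unique} (a surjective homomorphism's accumulator has a \emph{unique} such normalizer); \Cref{thm:soundness_completeness_hom_synthesis_rules} (the rules of \Cref{fig:horizontal} are sound and complete, derive $\dfagg\horizdecompto h$ exactly when $\dfagg$ is a homomorphism, and return the merge operator); \Cref{thm:norm-rules-soundness} (the $\sim$ judgment is sound and yields $\bot$ precisely when no normalizer exists --- this is where the hypotheses that \textsf{Solve} is sound and complete and that the full DSL of \Cref{fig:dsl-syntax} is supplied to it are absorbed); \Cref{thm:unrealizability} (soundness of \textsc{Norm-Refute-1/2}); and \Cref{thm:sound-simplification} (decomposition followed by re-conversion preserves semantics).

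For the soundness direction I would argue: if the procedure returns $h\neq\bot$, then it did not return at \cref{lst:top-level:refute1} or \cref{lst:top-level:refute2}, so $h$ comes from \cref{lst:top-level:step2}, \cref{lst:top-level:step4}, or \cref{lst:top-level:step5}. In the first two, $h=\textsc{ApplyNormSynth}(\udaf,\rfs)$, which implements \textsc{Norm-Synth}, so $(\udaf,\init)\sim h$ with $h\neq\bot$, and \Cref{thm:norm-rules-soundness} gives that $h$ is a normalizer of $\udaf$ with $\forall s.\ h(s,\init)=s$. In the third, \textsc{ApplyNormInductive}$(S,E)$ derives $(E,\init)\sim h$ by applying the \textsc{Norm-Tuple}/\textsc{Norm-Coll} constructors to the entries $\udaf_i\mapsto h_i$ of $S$ (each with $h_i\neq\bot$, since the loop breaks otherwise), so $h\neq\bot$ and $h$ normalizes $E$; since $E\equiv\udaf$ by \Cref{thm:sound-simplification}, $h$ normalizes $\udaf$ with the initializer side condition as well. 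In every case \Cref{thm:norm-sound} yields that $\dfagg=\abslambda{x}{\agg(\udaf,\init,x)}$ is a homomorphism, and the \textsc{Top} and \textsc{Agg} rules of \Cref{fig:horizontal} derive $\dfagg\horizdecompto\abslambda{r_1}{\abslambda{r_2}{h(r_1,r_2)}}$; by \Cref{thm:soundness_completeness_hom_synthesis_rules} this $h$ is a merge operator for $\dfagg$, and by \Cref{thm:norm-unique} it is the only one.

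For completeness I would assume $\dfagg$ is a homomorphism, surjective from $\dftypee{\tau}$ onto $\tau_r$, and show the procedure returns $h\neq\bot$. First, \textsc{ApplyHomRefute}$(\dfagg)$ cannot succeed (by \Cref{thm:soundness_completeness_hom_synthesis_rules} a successful refutation would prove $\dfagg$ is not a homomorphism), so \cref{lst:top-level:refute1} does not fire. By \Cref{thm:norm-complete,thm:norm-unique} there is a unique normalizer $h^{\star}$ of $\udaf$ with $\forall s.\ h^{\star}(s,\init)=s$; by soundness of \textsc{Norm-Refute-1/2} (\Cref{thm:unrealizability}) its existence makes both refutation conditions fail, so \textsc{ApplyNormRefute}$(\udaf,\init)$ fails and \cref{lst:top-level:refute2} does not fire. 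If $\udaf$ is not decomposable, \cref{lst:top-level:step2} returns $\textsc{ApplyNormSynth}(\udaf,\rfs)$, which is $\neq\bot$: since a normalizer exists and \textsf{Solve} is complete over the full DSL of \Cref{fig:dsl-syntax}, it returns one (otherwise $(\udaf,\init)\sim\bot$ would be derivable, contradicting \Cref{thm:norm-rules-soundness}). If $\udaf$ is decomposable, let $E\equiv\udaf$ be the canonicalized expression produced after decomposition and re-conversion. If the leaf loop sets $\mathsf{refuted}$ (some leaf $\udaf_i$ gets $h_i=\bot$, whether from refutation or from \textsf{Solve} returning $\bot$), the fallback at \cref{lst:top-level:step4} returns $\textsc{ApplyNormSynth}(\udaf,\rfs)\neq\bot$ for the same reason; otherwise every $h_i\neq\bot$ and \cref{lst:top-level:step5} returns $\textsc{ApplyNormInductive}(S,E)$, which merely applies the \textsc{Norm-Tuple}/\textsc{Norm-Coll} constructors to non-$\bot$ inputs and so is itself $\neq\bot$. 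Hence $h\neq\bot$ in all cases. Termination follows since type-directed decomposition strictly decreases type complexity, \textsc{MatchNormInductive}$(E)$ returns finitely many leaves, and \textsf{Solve} is assumed to terminate.

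The delicate step --- and the main obstacle --- is the decomposable case of completeness, specifically justifying the fallback at \cref{lst:top-level:step4}: by \Cref{ex:incomplete-decomp}, type-directed decomposition can yield a subproblem $\udaf_i$ with no normalizer even when $\udaf$ has one, so a local refutation must \emph{not} be propagated as $\bot$. Making this rigorous requires coupling (i) the unconditional guarantee of \Cref{thm:norm-complete,thm:norm-unique} --- a unique normalizer of $\udaf$ exists whenever $\dfagg$ is a surjective homomorphism, regardless of how decomposition behaves --- with (ii) completeness of \textsf{Solve} over the full DSL, so that the monolithic \textsc{Norm-Synth} call actually finds it. A secondary technical point is checking that, once \textsc{ApplyExprConvert} puts $E$ into the canonical form of Observation~\#6, \textsc{MatchNormInductive} recovers exactly the leaf functions consumed by \textsc{Norm-Tuple}/\textsc{Norm-Coll}, and that \Cref{thm:sound-simplification} lets us transfer normalizer-hood from $E$ back to $\udaf$, so the composition built by \textsc{ApplyNormInductive} is a normalizer of $\udaf$ itself rather than of a mere syntactic variant.
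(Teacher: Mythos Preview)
Your proposal is correct and follows essentially the same case analysis on \Cref{alg:top-level} as the paper, invoking the same supporting results (\Cref{thm:norm-sound,thm:norm-complete,thm:soundness_completeness_hom_synthesis_rules,thm:unrealizability,thm:norm-rules-soundness,thm:sound-simplification}). The one structural difference worth noting is how the backward direction is organized: the paper argues it by contrapositive---observing that the only places the procedure can return $\bot$ are the two explicit refutation sites (\cref{lst:top-level:refute1,lst:top-level:refute2}), and that each is sound by \Cref{thm:soundness_completeness_hom_synthesis_rules} and \Cref{thm:unrealizability} respectively---whereas you argue it directly, tracing every path and showing each returns a non-$\bot$ value when $\dfagg$ is a homomorphism. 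The contrapositive is terser, but your direct argument is arguably more careful: you explicitly handle the decomposition-incompleteness fallback illustrated by \Cref{ex:incomplete-decomp}, and you make explicit the point that the monolithic \textsc{ApplyNormSynth} call at \cref{lst:top-level:step2} or \cref{lst:top-level:step4} cannot itself return $\bot$ (via \Cref{thm:norm-complete} plus completeness of \textsf{Solve} over the full DSL), a case the paper's proof elides. Your additional discussion of termination and your appeal to \Cref{thm:norm-unique} for uniqueness of the returned operator are extra rigor that the paper's proof omits.
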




\section{Implementation}\label{sec:impl}

We have implemented our proposed technique in a tool called \toolname{} (written in Rust) which  uses the CVC5~\cite{cvc5} synthesizer to solve leaf-level synthesis problems. \toolname{} takes as input a UDAF and outputs  its corresponding merge function (if one exists). The input and output of \toolname{} is implemented in the language from Figure~\ref{fig:dsl-syntax}, but \toolname{} also accepts source programs written in Scala, which \toolname{} converts to its own language using a custom transpiler.


\bfpara{SyGuS encoding for UDAFs.}
While our method tries to decompose the synthesis problem as much as possible, there may still be leaf-level synthesis problems that involve unbounded data structures, which is a challenge for the SyGuS encoding. 
Our implementation deals with  this challenge by modeling lists as sequences in CVC5 and adding list transformation functions like map and filter to the SyGuS grammar. Our implementation similarly models maps as sets of key-value pairs, using the theory of sets supported in CVC5. To ensure determinism and avoid dependence on solver-specific heuristics, we sort the non-terminals in the generated SyGuS grammar alphabetically.  A summary of the SyGuS grammar for leaf-level synthesis is provided in \Cref{appendix:sygus}.

\bfpara{Implementation of refutation rules.} The refutation rules in our calculus disprove the existence of a merge operator by finding inputs that satisfy a certain property.
In our implementation, we use property-based testing (specifically, the Rust implementation of QuickCheck) to perform refutation.
To this end, we provide the logical negation of our refutation rules as properties to be checked during testing.
While we also experimented with a refutation procedure based on  SMT, we found that this approach  can be quite slow due to the presence of unbounded data structures. Thus, our implementation uses testing by default.



\bfpara{Incompleteness of decomposition.}
Due to the incompleteness of decomposition (see \Cref{ex:incomplete-decomp}),  \Cref{alg:top-level}  falls back on syntax-guided synthesis if the decomposed functions are not homomorphic. However, in practice, we found that \emph{most} functions in the decomposition are homomorphic even when not \emph{all} of them are.  Furthermore, we found that the solution for these sub-problems are still useful for solving the overall problem. Our implementation leverages this insight by incorporating solutions to these sub-problems as terminals in the grammar of the SyGuS encoding.

\bfpara{Generalization of \textsc{Norm-Tuple}.}
Recall that the  \textsc{Norm-Tuple} rule in \Cref{fig:witness_construction} tries to decompose a function $f$ as a sequence of functions $f_1, \ldots, f_n$, each of which operates over a single element of the tuple. Our implementation generalizes this rule and allows each $f_i$ to operate over a subset of the elements in the tuple. For instance, using this generalization, a function such as $\abslambda{(a, b, c)}{\abslambda{x}{(a + 1, \ite(c, b + x, b), \vbooltrue)}}$ can be decomposed into the following two functions:
\begin{equation*}
    \udaf_1 = \abslambda{a}{\abslambda{x}{a + 1}} \quad\quad \udaf_2 = \abslambda{(b, c)}{\abslambda{x}{(\ite(c, b + x, b), \vbooltrue)}}
\end{equation*}

\bfpara{Transpiler from Scala.} While \toolname{} can take Scala source code as input, it first transpiles Scala code to its own intermediate representation (see \Cref{fig:dsl-syntax}). The Scala-to-\toolname{} transpiler follows a syntax-directed translation process that systematically rewrites UDAFs into \toolname{}'s IR. It first maps accumulator state representations by converting Scala primitive types into \toolname{} primitive types, while standard collections like \texttt{List}, \texttt{Map}, and \texttt{Set} are translated to Ink's collection primitives. Since most Scala UDAFs already exhibit a functional structure,  transpilation lends itself to straightforward syntax-directed translation for most Scala UDAFs implemented in frameworks like Spark and Flink.  However, for UDAFs using third-party libraries or custom types,  the user needs  to provide mappings from these to \toolname{}'s built-in collection types.  The transpiler from Scala to the \toolname{} IR is implemented in Python.

\section{Evaluation}

We now describe our experiments that are designed to answer the following research questions:

\begin{enumerate}[label = \textbf{RQ\arabic*.}]
    \item How does \tool{} compare against relevant baselines for merge operator synthesis? 
    \item How does \tool{} compare with other tools for refuting homomorphisms?
    \item How important are the core ideas (deduction, decomposition) for merge function synthesis?
    \item How important are the refutation rules?
\end{enumerate}

\begin{table}[t]
\centering
\begin{minipage}[b]{0.46\linewidth}
\caption{AST Statistics.}
\vspace{-0.10in}
\label{tab:ast_stats}
\centering
\begin{tabular}{|l|c|c|}
    \hline
    \textbf{Metric}          & \textbf{UDAF} & \textbf{Merge} \\ \hline
    Avg AST          & 30.6          & 21.3           \\ \hline
    Median AST           & 27.5          & 22.0           \\ \hline
    Max AST              & 120.0         & 106.0           \\ \hline
\end{tabular}
\end{minipage}
\hspace{0.04\linewidth} 
\begin{minipage}[b]{0.46\linewidth}
\caption{Other metrics.}
\vspace{-0.10in}
\label{tab:other_metrics}
\centering
\begin{tabular}{|l|c|}
    \hline
    \textbf{Metric}        &               \\ \hline
    Tuples                 & $72.0\%$      \\ \hline
    Collections            & $42.0\%$      \\ \hline
    Conditionals           & $50.0\%$      \\ \hline
\end{tabular}
\end{minipage}
\vspace{-0.18in}
\end{table}

\bfpara{Sources of benchmarks.}
To answer these questions, we sampled a set of approximately 100 benchmarks from real-world GitHub repositories, focusing on implementations of UDAFs for Apache Spark and Flink.
These benchmarks represent a mix of UDAFs that span a diverse set of domains such as telemetry, finance, geospatial and raster analytics, and machine learning. To ensure that our evaluation focuses on non-trivial UDAFs, we perform further filtering of the collected UDAFs  by retaining only those functions that satisfy the  following two criteria: First, the UDAF must contain at least 10 LOC, and, second, it should involve control flow or a non-trivial type (namely, collection or tuple with at least three elements). After filtering easy benchmarks that do not meet this criteria, we obtain a total of 50 benchmarks, of which 45 are dataframe homomorphisms. 
\Cref{tab:ast_stats} provides statistics about the size of these UDAFs and their corresponding merge operator in terms of average, median, and maximum AST size. Additionally, \Cref{tab:other_metrics} reports the percentage of UDAFs that contain tuples, collections, and conditionals.

\bfpara{Baselines.} To evaluate the effectiveness of our approach, we compare our method against two baselines. The first baseline is CVC5, a state-of-the-art SyGuS solver that provides support for data structures like tuples and sets. 
Our other  baseline is a state-of-the-art synthesizer called \parsynt{}~\cite{parsynt_pldi21}  for \emph{divide-and-conquer} algorithms---notably, \parsynt{} also aims to synthesize merge operators that can be used for parallelization. 
Additionally, we tried to compare \toolname{} against \autolifter{}, which is another  synthesizer for divide-and-conquer parallelism. However, the implementation of \autolifter{} assumes that the UDAF output is a scalar value. Since this assumption does not hold for our benchmarks, we were unable to perform an empirical evaluation against \autolifter{}.

\bfpara{Experimental setup.}
All  experiments are conducted on a machine with an AMD Ryzen 9 7950X3D CPU and 64 GB of  memory, running  NixOS 24.11. We use a 10 minute time limit for each benchmark.

\subsection{Evaluation of Merge Operator Synthesis}

To answer our first research question, we run \toolname and both baselines  on the 45 homomorphic UDAFs and evaluate their ability to synthesize merge operators. The results of this evaluation are shown in \Cref{fig:synthesis-results}  where the $x$-axis
shows cumulative running time and the $y$-axis represents the percentage of benchmarks solved.  \toolname{} is able to successfully synthesize a merge operator for all but 2 benchmarks, resulting in a success rate of 95.6\%. In contrast, \parsynt{} and CVC5 are able to synthesize merge operator for 42.2\% and 28.9\% of the benchmarks respectively. All of the benchmarks solved by \parsynt{} and CVC5 are also solved by \toolname{}. 
Additionally, we note that the synthesis time for \toolname{} is 6.2 seconds per benchmarks on average. Among benchmarks that can be solved by both \toolname{} and \parsynt{} (resp. {\sc CVC5}), \toolname is $2.2\times$ (resp. $28.3\times$) faster on average.

\begin{figure}[t]
\vspace{-0.2in}
\centering
\begin{minipage}[b]{0.46\linewidth}
    \centering
    \includegraphics[width=\textwidth]{./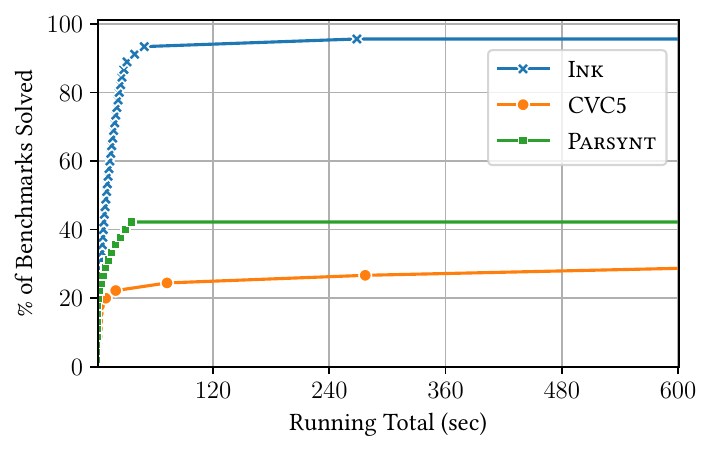}
    \vspace{-0.35in}
    \caption{Comparison between \toolname{} and baselines for merge operator synthesis.}\label{fig:synthesis-results}
\end{minipage}
\hspace{0.04\linewidth} 
\begin{minipage}[b]{0.46\linewidth}
    \centering
    \includegraphics[width=\textwidth]{./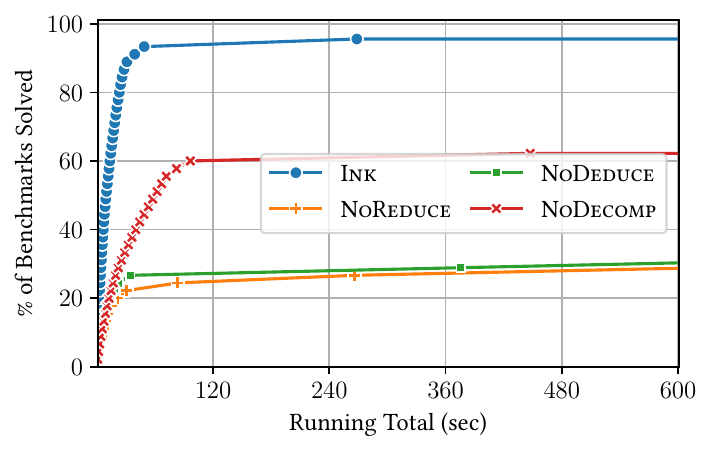}
    \vspace{-0.35in}
    \caption{Comparison between \toolname{} and its ablations for merge operator synthesis.}\label{fig:synthesis-ablation}
\end{minipage}
 \vspace{-0.18in}
\end{figure}

\begin{figure}
\small
\centering
\begin{minted}
[
fontsize=\footnotesize,
escapeinside=||,
numbersep=5pt,
]
{scala}
case class InputData(key: String, value: Int)
case class BufferData(key: String, sum: Int, count: Int)

object AvgTemperatureAggregator extends Aggregator[InputData, BufferData, ...] {
  override def zero: BufferData = BufferData("", 0, 0)

  override def reduce(buffer: BufferData, input: InputData): BufferData = {
    BufferData(input.key, buffer.sum + input.value, buffer.count + 1)
  } }
\end{minted} 
\vspace*{-0.15in}

\caption{A Scala \spark{} UDAF with incorrect user-provided merge (see Figure~\ref{fig:inconsistent-gt-user})}
\label{fig:inconsistent-gt}
\vspace{-0.15in}
\end{figure}

\begin{figure}[t]
\centering
\begin{minipage}[b]{0.46\linewidth}
    \centering
\begin{minted}
[
fontsize=\footnotesize,
escapeinside=||,
numbersep=5pt,
]
{scala}
def merge(b1, b2): BufferData = {
  BufferData(
    b2.key,
    b1.sum + b2.sum,
    b1.count + b2.count) }
\end{minted} 
    \caption{User-provided incorrect merge.}\label{fig:inconsistent-gt-user}
\end{minipage}
\hspace{0.04\linewidth} 
\begin{minipage}[b]{0.46\linewidth}
    \centering
\begin{minted}
[
fontsize=\footnotesize,
escapeinside=||,
numbersep=5pt,
]
{scala}
def merge(b1, b2): BufferData = {
  BufferData(
    if (b2.count > 0) b2.key else b1.key,
    b1.sum + b2.sum,
    b1.count + b2.count) }
\end{minted} 
    \caption{\toolname{}-synthesized solution.}\label{fig:inconsistent-gt-ink}
\end{minipage}
\vspace{-0.25in}
\end{figure}

\bfpara{Qualitative Analysis for \toolname{}.} 
Of the 43 benchmarks \toolname{} solves, we found that 36 of the synthesized programs are semantically equivalent to the developer-provided merge operator.  
In all cases where \toolname{}'s results are semantically equivalent to the user-provided one, the synthesized merge function also has the same time and space complexity.
However, there are also  cases where the synthesized merge function  differs  from its human-written counterpart,   revealing potential bugs. 
For instance, \Cref{fig:inconsistent-gt} shows a UDAF along with its corresponding human-written merge function in \Cref{fig:inconsistent-gt-user}.  Although this merge function appears reasonable, it produces incorrect results when merging an accumulated state with the initial state—that is, the default accumulator produced by the UDAF's initializer. As illustrated in \Cref{fig:incorrect-merge-illu}, the merge function incorrectly overwrites the key field with the default value, even though the initial state should have no effect. This behavior violates the homomorphism property and can lead to incorrect or non-deterministic results during distributed execution, as the final output may depend on how data is partitioned and merged. In contrast, \toolname{} synthesizes the correct merge operator shown in \Cref{fig:inconsistent-gt-ink}, which preserves the intended semantics of the UDAF.

\begin{wrapfigure}{r}{0.45\textwidth}
  \vspace{-0.3in}
  \centering
  \small
  \begin{tcolorbox}[colback=gray!5!white, colframe=gray!50!black, 
                   boxrule=0.5pt, arc=2pt, left=2pt, right=2pt, top=2pt, bottom=2pt,
                   width=0.44\textwidth]
    \vspace{-0.1in}
    \begin{align*}
    \shortintertext{\textbf{Inputs and Intermediate Outputs}}
        \df &= \texttt{[("key", 5)]} \\
        \prog(\df) &= \texttt{("key", 5, 1)} \\
        \prog(\emptylist) &= \texttt{("", 0, 0)} \\
    \shortintertext{\textbf{Mismatched Merge Result}}
        \prog(\df ~\pp~ \emptylist) &= \texttt{(\correctcode{"key"}, 5, 1)} \\
        \texttt{merge}(\prog(\df), \prog(\emptylist)) &= \texttt{(\wrongcode{""}, 5, 1)}
    \end{align*}
  \end{tcolorbox}
  \vspace{-0.15in}
  \caption{Illustration of incorrect merge.}\label{fig:incorrect-merge-illu}
  \vspace{-0.1in}
\end{wrapfigure}


\bfpara{Failure analysis for baselines.} We manually inspected the failure cases for both baselines. For {\sc CVC5}, we observe an inverse correlation between the size of the required merge operator and {\sc CVC5}'s ability to find it. In fact, the 13 benchmarks that {\sc CVC5} can solve are the smallest benchmarks of the 45 in terms of AST size. On the other hand, \parsynt{} is able to solve more of the benchmarks that involve tuples compared to {\sc CVC5} but it struggles with benchmarks where the accumulator state is a collection. Interestingly, there are only four benchmarks that \emph{both} {\sc CVC5} and \parsynt{} can solve, indicating that these tools have different failure modes.

\bfpara{Failure analysis for \toolname{}.} As mentioned earlier, there are two benchmarks that \toolname{} fails to solve; both are due to time-outs when solving a leaf-level synthesis problem. One of them requires synthesizing a non-linear expression as part of the merge operator. Since SMT solvers typically struggle with non-linear operations, failure on this benchmark is not very surprising. The second failure is more surprising---in fact, if we change the order of non-terminals in our SyGuS grammar, then {\sc CVC5} is able to solve the same leaf-level synthesis problem. \\


\idiotbox{RQ1}{Among the 45 homomorphic UDAFs, \toolname{} can successfully synthesize merge operators for 43 of them ($95.6\%$), taking 6.2 seconds per benchmark on average. In comparison, the  two baselines ({\sc CVC5} and \parsynt{})  synthesize merge operators for $28.9$-$42.2\%$ of the same benchmarks.}

\begin{figure}
\small
\centering
\begin{minted}
[
fontsize=\footnotesize,
escapeinside=||,
numbersep=5pt,
]
{scala}
case class ClickEventAggregate(
  userId: Int = 0, eventCount: Int = 0,
  eventCountWithOrderCheckout: Int = 0, departmentsVisited: Set[String] = Set())
case class ClickEvent(userid: Int, productType: String, eventType: String)

object ClickstreamAggregator extends Aggregator[ClickEvent, ClickEventAggregate, ...] {
  override def zero: ClickEventAggregate = ClickEventAggregate()
  override def reduce(accumulator: ClickEventAggregate, value: ClickEvent): ClickEventAggregate = {
    if (value.productType.nonEmpty && value.productType != "N/A") {
      accumulator.eventCount += 1
      val departmentsVisited = accumulator.departmentsVisited
      departmentsVisited.add(value.productType)
      accumulator.departmentsVisited = departmentsVisited
    }
    if (accumulator.userId == 0) { accumulator.userId = value.userid }
    if (value.eventType == "order_checkout") {
      accumulator.eventCountWithOrderCheckout = accumulator.eventCount
    }

    accumulator } }
\end{minted} 
\vspace*{-0.1in}

\caption{A non-homomorphic Scala \spark{} UDAF.}
\label{fig:nonhomomorphic-example}
\end{figure}

\begin{figure}
\small
\centering
\begin{minted}
[
fontsize=\footnotesize,
escapeinside=||,
numbersep=5pt,
]
{scala}
  override def merge(acc1: ClickEventAggregate, acc2: ClickEventAggregate): ClickEventAggregate = {
    acc1.copy(
      eventCount = acc1.eventCount + acc2.eventCount,
      eventCountWithOrderCheckout = acc1.eventCountWithOrderCheckout + acc2.eventCountWithOrderCheckout,
      departmentsVisited = acc1.departmentsVisited ++ acc2.departmentsVisited) }
\end{minted} 
\vspace*{-0.1in}

\caption{The user-provided merge function to \texttt{ClickstreamAggregator} from Figure~\ref{fig:nonhomomorphic-example}}
\label{fig:nonhomomorphic-example-merge}
\end{figure}

\subsection{Evaluation on Non-Homomorphic UDAFs}


Of the 50 benchmarks used in our evaluation, 5 are non-homomorphic. Notably, although the source files for these benchmarks define merge operators, these operators are either buggy or depend on undocumented assumptions—such as the absence of certain values in the dataframe. \toolname{} successfully refutes the existence of a valid merge operator for all five non-homomorphic benchmarks, with a median refutation time of 0.6 seconds and an average of 1.4 seconds.  
{\Cref{fig:nonhomomorphic-example} shows a UDAF that \toolname{} proves to be non-homomorphic. For this UDAF (abbreviated as \texttt{CSA} below), there is in fact \emph{no} \texttt{merge} function that can satisfy the required correctness property:
\[
\forall \df_1, \df_2. \ \ \texttt{CSA}(\df_1 ~\pp{}~ \df_2) = \texttt{merge}( \texttt{CSA}(\df_1),  \texttt{CSA}(\df_2))
\]
 However, the user nevertheless provides the merge function shown in \Cref{fig:nonhomomorphic-example-merge}, which is incorrect as illutsrated through the inputs shown  in \Cref{fig:nonh-inputs}. 
Such an incorrect merge function would lead to inconsistent results across different partitioning of the input. }

\begin{figure}[t]
  \centering
  \small
  \begin{tcolorbox}[colback=gray!5!white, colframe=gray!50!black,
    boxrule=0.5pt, arc=2pt, left=3pt, right=3pt, top=2pt, bottom=2pt,
    width=0.9\linewidth]
    \vspace{-0.1in}
    \begin{align*}
\shortintertext{\textbf{Inputs and Intermediate Outputs}}
        \df_1 &= \texttt{[(5, "product", "add\_cart")]} \\
        \df_2 &= \texttt{[(0, "N/A", "order\_checkout")]} \\
    \texttt{$\prog(\df_1)$} &= \texttt{(5, 1, 0, \{"product"\})} \\
    \texttt{$\prog(\df_2)$} &= \texttt{(0, 0, 0, \{\})} \\
\shortintertext{\textbf{Illustration of incorrect merge}}
    \texttt{$\prog(\df_1 ~\pp~ \df_2)$} &= \texttt{(5, 1, \correctcode{1}, \{"product"\})}\\
        \texttt{merge}(\prog(\df_1), \prog(\df_2)) &= \texttt{(5, 1, \wrongcode{0}, \{"product"\})}
    \end{align*}
  \end{tcolorbox}
  \vspace{-0.8em}
  \caption{Illustration of why the \texttt{merge} function from Figure~\ref{fig:nonhomomorphic-example-merge} is incorrect}
  \label{fig:nonh-inputs}
  \vspace{-1em}
\end{figure}

Among the two baselines, {\sc CVC5} also includes refutation capabilities and, in principle, it can return \textsf{Infeasible} for problems it proves unrealizable. CVC5 refutes 2 out of 5 non-homomorphic UDAFs with an average refutation time of 8.2 seconds.
By contrast, \parsynt{} attempts to construct a \emph{homomorphic lifting} of the input function if it is non-homomorphic. A homomorphic lifting aims to transform a non-homomorphic function into an equivalent form that satisfies the homomorphism property. For the 5 non-homomorphic functions in our benchmark set, \parsynt{} either times out or produces a merge operator with an empty body. \\

\idiotbox{RQ2}{\toolname is able to refute the existence of a merge operator for all five non-homomorphic UDAFs, whereas the baselines refute at most 2.}

\subsection{Ablation Study for Merge Operator Synthesis}

To address our third research question, we consider several ablations of \toolname{}. The first one is {\bf \noreduce{}}, which does not reduce the merge operator synthesis problem to that of finding a normalizer for the accumulator function. In other words, this ablation uses \Cref{eq:df-homo} as the specification instead of \Cref{thm:norm-sound} and is therefore expected to have behavior similar to the {\sc CVC5} baseline. The second ablation is {\bf \nodeduce{}}, which does not use the deductive synthesis rules for normalizer synthesis. In other words, this ablation invokes CVC5 with the normalizer specification; however it does not utilize the {\sc Norm-Coll} and {\sc Norm-Tuple} rules for deductive synthesis. The third ablation is {\bf \nodecomp{}}, which does not perform the type-directed decomposition method described in \Cref{ssec:expr_decompose}. Hence, this ablation can only leverage the {\sc Norm-Tuple} and {\sc Norm-Coll} rules  if the original expression has the required syntactic form.

The results of this ablation study are presented in \Cref{fig:synthesis-ablation}, where the $x$-axis shows cumulative running time and $y$-axis represents the number of benchmarks solved. As expected, \noreduce{} has the same performance as CVC5. \nodeduce{} performs slightly better than \noreduce{}, but it can still only synthesize $33.3\%$  of the benchmarks. Finally, \nodecomp{} is the best-performing ablation but solves $34.9\%$ fewer benchmarks compared to \toolname and takes longer to do so. \\

\idiotbox{RQ3}{Without deductive synthesis and decomposition, the synthesis capability of \toolname degrades considerably, solving $34.9$-$69.8\%$ fewer benchmarks.}

\subsection{Ablation Study for Refutation}

Finally, we perform an ablation study to evaluate the usefulness of our refutation rules. For this experiment, we  consider the five non-homomorphic UDAFs and compare \toolname against {\bf \norefute{}}, which is a version of \toolname that does not utilize the refutation rules in our calculus. This ablation is able to refute 2 of the 5 non-homomorphic UDAFs but takes $1.5\times$ as long on average to do so. \\ 

\idiotbox{RQ4}{The ablation of \toolname that does not leverage the refutation rules fails to refute 3 of the 5 non-homomorphic benchmarks and takes $1.5\times$ as long. }

\section{Limitations} \label{sec:limitations}
In this section, we discuss some of the main limitations of the proposed approach.
First, our problem statement is defined in terms of a functional IR, which means that the UDAF needs to be expressible in this IR. However, our DSL is designed to capture the essential structure of UDAFs as they are implemented in distributed data-processing frameworks like Apache \spark{} and \flink{}, and, empirically, we found that all benchmarks sampled from GitHub can be expressed in our DSL.
Second, our problem statement  assumes that an input UDAF processes a dataframe in a single pass without random access. This assumption aligns with the UDAF frameworks of both traditional databases, such as PostgreSQL~\cite{postgresql}, MySQL~\cite{mysql}, SQL Server~\cite{sqlserver}, and Oracle~\cite{oracle}, and big data systems, such as \spark{} and \flink{}.  

\section{Related Work}\label{sec:related}

\bfpara{Homomorphisms.} 
List homomorphisms have long been a core strategy for transforming sequential operations into parallelizable ones, especially in functional programming~\cite{bird1987,chin1996,cole1993}. Hu et al.~\cite{hu1996,hu1997} introduced systematic techniques for deriving list homomorphisms from recursive functions using methods like tupling and fusion, and they also demonstrated how \emph{almost homomorphic} functions can be converted into fully homomorphic ones to support efficient parallel execution. Gibbons’ third list-homomorphism theorem~\cite{gibbons1996} established that a function qualifies as a list homomorphism if it can be expressed as both a \texttt{foldr} and a \texttt{foldl}, while Mu and Morihata~\cite{pearl} extended this theorem to tree structures.
Building on this foundation, our work explores homomorphisms in the context of UDAFs, but adopts a more local perspective. Whereas prior characterizations, such as Gibbons’ theorem, analyze global properties of the entire function, our calculus shows that an aggregation is a homomorphism if and only if its row-level accumulator admits a normalizer satisfying a generalized commutativity condition. This formulation reduces the synthesis of a global merge operator to the more tractable task of identifying a suitable normalizer for the accumulator. In doing so, it generalizes classical associativity: while associativity emerges as a special case when the accumulator operates uniformly over identical types, our framework allows accumulator state and input-row types to differ, making it applicable to a broader range of real-world aggregations.
Another related work is that of Cutler et al.~\cite{cutler}, which develops a type stream processing calculus that ensures homomorphism by construction. In contrast, our work verifies and synthesizes merge operators for arbitrary UDAFs written in general-purpose languages, where homomorphism is neither assumed nor guaranteed.

\bfpara{Synthesis for parallelization \& incremental computation.}
Program synthesis aims to automatically generate programs that satisfy a given specification, such as input-output examples~\cite{FunctionalSynthesisLambdaSquared,Gulwani2011FlashFill}, logical constraints~\cite{FunctionalSynthesisBurst}, or reference implementations~\cite{revamp,migrator,verified-lifting}. In our case, the merge operator synthesis problem involves both a logical specification (dataframe homomorphism) and a reference implementation (UDAF). While synthesis approaches are generally classified as inductive~\cite{sketch,sygus} or deductive~\cite{deductiveSynthesisMannaWaldinger1980}, our approach combines both, using SyGuS solvers for local synthesis and deductive synthesis to consolidate sub-solutions. 
Several related works, including \Synduce~\cite{Synduce}, \ParSynt~\cite{parsynt_pldi17}, and \AutoLifter~\cite{autolifter_toplas24}, focus on synthesis for parallel computation. \Synduce primarily focuses on recursive procedures, and, like our method, it has the ability to refute the existence of a solution.  \ParSynt and \AutoLifter target divide-and-conquer parallelism through synthesis of join operators.  The focus of \ParSynt is to introduce auxiliary accumulators to lift non-parallelizable loops into parallelizable ones.
  Similar to our approach, \AutoLifter also employs a form of decomposition, but their focus is on decomposing the \emph{specification}. Neither \AutoLifter not \ParSynt  has a mechanism for proving unrealizability, and none of these techniques deal with challenges that arise in the context of synthesizing merge operators for UDAFs. 
Other works, like Superfusion (SuFu)~\cite{SuFu} and MapReduce synthesis~\cite{mapReduceSynthesis}, also emphasize optimizing computations. SuFu eliminates intermediate data structures in functional programs,  and MapReduce synthesis~\cite{mapReduceSynthesis}  uses higher-order sketches to generate mappers and reducers from input-output examples.
Finally, frameworks like {\sc Bellmania}~\cite{DpSynthesis} and {\sc Opera}~\cite{opera} combine inductive and deductive synthesis to achieve incremental computation. {\sc Bellmania} targets dynamic programming algorithms through recursive call generation, while {\sc Opera} transforms batch programs into streaming versions by synthesizing auxiliary states. In contrast, our approach focuses on the  correct synthesis of merge operators for homomorphic dataframe aggregations.

\bfpara{Optimizing user-defined functions.} 
There is a large body of work on optimizing user-defined functions (UDFs). Some methods translate UDFs into SQL operators to leverage traditional relatonal databases and their query optimizers ~\cite{QBS:conf/pldi/CheungSM13,EqSQL:sigmod/EmaniRBS16,Decorrelation:icde/Simhadri0CG014,Froid:pvldb/RamachandraPEHG17,Aggify:sigmod/GuptaP020,zhang2021udf,zhang2023automated}. In the context of big data systems,  prior work has used static analysis to optimize UDFs, enabling predicate pushdown~\cite{AutomaticMR:pvldb/JahaniCR11}, efficient data communication~\cite{Shuffle:nsdi/ZhangZCFGLLLZZ12}, and computation sharing across UDFs~\cite{consolidate}. There is also prior work on optimizing Python-based data science programs by employing predefined rewrite rules~\cite{Dias:journals/pacmmod/BaziotisKM24} and SQL-like  representations~\cite{HiPy}. 
A recent paper uses program synthesis and verification 
in this context to support predicate pushdown in data science programs~\cite{MagicPush:journals/pacmmod/YanLH23}. 
 However, these methods do not address the automatic synthesis of merge functions for homomorphic UDAFs, which is necessary for incremental and parallel execution.

\bfpara{Parallel and incremental execution for data analytics.}
Parallel and incremental execution has long been critical in  data analytics frameworks~~\cite{Spark:nsdi/ZahariaCDDMMFSS12,Dryad,ParallelJoin:sigmod/ManciniKCMA22,Gupta93IVM,Larson86IVM,DBToaster,IDIVM,FIVM,nikolic2016win,LINVIEW,Yannakakis}; however, existing work primarily focuses on optimizing relational operators such as joins and built-in aggregates (e.g., max, average). There is one prior work that  proposes a DSL, inspired 
by digital signal processing, for incremental execution; 
however, its applicability remains limited to this  DSL~\cite{DBSP:pvldb/BudiuCMRT23}.  To the best of our knowledge, no prior technique applies to UDAFs that involve complex logic or data structures.

\bfpara{Dataframe model.} The dataframe model originated in the S Language~\cite{DataframeInS} and gained popularity through R~\cite{RLang}, later becoming central to Pandas~\cite{Pandas}, a widely used Python library for data analysis. It is now integral to systems like Apache Spark~\cite{SparkDF} and Snowflake~\cite{SnowflakeDF}. Unlike the relational model, dataframes have ordered rows and support complex column types, such as lists or arrays. Since user-defined aggregation functions (UDAFs) in many systems process rows sequentially and handle complex types, our homomorphism calculus adopts  dataframes as its underlying model.

\section{Conclusion}

We presented a calculus for reasoning about the homomorphism property of user-defined aggregation functions. The key idea of our calculus is to re-formulate the homomorphism property in terms of a generalized commutativity condition between the merge operator and the accumulator function of the aggregation. Based on this formulation, our proposed method decomposes the original problem into independent sub-problems in a type-directed fashion and uses deductive synthesis to combine the results. Our experimental evaluation on 50 real-world UDAFs shows that our proposed algorithm can solve 96\% of these benchmarks, substantially outperforming state-of-the-art synthesizers as well as its own ablations.
\section{Data-Availability Statement}

Our artifact, including the benchmark suite and a copy of \toolname{}'s output, can be found on Zenodo~\cite{wang_2025_16915406}.
\begin{acks}
We thank the anonymous reviewers for their thoughtful and constructive feedback.
This work was conducted in a research group supported by NSF awards CCF-1762299, CCF-1918889, CNS-1908304, CCF-1901376, CNS-2120696, CCF-2210831, CCF-2319471, CCF-2422130, and CCF-2403211 as well as a DARPA award under agreement HR00112590133.
\end{acks}
%
%
%
%
%
\bibliography{main}

\newpage
\appendix
\section{Proofs}

\subsection{Proof of Non-Existence of Normalizer for \Cref{ex:norm-nonexist}}\label{proof:ex-norm-nonexist}

Consider the following function $f: \mathbb{N} \times \mathbb{N} \rightarrow \mathbb{N}$ which is a right action of $\mathbb{N}$ on $\mathbb{N}$:
\[
f(y,z) = \left \{ 
\begin{array}{ll}
0 & \mathsf{if} \   y = z \\
z & \mathsf{otherwise}
\end{array}
\right .
\]
We prove that a normalizer of $f$ cannot exist. For contradiction, suppose a normalizer $h$ of $f$ existed. Then, it would need to satisfy the following axiom for all $x, y, z$:
\[
h(x, f(y, z)) = f(h(x, y), z)
\]
Since this equality needs to hold for all $x, y, z$, consider the scenario where $y \neq z$. Then, we get $
h(x, z) = f(h(x, y), z)$.
Using the definition of $f$, we obtain:
\[
h(x, z) =  
\left \{ 
\begin{array}{ll}
0 & \mathsf{if} \   h(x,y) = z \\
z & \mathsf{if} \   h(x,y) \neq z 
\end{array}
\right .
\]
However, this violates the definition of a function. To see why, let $x = a, y= b, h(a, b) = c$. We get:
\[
h(a, z) =  
\left \{ 
\begin{array}{ll}
0 & \mathsf{if} \    c = z \\
z & \mathsf{if} \   c \neq z 
\end{array}
\right .
\]
Clearly, this implies $h(a, c) = 0$.
We consider two cases:
\begin{itemize}
    \item $c=0$. Now, consider the term $h(a, h(a, d))$ for any $d \neq 0$. From the second part of the definition, we get $ h(a, d) = d $ (since $c=0$ and $d \neq 0)$; hence, $h(a, h(a, d)) = h(a, d) = d$ (since $d \neq 0,  c= 0)$. But now since $h(a, d) = d$, the first part of the definition applies to $h(a, h(a,d))$, so we get $h(a, h(a, d)) = 0$. Since we derived both $h(a, h(a, d)) \neq 0 $ (since $d \neq 0$) and $h(a, h(a, d)) = 0$, we get a contradiction.
    \item $c \neq 0$. From earlier, we have $h(a,c) = 0$; thus, $c \neq h(a,c)$. Consider the term $h(a, h(a, c)) $. Since $h(a, c) = 0$, from the second part of the definition, we get $h(a, c) = c$, so $h(a, h(a, c)) = h(a, c) = 0$.  But since $h(a, c) =0$, $h(a, c) = c$, and yet $c \neq 0$, this contradicts our assumption that $h$ is a function. 
    \end{itemize}

\subsection{Proof of \Cref{thm:norm-sound}}

\begin{theorem}
Let $\dfagg = \abslambda{x}{\agg(\udaf, \init, x)}$  be a program where $\udaf: \tau_r \times \tau \rightarrow \tau_r$ is a right action of $\tau$ on $\tau_r$, and let $h$ be a normalizer of $f$ satisfying $\forall s \in \tau_r. h(s, \init) = s$. Then, $\dfagg$ is a  homomorphism.
\end{theorem}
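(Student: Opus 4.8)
The plan is to show that the normalizer $h$ itself serves as the merge operator $\dfmerge$ witnessing that $\dfagg$ is a homomorphism. Concretely, I would prove that for all dataframes $\df_1, \df_2$ (on which $\dfconcat$ is defined),
\[
\agg(\udaf, \init, \df_1 \dfconcat \df_2) = h\bigl(\agg(\udaf, \init, \df_1),\, \agg(\udaf, \init, \df_2)\bigr).
\]
Since $\agg(\udaf, \init, \df)$ is computed by folding $\udaf$ over the rows of $\df$ starting from $\init$, and $\df_1 \dfconcat \df_2$ has rows $\dfrows_1$ followed by $\dfrows_2$, the left-hand side equals the result of folding $\udaf$ over $\dfrows_2$ starting from the state $s_1 := \agg(\udaf, \init, \df_1)$. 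So the crux is a lemma: for any state $s$ and any list of rows $[y_1, \ldots, y_m]$,
\[
\foldl(\udaf, s, [y_1, \ldots, y_m]) \;=\; h\bigl(s,\, \foldl(\udaf, \init, [y_1, \ldots, y_m])\bigr).
\]

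First I would establish this lemma by induction on $m$, the number of rows in the second dataframe. In the base case $m = 0$, the left side is $s$ and the right side is $h(s, \init)$, which equals $s$ by the side condition $\forall s.\, h(s,\init) = s$. For the inductive step, suppose the claim holds for a list $\ell = [y_1,\ldots,y_m]$; consider $\ell' = \ell \listappend [y_{m+1}]$. Folding over $\ell'$ from $s$ gives $\udaf\bigl(\foldl(\udaf, s, \ell),\, y_{m+1}\bigr)$, which by the induction hypothesis equals $\udaf\bigl(h(s, \foldl(\udaf, \init, \ell)),\, y_{m+1}\bigr)$. Now I apply the generalized commutativity of $h$ with $\udaf$ (Definition \ref{def:commute}, with $\udaf$ the right action and $h$ the normalizer/left action): $\udaf(h(s, b), y_{m+1}) = h(s,\, \udaf(b, y_{m+1}))$ with $b = \foldl(\udaf, \init, \ell)$. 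This rewrites the expression to $h\bigl(s,\, \udaf(\foldl(\udaf,\init,\ell), y_{m+1})\bigr) = h\bigl(s,\, \foldl(\udaf, \init, \ell')\bigr)$, completing the induction.

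Finally, I would instantiate the lemma with $s = \agg(\udaf, \init, \df_1)$ and the row list being $\dfrows_2$, and observe that the semantics of $\agg$ over a concatenation unfolds the fold over $\dfrows_1$ then $\dfrows_2$, so $\agg(\udaf, \init, \df_1 \dfconcat \df_2) = \foldl(\udaf,\, \agg(\udaf,\init,\df_1),\, \dfrows_2)$; the lemma then yields exactly $h(\agg(\udaf,\init,\df_1), \agg(\udaf,\init,\df_2))$. Taking $\dfmerge = h$ discharges \Cref{eq:df-homo}. I expect the main obstacle to be purely bookkeeping rather than conceptual: namely, carefully relating the denotational semantics of $\agg$ on $\df_1 \dfconcat \df_2$ to a two-stage left fold, which requires a small auxiliary fact that $\foldl$ distributes over list append in the first-argument/accumulator sense. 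Everything else is a direct application of the commutativity law and the initializer side condition, used exactly once each in the inductive step and the base case respectively.
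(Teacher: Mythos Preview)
Your proposal is correct and matches the paper's proof essentially step for step: both argue by induction on the number of rows of the second dataframe, use the initializer side condition $h(s,\init)=s$ in the base case, and apply the commutativity law once in the inductive step to push $f$ past $h$. The only cosmetic difference is that you factor out a standalone lemma about $\foldl$ from an arbitrary starting state $s$ and then instantiate it, whereas the paper inducts directly on the homomorphism equation; the underlying argument is identical.
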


Let $\dfagg = \abslambda{x}{\agg(\udaf, \init, x)}$ be an arbitrary program where $\udaf: \tau_r \times \tau \rightarrow \tau_r$. 
Suppose that $\norm$ is a normalizer of of $\udaf$ satisfying $\forall s \in \type_r. \norm(s, \init) = s$. We show that $\dfagg$ is a dataframe homomorphism with merge operator $\norm$.

Let $X$ and $Y$ be arbitrary dataframes. We induct on the number of rows $m$ of the dataframe $Y$.

\bfpara{Base Case: $m=0$.}
By definition, we have
\begin{equation*}
    \begin{array}{rllr}
         \agg(f, \init, X \dfconcat Y) & = &\agg(f, \init, X \dfconcat \emptylist) &\\ 
                    & = &\agg(f, \init, X) &\\ 
                    & = & h(\agg(f, \init, X   ), \init) & \sidecond{Since $h(s, \init) = s$}\\
                    & = & h(\agg(f, \init, X), \agg(f, \init, \emptylist)) & \sidecond{By definition of $\agg$} 
    \end{array}
\end{equation*}

\bfpara{Inductive Step.} Let $\dfrows_Y = [y_1,...,y_m]$ be the rows of $Y$. Assume that for $X, Y$,
$$\agg(f, \init, X \dfconcat Y) = h(\agg(f, \init, X), \agg(f, \init, Y))$$

Consider $Y' = Y \dfconcat [y_{m+1}] = [y_1, ..., y_m, y_{m+1}],$ where $y_{m+1}$ is an arbitrary row. Then
\begin{equation*}
    \begin{array}{rlr}
        \agg(f, \init, \hspace*{-0.1in}& X \dfconcat Y')  =  \agg(f, \init, X \dfconcat [y_1, ..., y_m, y_{m+1}]) &\\
       & =  \agg(f, \init, X \dfconcat Y \dfconcat [y_{m+1}]) &  \sidecond{By definition of $\dfconcat$}   \\ 
       & =  f(\agg(f, \init, X\dfconcat Y), y_{m+1}) &  \sidecond{By definition of $\agg$}    \\ 
       & =  f(h(\agg(f, \init, X), \agg(f, \init, Y)), y_{m+1}) & \sidecond{By IH}       \\
       & =  h(\agg(f, \init, X), f(\agg(f, \init, Y), y_{m+1})) & \sidecond{By commutativity} \\
       & =  h(\agg(f, \init, X), \agg(f, \init, Y')) &   \sidecond{By definition of $\agg$}     
    \end{array}
\end{equation*}

\subsection{Proof of \Cref{thm:norm-complete}} \label{proof:thm-norm-complete}

Suppose that $\dfagg = \lambda x. \agg(f, \init, x)$ is a surjective dataframe homomorphism, and $f: \type_r \times \type \rightarrow \type_r$ is a user-defined function. Since $\dfagg$ is a dataframe homomorphism, there exists a merge operator $\dfmerge$ by \Cref{def:df-concat}. Let $\norm(x, y) = x \dfmerge y$ be the desired normalizer.

\bfpara{$\norm{}$ is a normalizer of $\udaf$.}
Suppose $a, b \in \type_r$ and $c \in \type$. Since $\dfagg$ is a surjective, there exists dataframe $X$ (resp. $Y$) such that $\dfagg(X) = a$ (resp. $\dfagg(Y) = b$). Note that
\begin{align*}
    \udaf(\norm(\dfagg(X), \dfagg(Y)), c) &= \udaf(\dfagg(X \dfconcat Y), c) = \dfagg(X \dfconcat Y \dfconcat [c]) \\
    \norm(\dfagg(X), \udaf(\dfagg(Y), c)) &= \norm(\dfagg(X), \dfagg(Y \dfconcat [c])) = \dfagg(X \dfconcat Y \dfconcat [c]),
\end{align*}
which implies $\forall a, b, c. \udaf(\norm(a, b), c) = \norm(a, \udaf(b, c))$.

\subsection{Proof of \Cref{thm:norm-unique}} \label{proof:thm-norm-unique}
Let $\dfagg = \lambda x. \agg(f, \init, x)$ be an arbitrary surjective dataframe homomorphism where $f: \type_r \times \type \rightarrow \type_r$ is a user-defined function.

\bfpara{Existence of normalizer $h$ of $f$ satisfying  $\forall s \in \tau_r. h(s, \init) = s$.}
Since $\dfagg$ is a dataframe homomorphism, there exists a merge operator $\dfmerge$ by \Cref{def:df-concat}. Let $\norm(x, y) = x \dfmerge y$ be the desired normalizer.

Suppose $s \in \type_r$. Since $\dfagg$ is surjective, there exists $X$ such that $\dfagg(X) = s$. Then,
\begin{align*}
    \norm(s, \init) = \norm(\dfagg(X), \dfagg(\emptylist)) = \dfagg(X \dfconcat \emptylist) = \dfagg(X) = s.
\end{align*}

\bfpara{$\norm$ satisfying $\forall s \in \tau_r. h(s, \init) = s$ is unique.}
Assume by contradiction that there are two semantically different normalizers of $\udaf$, $h_1$ and $h_2$, such that $\forall s \in \tau_r. h_1(s, \init) = s \land h_2(s, \init) = s$.
Then, there exists $a, b \in \type_r$ such that $h_1(a, b) \neq h_2(a, b)$. Note that there exists dataframe $X$ and $Y$ such that $a = \dfagg(X)$ and $b = \dfagg(Y)$ because of the surjectivity condition $\type_r = \range{\dfagg}$.

By \Cref{thm:norm-sound}, $\dfagg$ is a data frame homomorphism and both $h_1$ and $h_2$ are valid merge operator of $\dfagg$. However, we have
\begin{align*}
    h_1(a, b) = h_1(\dfagg(X), \dfagg(Y)) = \dfagg(X \dfconcat Y) = h_2(\dfagg(X), \dfagg(Y)) = h_2(a, b),
\end{align*}
which is a contradiction.

\subsection{Proof of \Cref{thm:soundness_completeness_hom_synthesis_rules}}

\begin{lemma}\label{lem:soundness_and_completeness_horiz}
Let $\dftrans$ be a data transformation expression with free variable $x$. Then, $\dftrans[(x_1 \dfconcat x_2)/x] = \dftrans[x_1/x] \dfconcat \dftrans[x_2/x]$ if and only if $\dftrans[(x_1 \dfconcat x_2)/x] \horizdecompto \dftrans[x_1/x] \dfconcat \dftrans[x_2/x]$
\end{lemma}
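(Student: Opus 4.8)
The key observation is that, for the data-transformation grammar $\dftrans ::= x \mid \proj(\udaf, \dftrans) \mid \select(\udaf, \dftrans)$, \emph{both} sides of the biconditional in fact hold unconditionally: the left-hand side (the semantic identity) is exactly Observation~\#1, and the right-hand side holds because the two data-transformation rules {\sc Var} and {\sc Rel} of \Cref{fig:horizontal} can always be used to derive the claimed judgement. So the plan is to establish the two implications separately by structural induction on $\dftrans$ (on the \emph{unsubstituted} transformation, pushing the substitution through at each inductive step, which also cleanly resolves the minor informality of the {\sc Var} rule when the substituted form $x_1 \dfconcat x_2$ is not literally a variable). Concretely, $(\Rightarrow)$ amounts to \emph{totality/completeness} of $\horizdecompto$ on data transformations, and $(\Leftarrow)$ amounts to \emph{soundness} of each rule.

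\textbf{($\Rightarrow$), totality of $\horizdecompto$.} By Observation~\#1 the premise always holds, so it suffices to show that the judgement $\dftrans[(x_1\dfconcat x_2)/x] \horizdecompto \dftrans[x_1/x]\dfconcat\dftrans[x_2/x]$ is always derivable. For the base case $\dftrans = x$, both sides equal $x_1 \dfconcat x_2$ and the derivation is an instance of {\sc Var}. For the inductive case $\dftrans = \alpha(\udaf, \dftrans')$ with $\alpha \in \set{\proj, \select}$, I would first note that $x \notin \fv(\udaf)$: in the DSL of \Cref{fig:dsl-syntax} the free variable $x$ of a data transformation ranges over the input dataframe, whereas the UDFs inside $\proj/\select$ operate on individual rows, so substitution commutes with the $\alpha$-context, giving $\alpha(\udaf, \dftrans')[(x_1\dfconcat x_2)/x] = \alpha(\udaf, \dftrans'[(x_1\dfconcat x_2)/x])$. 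Applying the induction hypothesis to $\dftrans'$ yields $\dftrans'[(x_1\dfconcat x_2)/x] \horizdecompto \dftrans'[x_1/x]\dfconcat\dftrans'[x_2/x]$, and then a single instance of {\sc Rel} (with $\dftrans_1 = \dftrans'[x_1/x]$ and $\dftrans_2 = \dftrans'[x_2/x]$) produces exactly the desired conclusion.

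\textbf{($\Leftarrow$), soundness of $\horizdecompto$.} Here the goal is to show that whenever $\dftrans[(x_1\dfconcat x_2)/x] \horizdecompto D$ is derivable, $\dftrans[(x_1\dfconcat x_2)/x]$ and $D$ are semantically equal — which for the judgement of the lemma is precisely the semantic identity. Again by induction on $\dftrans$: the base case $\dftrans = x$ is immediate, and the only non-routine obligation arises in the {\sc Rel} case, which reduces (after applying the induction hypothesis to $\dftrans'$) to the distributivity laws
\[
\alpha(\udaf, \df_1 \dfconcat \df_2) \;=\; \alpha(\udaf, \df_1) \dfconcat \alpha(\udaf, \df_2), \qquad \alpha \in \set{\proj, \select}.
\]
I expect this to be the main obstacle, and would discharge it by unfolding the denotational semantics of $\proj$ and $\select$: both operators are row-local and order-preserving — projection rewrites each row's column vector independently of the other rows, and selection deletes exactly the rows failing a per-row predicate while keeping the rest in order — and $\dfconcat$ (see \Cref{def:df-concat}) appends the row list of $\df_2$ after that of $\df_1$. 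Hence applying $\alpha(\udaf, \cdot)$ to the concatenated row list yields the concatenation of the results on $\df_1$ and on $\df_2$, which closes the {\sc Rel} case and completes the proof.
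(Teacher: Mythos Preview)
Your proposal is correct and essentially mirrors the paper's proof: both directions are handled by structural induction on $\dftrans$, with the same observation that each side of the biconditional in fact holds unconditionally (the paper phrases the backward direction as ``proving a stronger variant'' that the semantic equality holds without assumption, which is exactly your appeal to Observation~\#1). Your write-up is slightly more explicit about why $x\notin\fv(\udaf)$ and about the row-local distributivity of $\proj/\select$ over $\dfconcat$, which the paper simply attributes to ``the definition of $\project$ and $\select$''.
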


\begin{proof}
We first prove the forward direction of the lemma above by structural induction over the syntax of $\dftrans$. Assume that $\dftrans[(x_1 \dfconcat x_2)/x] = \dftrans[x_1/x] \dfconcat \dftrans[x_2/x]$.

\bfpara{Base case: $\dftrans = x$.}
Trivially, applying rules \textsc{Var} and $\dfconcat$ gives $\dftrans[(x_1 \dfconcat x_2)/x] \horizdecompto \dftrans[x_1/x] \dfconcat \dftrans[x_2/x]$.

\bfpara{Inductive step: $\dftrans = \alpha(f, \dftrans')$.}
Given our induction hypothesis that $\dftrans'[(x_1 \dfconcat x_2)/x] \horizdecompto \dftrans'[x_1/x] \dfconcat \dftrans'[x_2/x]$, applying \textsc{Rel} yields
\[
\alpha(f, \dftrans'[(x_1 \dfconcat x_2)/x]) \horizdecompto \alpha(f, \dftrans'[x_1 / x]) \dfconcat \alpha(f, \dftrans'[x_2 / x]).
\]

Then, we prove the backward direction by proving a stronger variant: we argue that $\dftrans[(x_1 \dfconcat x_2)/x] = \dftrans[x_1/x] \dfconcat \dftrans[x_2/x]$ holds without assumption. We give a proof by structural induction over the syntax of $\dftrans$.

\bfpara{Base case: $\dftrans = x$.}
Then $\dftrans[(x_1 \dfconcat x_2)/x] = x_1 \dfconcat x_2 = \dftrans[x_1/x] \dfconcat \dftrans[x_2/x]$.

\bfpara{Inductive step: $\dftrans = \alpha(f, \dftrans')$.}
By IH and the definition of $\project$ and $\select$,
\begin{align*}
    \dftrans[(x_1 \dfconcat x_2)/x] = \alpha(f, \dftrans'[(x_1 \dfconcat x_2)/x]) = \alpha(f, \dftrans'[x_1/x]) \dfconcat \alpha(f, \dftrans'[x_2/x]) = \dftrans[x_1/x] \dfconcat \dftrans[x_2/x].
\end{align*}
\end{proof}
We now give the proof of \Cref{thm:soundness_completeness_hom_synthesis_rules}.

\bfpara{Forward direction.}
Assume $\dfagg = \abslambda{x}{\agg(\udaf, \init, \dftrans)} \horizdecompto h$ is a dataframe homomorphism producing a value of type $\type_r$ with merge operator $h$. By \Cref{thm:norm-complete}, $h$ is a normalizer, i.e., $\mathsf{Norm}(\type_r, \udaf, \init) = h$.
By \Cref{lem:soundness_and_completeness_horiz}, $\dftrans[(x_1\dfconcat x_2)/x] \horizdecompto \dftrans[x_1/x] \dfconcat \dftrans[x_2/x]$.
Finally, we apply the \textsc{Agg} and \textsc{Top} rule to obtain $\dfagg \horizdecompto h$.

\bfpara{Backward direction.}
Assume $\dfagg = \abslambda{x}{\agg(\udaf, \init, \dftrans)} \horizdecompto h$ for some function $h$. Let $x_1$ and $x_2$ be arbitrary dataframes. By \Cref{lem:soundness_and_completeness_horiz}, $\dftrans[(x_1\dfconcat x_2)/x] = \dftrans[x_1/x] \dfconcat \dftrans[x_2/x]$. Hence,
\begin{align*}
    \dfagg(x_1 \dfconcat x_2) &= \agg(\udaf, \init, \dftrans[(x_1\dfconcat x_2)/x]) \\
    &= \agg(\udaf, \init, \dftrans[x_1/x] \dfconcat \dftrans[x_2/x]).
\end{align*}
Also, note that $h$ is the merge operator of the program $\dfagg' = \abslambda{x}{\agg(\udaf, \init, x)}$ by \Cref{thm:norm-sound}, and we can derive:
\begin{align*}
    \dfagg(x_1 \dfconcat x_2) &= \agg(\udaf, \init, \dftrans[(x_1\dfconcat x_2)/x]) \\
    &= \agg(\udaf, \init, \dftrans[x_1/x] \dfconcat \dftrans[x_2/x]) \\
    &= h(\agg(\udaf, \init, \dftrans[x_1/x]), \agg(\udaf, \init, \dftrans[x_2/x])) \\
    &= h(\dfagg(x_1), \dfagg(x_2)).
\end{align*}
Thus, $\dfagg$ is a dataframe homomorphism with merge operator $h$.

\subsection{Proof of \Cref{thm:unrealizability}}

\bfpara{Proof of property (1).}
Assume there is a UDAF $\udaf: \tau_r \times \tau \to \tau_r$ and initializer $\init$ with a suitable normalizer $h$, i.e., satisfying $\forall s \in \type_r. \norm(s, \init) = s$, that violates property (1).
Let $(s, x)$ such that $f(\init, x) = \init$ and $f(s, x) \neq s$.
We have
\begin{align*}
    \udaf(s, x) = \udaf(\norm(s, \init), x) = \norm(s, \udaf(\init, x)) = \norm(s, \init) = s,
\end{align*}
which is a contradiction.

\bfpara{Proof of property (2).}
Assume there is a UDAF $\udaf: \tau_r \times \tau \to \tau_r$ and initializer $\init$ with a suitable normalizer $h$, i.e., satisfying $\forall s \in \type_r. \norm(s, \init) = s$, that violates property (2).
Let $(s, x)$ such that $f(\init, x) = f(\init, x')$ but $f(s, x) \neq f(s, x')$.
We have
\begin{align*}
    \udaf(s, x) &= \udaf(\norm(s, \init), x) = \norm(s, \udaf(\init, x)) \\
    &= \norm(s, \udaf(\init, x')) = \udaf(\norm(s, \init), x') \\
    &= \udaf(s, x'),
\end{align*}
which is a contradiction.

\subsection{Proof of \Cref{thm:norm-rules-soundness}}

\subsubsection{Forward direction: Completeness}

Let $\udaf : \type_r \rightarrow \type \rightarrow \type_r$ be an arbitrary UDAF. We show that if $\norm$ is a normalizer of $\udaf$, then we have $\normalizes{\udaf}{\init}{\norm}$. Since $\norm$ is a normalizer, it satisfies $\Phi_1$ and $\Phi_2$ used in the \textsc{Norm-Synth} rule. Thus, we resort to the completeness of SyGuS solver used in \textsf{Solve}.

\subsubsection{Backward direction: Soundness}

We give a proof by structural induction on the rules in \Cref{fig:witness_construction}.
Let $\udaf : \type_r \rightarrow \type \rightarrow \type_r$ be an arbitrary UDAF. We show that if $\normalizes{\udaf}{\init}{\norm}$, then $\norm$ is a normalizer of $f$ satisfying  $\forall s. ~h(s, \init) = s$.

\bfpara{Base case: \textsc{Norm-Synth}.}
Since $\Phi_1 \land \Phi_2$ precisely encodes the commutativity constraint of normalizers and $\forall s. ~h(s, \init) = s$, the 
condition mentioned in the theorem, we resort to the soundness of SyGuS solver used in \textsf{Solve}.

\bfpara{Inductive step: \textsc{Norm-Tuple}.}
Assume that for every $i$, we have $\normalizes{\udaf_i}{\typeselector_i(\init)}{h_i}$ and $h_i$ is a normalizer of $f_i$ satisfying $\forall s. ~h_i(s, \typeselector_i(\init)) = s$. Let $s, a, b \in \type_r$ and $x \in \type$.
Then, we have
\begin{align*}
    h(s, \init) &= (h_1(\typeselector_1(s), \typeselector_1(\init)), \ldots, h_n(\typeselector_n(s), \typeselector_n(\init))) \\
    &= (\typeselector_1(s), \ldots, \typeselector_n(s)) = s,
\end{align*}
and
\begin{align*}
    h(a, f(b, x)) &= h(a, (f_1(\typeselector_1(b), x), \ldots, f_n(\typeselector_n(b), x))) \\
    &= \left( h_1(\typeselector_1(a), f_1(\typeselector_1(b), x)), \ldots, h_n(\typeselector_n(a), f_n(\typeselector_n(b), x)) \right) \\
    &= \left( f_1(h_1(\typeselector_1(a), \typeselector_1(b)), x), \ldots, f_n(h_n(\typeselector_n(a), \typeselector_n(b)), x) \right) \\
    &= f((h_1(\typeselector_1(a), \typeselector_1(b)), \ldots, h_n(\typeselector_n(a), \typeselector_n(b))), x) \\
    &= f(h(a, b), x).
\end{align*}

\bfpara{Inductive step: \textsc{Norm-Coll}.}
Assume that $\normalizes{\udaf'}{\default{\type}}{h}$ and $h$ is a normalizer of $\udaf'$ satisfying $\forall s. ~h_i(s, \default{\type}) = s$. Let $s, a, b \in \typecollection{\type}$ and $x \in \type$. Let $\init = \default{\typecollection{\type}}$.
Then, we have
\begin{align*}
    h(s, \init) &= \convert_{\typecollection{\type}}(\set{(k, h(v, v_i)) ~|~ k, v, v_i \in \convert_\textsf{Map}(s) \collectionjoin \convert_\textsf{Map}(\init)}) \\
    &= \convert_{\typecollection{\type}}(\set{(k, h(v, \default{\type})) ~|~ k, v, v_i \in \convert_\textsf{Map}(s) \collectionjoin \convert_\textsf{Map}(\init)}) \\
    &= \convert_{\typecollection{\type}}(\set{(k, v) ~|~ \convert_\textsf{Map}(s) \collectionjoin \convert_\textsf{Map}(\init)}) = s,
\end{align*}
and
\begin{align*}
h(a, f(b, x)) &= \convert_{\typecollection{\type}}(\set{(k, h(v_1, v_2)) ~|~ k, v_1, v_2 \in \convert_\textsf{Map}(a) \collectionjoin \convert_\textsf{Map}(f(b, x))}) \\
&= \convert_{\typecollection{\type}}(\set{(k, h(v_1, f'(v_b, x))) ~|~ p(v_b), ~k, v_1, v_b \in \convert_\textsf{Map}(a) \collectionjoin \convert_\textsf{Map}(b)}) \\
&= \convert_{\typecollection{\type}}(\set{(k, f'(h(v_1, v_b), x)) ~|~ p(v_b), ~k, v_1, v_b \in \convert_\textsf{Map}(a) \collectionjoin \convert_\textsf{Map}(b)}) \\
&= f(h(a, b), x).
\end{align*}

\subsection{Proof of \Cref{thm:sound-simplification}}

\begin{lemma}\label{lemma:pp-soundness}
    Suppose $(x: \type, \typedestructor, \decomp) \propagateparamto \decomp'$. If $\decomp \decomptoexpr E$ and $\decomp' \decomptoexpr E'$, then $E'$ and $\abslambda{x}{E[\typedestructor(x)/x]}$ is semantically equivalent. 
\end{lemma}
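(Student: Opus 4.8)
The plan is to prove \Cref{lemma:pp-soundness} by structural induction on the derivation of $(x:\type, \typedestructor, \decomp) \propagateparamto \decomp'$ using the rules of \Cref{fig:pp}. The guiding invariant is that a triple $(x:\type, \typedestructor, \decomp)$ should ``denote'' the function $\abslambda{x}{E[\typedestructor(x)/x]}$, where $\decomp \decomptoexpr E$, and that the $\propagateparamto$ rules merely restructure how destructors are threaded into sub-expressions without changing this denotation; the lemma is exactly the statement that this restructuring is faithful. Throughout, I would rely on the fact that $\decomptoexpr$ is functional, so that $E$ and $E'$ are uniquely determined by $\decomp$ and $\decomp'$.

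The two base cases, \textsc{Expr} and \textsc{Function}, are immediate: in both, $\decomp'$ has the form $\vdcomp{\abslambda{x}{\decomp}}{\typedestructor}$, so the \textsc{Abs-Base} (resp.\ \textsc{Abs-Ind}) rule of \Cref{fig:convert} gives $E' = \abslambda{x}{E[\typedestructor(x)/x]}$ verbatim, where $\decomp \decomptoexpr E$ (with $\decomp = E$ in the \textsc{Expr} case and $\decomp = \isf$ with $\isf \decomptoexpr E$ in the \textsc{Function} case). Before the inductive cases I would establish an auxiliary lemma stating that $\decomptoexpr$ commutes with the substitutions performed by the inductive rules: if $\decomp \decomptoexpr E$ and $v$ is fresh, then $\decomp[v/\typeselector_i(x)] \decomptoexpr E[v/\typeselector_i(x)]$, and analogously for replacing a free collection variable by a fresh one. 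This goes through by a routine induction on $\decomp$ and the conversion rules; the only slightly subtle point is pushing the substitution through the $\bigcup_i\params(\cdot)$ computation in the \textsc{Tuple} and \textsc{C1}/\textsc{C2} rules, which is harmless because the introduced and substituted variables are fresh and hence disjoint from every $\params$ set. With this lemma in hand, the cases \textsc{Tuple-Base} and \textsc{C-Base} follow by applying the induction hypothesis to each sub-derivation and observing that $[\typedestructor(x)/x]$ distributes over the tuple constructor $\bounded{\vdfunc}{\cdot}$ and over the $\map$/$\filter$ produced by the collection-conversion rules, while the argument lists line up on both sides because $x$ has base type and the destructor is propagated unchanged.

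The main obstacle is the \textsc{Tuple-Inductive} case (and, in a milder form, \textsc{C-Ind}), where an argument $x$ of tuple type is ``flattened'' into fresh component variables $v_i$ and the $i$-th sub-derivation is carried out with destructor $\typeselector_i \funccomp \typedestructor$ rather than $\typedestructor$. Here I would: (1) use the side condition $x \notin \fv(\bar\decomp_i)$ to conclude that $\decomp_i$, hence its conversion $E_i$, mentions $x$ only through the subterm $\typeselector_i(x)$; (2) invoke the substitution lemma to compute the conversion of $\bar\decomp_i = \decomp_i[v_i/\typeselector_i(x)]$ as $E_i[v_i/\typeselector_i(x)]$; (3) apply the induction hypothesis to $(v_i:\type_i, \typeselector_i\funccomp\typedestructor, \bar\decomp_i)\propagateparamto\decomp_i'$ to obtain $\decomp_i' \decomptoexpr E_i'$ with $E_i' \equiv \abslambda{v_i}{E_i[v_i/\typeselector_i(x)][(\typeselector_i\funccomp\typedestructor)(v_i)/v_i]}$; and (4) reassemble via the \textsc{Tuple} rule of \Cref{fig:convert} and evaluate the result on an arbitrary tuple argument, checking that binding each $v_i$ to the $i$-th projection of that argument reproduces exactly the effect of the single outer substitution $[\typedestructor(x)/x]$ in the target expression. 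The delicate bookkeeping is this last reconciliation — matching the flattened chain of projections and destructor compositions against the one outer substitution — and it is where I expect to spend the bulk of the effort. The \textsc{C-Ind} case is handled by the same recipe, with the role of the component variables played by the newly bound collection variable $Y$ and the free/bound status flip realized by the move from a \textsc{C1}-shaped conversion (collection variable free) to a \textsc{C2}-shaped conversion (collection supplied as $\typedestructor(Y)$); there the equivalence reduces to the triviality that $\map(f,\filter(p,X))$ with $X$ instantiated to $\typedestructor(Y)$ is literally $\map(f,\filter(p,\typedestructor(Y)))$. Once \Cref{lemma:pp-soundness} is in place, \Cref{thm:sound-simplification} follows by a straightforward induction on the $\vertdecompto$ derivation, using this lemma for the \textsc{Lam-Ind} rule.
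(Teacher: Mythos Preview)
Your proposal is correct and takes essentially the same approach as the paper's proof: structural induction on the derivation of $(x:\type,\typedestructor,\decomp)\propagateparamto\decomp'$, with the \textsc{Expr}/\textsc{Function} cases discharged immediately via the \textsc{Abs} conversion rules and the recursive cases handled by pushing the substitution $[\typedestructor(x)/x]$ through the tuple and $\map/\filter$ constructors. Your treatment is in fact more rigorous than the paper's: you isolate the substitution lemma ($\decomp\decomptoexpr E$ implies $\decomp[v/\typeselector_i(x)]\decomptoexpr E[v/\typeselector_i(x)]$) that the paper uses tacitly in the \textsc{Tuple-Inductive} case, and you correctly flag the reconciliation between the per-component destructor chains $\typeselector_i\funccomp\typedestructor$ and the single outer substitution as the place where the bookkeeping is delicate---the paper simply asserts the corresponding equality in one line.
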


\begin{proof}
We proceed the proof by structural induction on the rules in \Cref{fig:pp}.

\bfpara{Base case: \textsc{Expr}.}
Suppose $(x: \type, \typedestructor, \decomp) \propagateparamto \vdcomp{\abslambda{(x: \type)}{E}}{\typedestructor}$.
We apply the \textsc{Abs-Base} rule to get
$\vdcomp{\abslambda{(x: \type)}{E}}{\typedestructor} \decomptoexpr \abslambda{x}{E[\typedestructor(x)/x]}$.

\bfpara{Inductive step: \textsc{Function}.}
Suppose $(x: \type, \typedestructor, \isf) \propagateparamto \vdcomp{\abslambda{(x: \type)}{\isf}}{\typedestructor}$.
We apply the \textsc{Abs-Ind} rule, and by IH, we have
\(
\vdcomp{\abslambda{(x: \type)}{\isf}}{\typedestructor} \decomptoexpr \abslambda{x}{E[\typedestructor(x)/x]},
\)
where $\isf \decomptoexpr E$.

\bfpara{Inductive step: \textsc{Tuple-Base}.}
Suppose
\(
(x: \typebase, \typedestructor, \bounded{\vdfunc}{\decomp_1, \ldots, \decomp_n})
\propagateparamto
\bounded{\vdfunc}{\decomp_1', \ldots, \decomp_n'}.
\)
Assume that for every $i$, $\decomp_i \decomptoexpr E_i$ and $\decomp_i' \decomptoexpr E_i' = \abslambda{x}{E_i[d(x)/x]}$.
We apply the \textsc{Tuple} rule, and by IH, we have
\begin{align*}
\bounded{\vdfunc}{\decomp_1', \ldots, \decomp_n'} &\decomptoexpr \abslambda{x}{(E_1[d(x)/x], \ldots, E_n[d(x)/x])} \\
&= \abslambda{x}{(E_1, \ldots, E_n)[d(x)/x]} \\
&= \abslambda{x}{E[d(x)/x]}.
\end{align*}

\bfpara{Inductive step: \textsc{Tuple-Inductive}.}
Suppose
\(
(x: (\tau_1, \ldots, \tau_n)), \typedestructor, \bounded{\vdfunc}{\decomp_1, \ldots, \decomp_n})
\propagateparamto
\bounded{\vdfunc}{\decomp_1', \ldots, \decomp_n'}.
\)
Assume that for every $i$, $\decomp_i \decomptoexpr E_i$, $\bar{\decomp}_i \decomptoexpr \bar{E}_i$, and $\decomp_i' \decomptoexpr E_i' = \abslambda{x}{\bar{E}_i[\typeselector_i(d(x))/x]}$.
We apply the \textsc{Tuple} rule, and by IH, we have
\begin{align*}
\bounded{\vdfunc}{\decomp_1', \ldots, \decomp_n'}
&\decomptoexpr \abslambda{x}{(\bar{E}_1[\typeselector_1(d(x))/x], \ldots, \bar{E}_n[\typeselector_n(d(x))/x])} \\
&= \abslambda{x}{(E_1[d(x)/x], \ldots, E_n[d(x)/x])} \\
&= \abslambda{x}{(E_1, \ldots, E_n)[d(x)/x]} \\
&= \abslambda{x}{E[d(x)/x]}.
\end{align*}

\bfpara{Inductive step: \textsc{C-Base}.}
Suppose
\(
(x: \typebase, \typedestructor, \unbounded{\typeiter}{\decomp}{\vdpred}{\typedestructor_0})
\propagateparamto
\unbounded{\typeiter}{\decomp'}{\vdpred}{\typedestructor_0}.
\)
Assume that $\decomp \decomptoexpr \hat{E}$ and $\decomp' \decomptoexpr \hat{E}' = \abslambda{x}{\hat{E}[d(x)/x]}$.
Let $E$ be such that $\unbounded{\typeiter}{\decomp}{\vdpred}{\typedestructor_0} \decomptoexpr E$.
We apply the \textsc{C2} rule, and by IH, we have
\begin{align*}
\unbounded{\typeiter}{\decomp'}{\vdpred}{\typedestructor_0}
&\decomptoexpr \abslambda{x}{\abslambda{Y}{\abslambda{\bar{x}}{ \map(\hat{E}'(y, \bar{x}), \filter(\vdpred, d(Y))) }}} \\
&= \abslambda{x}{\abslambda{Y}{\abslambda{\bar{x}}{ \map(\hat{E}[d(x)/x](y, \bar{x}), \filter(\vdpred, d(Y))) }}} \\
&= \abslambda{x}{\abslambda{Y}{\abslambda{\bar{x}}{ \map(\hat{E}(y, \bar{x}), \filter(\vdpred, d(Y)))[d(x)/x] }}} \\
&= \abslambda{x}{E[d(x)/x]}.
\end{align*}

\bfpara{Inductive step: \textsc{C-Ind}.}
Suppose
\(
(X: \typecollection{\type}, \typedestructor, \unboundedapp{\typeiter}{\decomp}{\vdpred}{x \in X})
\propagateparamto
\unbounded{\typeiter}{\decomp'}{\abslambda{x}{\vdpred}}{\typedestructor}.
\)
Assume that $\decomp \decomptoexpr \hat{E}$ and $\decomp' \decomptoexpr \hat{E}' = \abslambda{x}{\hat{E}[d(x)/x]}$.

Let $E$ be such that $\unboundedapp{\typeiter}{\decomp}{\vdpred}{x \in X} \decomptoexpr E$.
Note that
\begin{align*}
E &= \abslambda{\bar{x}}{\map(\hat{E}(\bar{x}), \filter(\abslambda{x}{\vdpred}, X))}.
\end{align*}

We apply the \textsc{C2} rule, and by IH, we have
\begin{align*}
\unbounded{\typeiter}{\decomp'}{\abslambda{x}{\vdpred}}{\typedestructor}
&\decomptoexpr \abslambda{Y}{\abslambda{\bar{x}}{ \map(\hat{E}'(y, \bar{x}), \filter(\abslambda{x}{\vdpred}, d(Y))) }} \\
&= \abslambda{Y}{\abslambda{\bar{x}}{ \map(\hat{E}[d(x)/x](\bar{x}), \filter(\abslambda{x}{\vdpred}, d(Y))) }} \\
&= \abslambda{Y}{\abslambda{\bar{x}}{ \map(\hat{E}(\bar{x}), \filter(\abslambda{x}{\vdpred}, X))[d(Y)/X] }} \\
&= \abslambda{Y}{E[d(Y)/X]}.
\end{align*}

\end{proof}

We now prove \Cref{thm:sound-simplification} using structural induction on the rules in \Cref{fig:vertical_decomp}.
Let $E$ be an arbitrary expression.

\bfpara{Base case: \textsc{BaseType}.}
Assume $E : \typebase$. We have $E \vertdecompto E$, and applying \textsc{Expr} gives $E \decomptoexpr E$. It is straightforward that $E = E$.

\bfpara{Base case: \textsc{Lam-Base}.}
Assume $E$ is a built-in function $f$. Similarly, we have $f \vertdecompto f$, and we apply \textsc{Expr} to get $f \decomptoexpr f$.

\bfpara{Inductive step: \textsc{Tuple}.}
Suppose $E : \typeprod = (E_1, \ldots, E_n) \vertdecompto \bounded{\vdfunc}{\decomp_1, \ldots, \decomp_n}$.
Assume that for every $i$, we have $E_i \vertdecompto \decomp_i$ and $\decomp_i \decomptoexpr E_i'$ where $E_i$ and $E_i'$ are semantically equivalent.
We apply the \textsc{Tuple} rule and get $\bounded{\vdfunc}{\decomp_1, \ldots, \decomp_n} \decomptoexpr (E_1, \ldots, E_n) = E$.

\bfpara{Inductive step: \textsc{Collection}.}
Assume $E = X$, a collection-typed identifier. Then $E \vertdecompto \unboundedapp{\typeiter}{x}{\predtrue}{x \in X}$.
Applying \textsc{C1} yields
\(
\unboundedapp{\typeiter}{x}{\predtrue}{x \in X} \decomptoexpr
\map(\funcid, \filter(\top, X)) = X.
\)

\bfpara{Inductive step: \textsc{Lam-Ind}.}
Suppose $E = \abslambda{x:\type}{e} \vertdecompto \decomp'$ and the premise in \textsc{Lam-Ind} holds.
Assume that IH holds, namely $\decomp \decomptoexpr e'$ where $e$ and $e'$ are semantically equivalent.
By \Cref{lemma:pp-soundness}, $\decomp' \decomptoexpr E'$ is semantically equivalent to $\abslambda{x}{e'[\funcid(x)/x]} = \abslambda{x}{e'} = \abslambda{x}{e}$.

\bfpara{Inductive step: \textsc{Map}.}
Suppose $E = \map(f, e) \vertdecompto \unboundedapp{\typeiter}{\decomp'}{\vdpred}{x \in X}$ and the premise in \textsc{Map} holds. By IH, we apply \textsc{C1} and get
\begin{align*}
    \unboundedapp{\typeiter}{\decomp'}{\vdpred}{x \in X} &\decomptoexpr \map(\abslambda{x}{f(e_v)}, \filter(\abslambda{x}{\vdpred}, X)) \\
    &= \map(f, \map(\abslambda{x}{e_v}, \filter(\abslambda{x}{\vdpred}, X))) \\
    &= \map(f, E).
\end{align*}

\bfpara{Inductive step: \textsc{Filter}.}
Suppose $E = \map(f, e) \vertdecompto \unboundedapp{\footnotesize\typeiter}{\decomp}{\vdpred \land p(e_v)}{x \in X}$ and the premise in \textsc{Filter} holds. By IH, we apply \textsc{C1} and get
\begin{align*}
    \unboundedapp{\footnotesize\typeiter}{\decomp}{\vdpred \land p(e_v)}{x \in X}
    &\decomptoexpr \map(\abslambda{x}{e_v}, \filter(\abslambda{x}{\vdpred \land p(e_v)}, X)) \\
    &= \filter(p, \map(\abslambda{x}{e_v}, \filter(\abslambda{x}{\vdpred}, X))) \\
    &= \filter(p, e).
\end{align*}

\subsection{Proof of \Cref{thm:completeness_of_decomposition}}

Let $h$ be the return value of \textsf{IsHomomorphism}($\dfagg$) where $\dfagg$ is a surjective function from $\mathsf{DF}\langle \tau \rangle$ to set $ \tau_r$.
First, we show that if $h = \bot$, then $\dfagg$ is not a dataframe homomorphism.
\Cref{lst:top-level:refute1,lst:top-level:refute2} are the only places where \textsf{IsHomomorphism} returns $\bot$. We discuss both cases below.
\begin{enumerate}
    \item \Cref{lst:top-level:refute1} returns $\bot$: by \Cref{thm:soundness_completeness_hom_synthesis_rules}, $\dfagg$ is not a homomorphism.
    \item \Cref{lst:top-level:refute2} returns $\bot$: by \Cref{thm:unrealizability}, $\dfagg$ is not a homomorphism.
\end{enumerate}

Next, we show that if $h \neq \bot$, then $\dfagg$ is a dataframe homomorphism with merge operator $h$.
Similarly, \Cref{lst:top-level:step2,lst:top-level:step4,lst:top-level:step5} are the places where \textsf{IsHomomorphism} returns $h \neq \bot$. We discuss these cases below.

\begin{enumerate}
    \item \Cref{lst:top-level:step2,lst:top-level:step4} return $h$: by \Cref{thm:norm-rules-soundness}, $h$ is the merge operator of $\dfagg$ and $\dfagg$ is a dataframe homomorphism.
    \item \Cref{lst:top-level:step5} returns $h$: by \Cref{thm:sound-simplification}, the decomposed expression $\decomp$ on \cref{lst:top-level:decompose} is semantically equivalent to $\udaf$. Then, by \Cref{thm:norm-rules-soundness}, $h$ is the merge operator of $\dfagg$ and $\dfagg$ is a dataframe homomorphism. 
\end{enumerate}

\section{SyGuS Encoding for Leaf-Level Synthesis}\label{appendix:sygus}

This section provides a detailed description of our approach to encoding the leaf-level synthesis problems for the SyGuS solver, CVC5. These problems are generated by the{\sc Norm-Synth} rule in \Cref{fig:witness_construction}.

\subsection{The Synthesis Task}

For each leaf-level synthesis problem, the goal is to synthesize a normalizer function, $h$, that takes two accumulator states as input and returns a new, merged state. The specification for $h$ is derived directly from the theoretical requirements of a normalizer, encoded as two primary constraints for the solver: $\Phi_1$, the initializer side condition, and $\Phi_2$,  the commutativity condition specified in \Cref{def:commute}.

\subsection{Grammar for Primitive Types}

The foundation of the SyGuS grammar consists of productions for primitive types like integers, booleans, and strings.

\bfpara{Terminals.} The grammar's terminals include the input variables for the normalizer function (representing the two accumulator states), common constants such as \texttt{0}, \texttt{1}, \texttt{true}, \texttt{false}, and any other constants extracted from the UDAF's body. To aid in reasoning about aggregations involving \texttt{min} or \texttt{max}, we also include symbolic constants that represent boundary values, such as the minimum and maximum possible values for a given numeric type.

\bfpara{Operations.} The grammar includes standard unary and binary operations for primitive types.

\begin{enumerate}
    \item For integers, this includes arithmetic operators (\texttt{+}, \texttt{-}, \texttt{*}, \texttt{/}) and comparisons (\texttt{<}, \texttt{>}, \texttt{==}).
    \item For booleans, it includes logical operators (\texttt{and}, \texttt{or}, \texttt{not}).
    \item For strings, the grammar incorporates primitive operators from CVC5's theory of strings, such as \texttt{str.++} (concatenation) and \texttt{str.len}.
\end{enumerate}

The grammar is also supplied with the \texttt{ITE} operator to allow for the synthesis of conditional expressions.

\subsection{Grammar for Collection Types}

A key aspect of our encoding is the handling of complex data structures. Our strategy was determined empirically: we found that mapping our DSL's collections to the native types in CVC5's extended theories consistently outperformed alternative encodings, such as using a general theory of Abstract Data Types.

\bfpara{Lists, Sets, and Tuples.}
Following our empirical findings, we encode lists as CVC5 sequences, and sets and tuples using their corresponding native solver types. This approach allows us to leverage CVC5’s powerful, built-in theories for these structures. The grammar is populated with the solver's native operations, such as \texttt{seq.concat} and \texttt{seq.len} for lists, \texttt{set.union} and \texttt{set.insert} for sets, and \texttt{tuple.select} for tuples.

\bfpara{Maps.}
In contrast, map primitives are not natively supported by the solver. To handle them, we model maps as a set of key-value pairs (e.g., \texttt{(Set (Tuple Key Value))}). We then provide a custom library of primitive map operations, implemented as recursive functions in the SyGuS format. This library includes essential functions like \texttt{map.access}, \texttt{map.update}, \texttt{map.contains\_key}, and \texttt{map.map\_values}. This approach allows us to reason about maps while staying within the well-supported theory of sets.

\bfpara{Combined Operations.}
The grammar also supports more complex, compositional patterns. For example, it provides the ability to \textit{zip} two lists into a list of pairs, which can then be transformed using a \texttt{map} operation, a common pattern in UDAF merge logic.

\subsection{Grammar Ordering and Solver Heuristics}

The sensitivity to the order of non-terminals arises because of incomplete heuristics used by the underlying SyGuS solver (CVC5). Different orderings of grammar non-terminals can affect the solver's internal search paths, even though they don't alter the problem's semantics or solution space.

In our implementation, we chose a simple alphabetical ordering for grammar non-terminals rather than tuning it for specific benchmarks. Our primary goal was to emphasize a principled decomposition strategy, making leaf-level synthesis tasks simpler and more robust, rather than relying on solver-specific heuristics. Optimizing grammar ordering on a per-benchmark basis would have introduced undesirable complexity and potentially reduced the generality of our solution.

\end{document}